\newcommand{\Rmnum}[1]{\expandafter\@slowromancap\romannumeral #1@}
\newcommand{\bm}[1]{\mbox{\boldmath{$#1$}}}
\def\W{{\mathbf W}}
\def\D{{\mathbf D}}
\def\C{{\mathbf C}}
\newtheorem{myTheo}{Theorem}
\newtheorem{myDef}{Definition}
\newtheorem{myPro}{Proposition}
\newtheorem{myCor}{Corollary}
\newtheorem{lem}{Lemma}
\begin{document}

\title{Deep NMF Topic Modeling}

\author{Jian-Yu Wang and Xiao-Lei Zhang
        \IEEEcompsocitemizethanks{\IEEEcompsocthanksitem
        Both authors are with the Center for Intelligent Acoustics and Immersive Communications, School of Marine Science and Technology, Northwestern Polytechnical University, Xi'an, China and the Research \& Development Institute of Northwestern Polytechnical University in Shenzhen, Shenzhen, China. E-mail: alexwang96@mail.nwpu.edu.cn, xiaolei.zhang@nwpu.edu.cn.}
\thanks{Manuscript received April 19, 2005; revised August 26, 2015.}
}

% The paper headers
\markboth{Journal of \LaTeX\ Class Files,~Vol.~14, No.~8, August~2015}%
{Shell \MakeLowercase{\textit{et al.}}: Bare Demo of IEEEtran.cls for Computer Society Journals}

\IEEEtitleabstractindextext{%
\begin{abstract}

Nonnegative matrix factorization (NMF) based topic modeling methods do not rely on model- or data-assumptions much. However, they are usually formulated as difficult optimization problems, which may suffer from bad local minima and high computational complexity. In this paper, we propose a deep NMF (DNMF) topic modeling framework to alleviate the aforementioned problems. It first applies an unsupervised deep learning method to learn latent hierarchical structures of documents, under the assumption that if we could learn a good representation of documents by, e.g. a deep model, then the topic word discovery problem can be boosted. Then, it takes the output of the deep model to constrain a topic-document distribution for the discovery of the discriminant topic words, which not only improves the efficacy but also reduces the computational complexity over conventional unsupervised NMF methods. We constrain the topic-document distribution in three ways, which takes the advantages of the three major sub-categories of NMF---basic NMF, structured NMF, and constrained NMF respectively. To overcome the weaknesses of deep neural networks in unsupervised topic modeling, we adopt a non-neural-network deep model---multilayer bootstrap network.
To our knowledge, this is the first time that a deep NMF model is used for unsupervised topic modeling. We have compared the proposed method with a number of representative references covering major branches of topic modeling on a variety of real-world text corpora. Experimental results illustrate the effectiveness of the proposed method under various evaluation metrics.

\end{abstract}

% Note that keywords are not normally used for peerreview papers.
\begin{IEEEkeywords}
nonnegative matrix factorization, topic modeling, unsupervised deep learning.
\end{IEEEkeywords}}

% make the title area
\maketitle

\IEEEdisplaynontitleabstractindextext

\IEEEpeerreviewmaketitle

\IEEEraisesectionheading{\section{Introduction}\label{sec:introduction}}

\IEEEPARstart{T}{opic} modeling extracts salient features and discovers structural information from a large collection of documents \cite{blei2003latent}.
%{\color{red}{
This paper focuses on discussing the nonnative matrix factorization (NMF) based topic modeling \cite{choo2013utopian,gillis2013fast,kumar2013fast,gillis2014successive,fu2015self,8666058,li2016graph}. NMF topic modeling usually decomposes the document-word representation of documents into a topic-document matrix and a word-topic matrix. Existing decomposition methods usually have the following two major problems. First, it is challenging to discover common patterns or topics in the documents and organize them into hierarchy \cite{NIPS2003_2466,chien2015hierarchical}. Second, the topic-word distribution do not meet human interpretation of documents \cite{ramage2011partially,doshi2017towards}. For example, traditional topic modeling may lose smaller subject codes, i.e. sub-topics, in the tails of large topics, which leads to the inability of describing topic dimensions in terms of the human interpretable objects of topics, and simultaneously loses all latent sub-structure within each topic \cite{ramage2011partially}.
Deep learning, which learns hierarchical data representations, provide one solution to the aforementioned problems. However, existing deep learning methods for topic modeling are mostly supervised, and fall into the category of probabilistic topic models \cite{7258387,7869412}. To our knowledge, unsupervised deep NMF topic modeling seems unexplored yet, due to maybe the high computational complexity of deep unsupervised NMF \cite{trigeorgis2016deep,ZONG201774} as well as the lack of supervised information of data.

\subsection{Contributions}

In this paper, we aim to explore an unsupervised deep NMF (DNMF) framework to address the above challenges. Because modeling topic hierarchies of documents and discovering topic words simultaneously is a complicated optimization problem, we propose to solve the two problems in sequence, under the assumption that, if the representation of documents is good enough, then the overall performance can be boosted \cite{xie2013integrating}. The proposed method contains the following novelties:

\begin{itemize}
  \item An unsupervised deep NMF framework is proposed. It first learns the topic hierarchies of documents by an unsupervised deep model, whose output is used to constrain the topic-document matrix. Then, it produces a good solution to the topic-document matrix and word-topic matrix by NMF under the constraint.
      It can have many implementations by incorporating different NMF methods and deep models. Unlike conventional NMF topic modeling methods that make predefined assumptions, DNMF alleviates the weaknesses of NMF, e.g. non-unique factorization, by deep learning. To our knowledge, this is the first work of unsupervised deep NMF for topic modeling.

  \item Three implementations of DNMF that reach the state-of-the-art performance are proposed.
      The three algorithms fall into the three major subclasses of NMF technologies \cite{wang2012nonnegative}, denoted as basic DNMF (bDNMF), strutured DNMF (sDNMF), and constrained DNMF (cDNMF) respectively. Specifically, bDNMF takes the output of the deep model as the topic-document matrix directly to generate the word-topic distribution. sDNMF takes the output of the deep model as the intrinsic geometry of the topic-document distribution, which is used to mask the topic-document matrix. cDNMF takes the output of the deep model as a regularization of the topic-document distribution. The convergence of the proposed algorithms is theoretically proved.

  \item Because the representation of documents in topic modeling is usually sparse and high-dimensional, existing deep neural networks can easily overfit to the documents. Although some methods reduce the dimension of the documents by discarding low-frequency words, their performance suffers from the compromise \cite{xie2016unsupervised}. To address the problem, this paper applies multilayer bootstrap networks (MBN) to learn the topic hierarchies of documents. MBN contains three simple operators---random resampling, stacking, and one-nearest-neighbor optimization. To our knowledge, this is the first time that a non-neural-network unsupervised deep model is applied to topic modeling, which outperforms conventional shallow topic modeling methods significantly.
\end{itemize}
%We have compared the XXX. Experimental results show that...
We have compared the proposed DNMF variants with 9 representative topic modeling methods  \cite{papadimitriou2000latent,blei2003latent,cai2008modeling, cai2009probabilistic,gillis2013fast,gillis2014successive,kumar2013fast ,fu2018anchor,henao2015deep} covering probabilistic topic models \cite{papadimitriou2000latent,blei2003latent,cai2008modeling
,cai2009probabilistic}, NMF methods  \cite{gillis2013fast,gillis2014successive,kumar2013fast,fu2018anchor}, and deep topic models  \cite{henao2015deep}.
Empirical results on the 20-newsgroups, topic detection and tracking database version 2 (TDT2), and Reuters-21578 corpora illustrate the effectiveness of DNMF in terms of three evaluation metrics. Moreover, the hyperparameters of the DNMF variants have stable working ranges across all situations, which facilitates their practical use.

In this paper, we first introduce some related work and preliminaries in the following two subsections, then present the proposed DNMF framework and its three implementations in Section \ref{Algorithm}. Section \ref{Exp} presents the experimental results. Finally, Section \ref{Conclusion} concludes our findings.

\subsection{Related work}
\label{Background}

\textbf{\textit{Probabilistic topic modeling}}: Topic models were originally formulated as unsupervised probabilistic models \cite{blei2003latent,teh2005sharing,li2017supervised,cai2008modeling}. A seminal work of probabilistic topic models is latent Dirichlet allocation (LDA) \cite{blei2003latent}.
It models a document as a multinomial distribution over latent semantic topics, and models a topic itself as a multinomial distribution over words.
The document-dependent topic embedding, governed by a Dirichlet prior, is estimated in an unsupervised way and then adopted as the low-dimensional feature for document classification and indexing.
Later on, hierarchical tree-structured priors such as nested Dirichlet processing \cite{teh2005sharing,paisley2014nested} or nested Chinese restaurant process \cite{blei2010nested,paisley2014nested} were applied to discover the hierarchy of topics and capture the nonlinearity of documents.
However, the hierarchical probabilistic models suffer from conceptual and practical problems.
For example, their optimization problem is NP-hard in the worst case due to the intractability of the posterior inference \cite{arora2012learning}.
Existing methods have to resort to approximate inference methods, such as variational Bayes and Gibbs sampling which is also difficult to carry out \cite{chien2017deep}.
Besides, because the exact inference is intractable, the models can never make predictions for words that are sharper than the distributions predicted by any of the individual topics.
As a result, the hypothesis of probability distributions are unable to be applied to all text corpora \cite{hinton2009replicated}. Moreover, there is a lack of justification of the Bayesian priors as well \cite{gerlach2018network}.
Recently, a geometric Dirichlet means algorithm \cite{yurochkin2016geometric}, which builds upon a weighted $k$-means clustering procedure and is augmented with a geometric correction, overcomes the computational and statistical inefficiencies encountered by probabilistic topic models based on Gibbs sampling and variational inference.
However, the learned topic polytope is largely influenced by the performance of the clustering algorithm.

\textbf{\textit{Deep probabilistic topic modeling}}:
Another solution to the optimization difficulty of the hierarchical probabilistic models is to integrate the perspectives of the probabilistic models and deep neural networks. The integrated methods, named deep neural topic models, introduce neural network based priors as alternatives to Dirichlet process based priors \cite{larochelle2012neural,zheng2015deep,gan2015scalable,ranganath2015deep}. This integrates the powerfulness of neural network architecture into the inference of the probabilistic graph models, which makes the models not only interpretable but also powerful and easily extendable. However, they still fail to consider the veracity of the Bayesian hypothesis. The problem of component collapsing may also lead to bad local optima of the inference network in which all topics are identical.

\textbf{\textit{NMF topic modeling}}:
To deal with the optimization difficulty of the hierarchical probabilistic models, a large effort has been paid on polynomial time solvable topic modeling
algorithms, many of which are formulated as separable nonnegative matrix factorization (NMF) methods \cite{choo2013utopian,gillis2013fast,kumar2013fast,gillis2014successive,fu2015self,8666058,li2016graph}. They find the underlying parameters of topic models by decomposing the document-word data matrix into a weighted combination of a set of topic distributions \cite{ren2017spectral}. A key problem in the context of NMF research is the separability issue, i.e., whether the matrix factors are unique \cite{donoho2004does}. When one applies NMF to topic modeling, the separability assumption is equivalent to an anchor-word assumption which assumes that every topic has a characteristic anchor word that does not appear in the other topics \cite{gillis2013fast,kumar2013fast,gillis2014successive,fu2015self}.
However, because words and terms have multiple uses, the anchor word assumption may not always hold. How to avoid the unrealistic assumption is a key research topic. One solution explores tensor factorization models with three- or higher-order word co-occurence statistics. However, such statistics need many more samples than lower-order statistics to obtain reliable estimates, and separability still relies on additional assumptions \cite{fu2018anchor}, such as consecutive words being persistently drawn from the same topic. Another recent solution is anchor-free correlated topic modeling (AnchorFree) with second-order co-occurrence statistics. However, an assumption called sufficiently scattered condition is still needed to be made, though the assumption is much milder than the anchor-word assumption. Besides the problem of making additional assumptions to the data, NMF is also formulated as a shallow learning method with no more than one nonlinear layer, which may not capture the nonlinearity of documents and the hierarchy of topics well.

\textbf{\textit{Deep NMF methods}}:
The aforementioned NMF topic models are all shallow models, which is not powerful enough to grasp the nonlinearity of documents. In the NMF research community, a lot of efforts have been paid on the multilayered NMF algorithms with applications to image processing \cite{8943941,li2018deep}, speech separation \cite{7177933,8170034}, community detection \cite{ye2018deep}, etc. The basic idea is to factorize a matrix into multiple factors, where the factorization can be either linear or nonlinear. If the factorization is nonlinear, then the method is called a deep NMF. For example, deep semi-NMF \cite{trigeorgis2016deep} factorizes the basis matrix into multiple factors with the optimization criterion of minimum reconstruction error, where it does not require the factorized weight matrix to be nonnegative anymore.
Deep nonnegative basis matrix factorization \cite{ZONG201774} conducts deep factorization to the coefficient matrix with different regularization constraints on the basis matrix. However, because the bag-of-words representation of documents is high-dimensional and sparse, the application of the aforementioned idea to topic modeling is computationally high and may also suffer from overfitting. To our knowledge, no deep NMF topic modeling methods have been proposed yet.

\textbf{\textit{Unsupervised deep learning for document clustering}}:
Document clustering and topic modeling are two closely related tasks \cite{xie2013integrating}.
Unsupervised topic modeling projects documents into a topic embedding space, which promotes the development of document clustering.
Recently, many works focused on learning the representations and topic assignments of documents simultaneously by deep neural networks \cite{xie2016unsupervised,yang2017towards,MORADIFARD2020185,jabi2019deep}.
However, current deep learning methods for document clustering do not show advances over the shallow learning methods, such as NMF-based topic modeling. We conjecture that existing methods may not be good at dealing with sparse and high-dimensional representations of documents. As a compromise, they reduce the dimension of the sparse data by discarding the low-frequency words, which may significantly lose useful information. To deal with the aforementioned problems, here we develop deep models that are able to outperform conventional shallow models without discarding the low-frequency words. Note that, although some deep learning based topic models apply word embeddings to deep topic models \cite{cao2015novel,pmlr-v80-zhao18a}, it may not be unsuitable to compare them with the conventional topic modeling methods that work with the term frequency-inverse document frequency (TF-IDF) statistics.

\subsection{Preliminaries}

\subsubsection{Notations}
We first introduce some notations here. Regular letters, e.g. $\delta$, $M$, $t$, and $0$, indicate scalars. Bold lower-case letters, e.g. $\mathbf{d}$, indicate vectors. Bold capital letters, e.g. $\mathbf{D}$, $\mathbf{C}$, and $\mathbf{W}$, indicate matrices. The bold digit $\mathbf{0}$ indicates an all-zero vector or matrix. The operator $^T$ denotes the transpose. The notation $[\mathbf{C}]_{ij}$ indicates the elements of the matrix $\mathbf{C}$ at the $i$th column and $j$th row.
The operator $\odot$ is the Hadamard multiplication. The operator $\mbox{Tr}(\cdot)$ denotes the trace of matrices.

\subsubsection{Background}
In topic modeling, given a corpus of $N$ documents with $K$ topics and a vocabulary of $V$ words, denoted as $\{\mathbf{d}_{n}\}_{n=1}^N$ where $\mathbf{d}_n = [d_{n,1},\ldots,d_{n,V}]^T$ with $d_{n,v}$ as the frequency of the $v$th word in the vocabulary that appears in the $n$th document.
we aim to learn a topic-document matrix $\mathbf{W}=[w_{k,n}]\in \mathbb{R}_{+}^{K \times N}$ and a word-topic matrix $\mathbf{C}=[c_{v,k}]\in \mathbb{R}_{+}^{V \times K}$ from the document-word matrix $\mathbf{D}=[\mathbf{d}_{1},\ldots,\mathbf{d}_N]\in \mathbb{R}_{+}^{V \times N}$, where the notation $k\leq K$ is the topic index, $w_{k,n}$ is the topic label which describes the probability of the $n$th document belonging to the $k$th topic, and $c_{v,k}$ is the probability of the $v$th vocabulary that appears in the $k$th topic.
The task of topic modeling is to find an approximate factorization:
\begin{equation}\label{eq_phydic}
\mathbf{D} \approx \mathbf{C}\mathbf{W}
\end{equation}

NMF measures the distance between $\mathbf{D}$ and $\mathbf{C}\mathbf{W}$ by the squared Frobenius norm, and formulates the topic modeling problem as the following optimization problem:
\begin{equation}\label{eq4}
  (\mathbf{C},\mathbf{W}) = \arg \min_{\mathbf{C}\geq \mathbf{0};\mathbf{W}\geq \mathbf{0}}\parallel{\mathbf{D}-\mathbf{C}\mathbf{W}}\parallel^{2}_{F}
\end{equation}
where the nonnegative constraints make the solution interpretable.
Under the anchor-word assumption, the word distribution $\mathbf{C}$ is enforced to be a block diagonal matrix, which guarantees a consistent solution \cite{arora2012learning,arora2013practical}.
However, the anchor-word assumption is fragile in practice. Recently, many methods have been proposed to overcome this assumption \cite{fu2018anchor,DBLP:conf/aistats/JangH19}.

\section{Deep NMF topic model}
\label{Algorithm}

In this section, we first present the DNMF topic modeling framework in Section \ref{subsec:obj}, then implements three DNMF topic modeling methods named bDNMF, cDNMF, and sDNMF respectively in Section \ref{subsec:lasso}, and finally introduce the unsupervised deep model in Section \ref{subsec:mbn}.

\subsection{The framework of DNMF topic modeling}\label{subsec:obj}

Traditional NMF topic modeling aims to learn a document representation by linear NMF essentially. In order to capture the manifold structure or topic hierarchies of documents, a natural way is to extend NMF into a deep NMF framework. Here we propose a DNMF framework which constrains the topic-document matrix by an unsupervised deep model with multiple layers of nonlinear transforms:
\begin{equation}\label{eq:dtm}
  \begin{array}{c}
\mathbf{D} \approx \mathbf{C}\W \\
\mbox { subject to }  \quad g(\W| f(\mathbf{D})) \geq 0, \mbox{ }\mathbf{C}\geq 0,\mbox{ } \W\geq 0
  \end{array}
\end{equation}
where $f(\cdot)$ is the unsupervised deep model and $g(\cdot)$ is a discriminator used to constrain $\W$ by $f(\cdot)$. $f(\cdot)$ performs like a prior that constrains the solution of $\W$ and $\mathbf{C}$ to be interpretable and discriminant, which is the fundamental difference between DNMF and conventional NMF topic models.
 The framework is illustrated in Fig. \ref{fig1}. It minimizes the reconstruction error between $\mathbf{D}$ and $\mathbf{C}\mathbf{W}$ in terms of the squared Frobenius norm.

 A direct thought to solve problem \eqref{eq:dtm} is to optimize $f(\D)$, $\mathbf{W}$, and $\mathbf{C}$ alternatively until convergence. However, it is too costly to train a deep model in a single iteration. In practice, we take the following optimization algorithm to solve problem \eqref{eq:dtm}:
  \begin{itemize}
 \item Pretrain $f(\D)$ first by an unsupervised deep model.
 \item Optimize $\W$ and $\mathbf{C}$ alternatively with $f(\mathbf{D})$ fixed until convergence.
 \end{itemize}
 The effectiveness of the above algorithm relies on the assumption that, if a high-quality $f(\mathbf{D})$ is obtained as a prior, then the solution of $\mathbf{C}$ and $\W$ is also boosted.

 \begin{figure}[tp]
  \centering
  % Requires \usepackage{graphicx}
  %\includegraphics[width = 6.5cm]{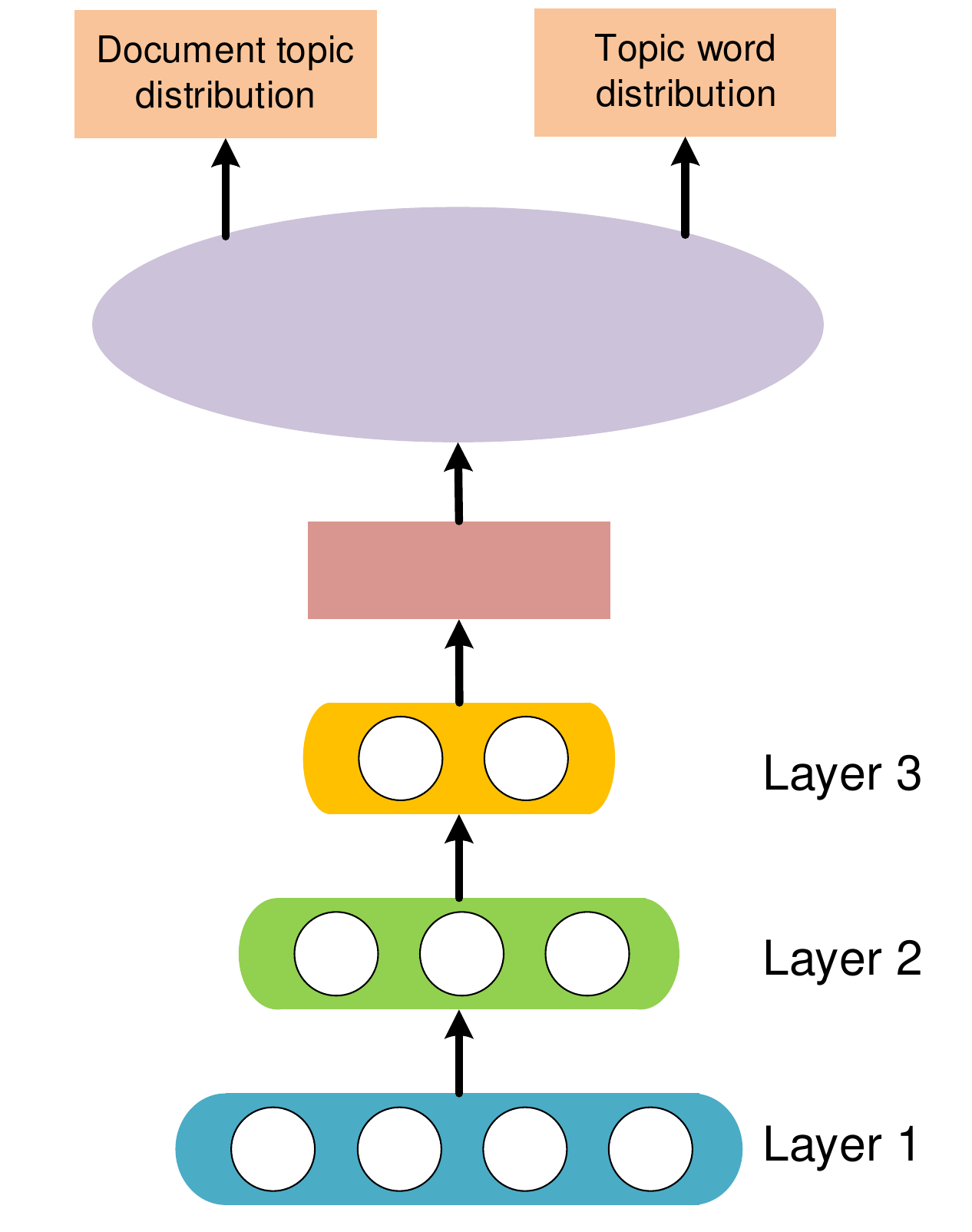}\\
   \begin{overpic}[width = 6.5cm]{DNMF1.pdf}
   \put(12,83){\normalsize $\textcolor{black}{\W}$}
    \put(58,83){\normalsize $\textcolor{black}{\mathbf{C}}$}
    \put(22,73){\small $\textcolor{black}{\min\limits_{\mathbf{C} \geq 0}  \frac{1}{2}\| \mathbf{D} - \mathbf{C}\W\|_F^2}$}
    \put(26.5,52.5){\normalsize $\textcolor{black}{g(\W|f(\mathbf{D}))}$}
    \put(40,45){\normalsize $\textcolor{black}{f(\mathbf{D})}$}
%    \put(45,34){\normalsize $\textcolor{black}{f(f(\mathbf{D}^+))}$}
%    \put(45,16){\normalsize $\textcolor{black}{f(\mathbf{D}^+)}$}
   \put(40,-2){\normalsize $\textcolor{black}{\mathbf{D}}$}
   \end{overpic}
  \vspace{-0.1cm}
  \caption{The proposed DNMF framework.}\label{fig1}
 \end{figure}

The difference between the proposed topic modeling method and existing deep NMF methods \cite{trigeorgis2016deep,ZONG201774} is that the proposed method takes the deep model $f(\cdot)$ as an additional constraint of $\W$, while the methods in \cite{trigeorgis2016deep,ZONG201774} decomposes $\W$ directly into a hierarchical network. It is easy to see that the proposed framework can employ various unsupervised deep models to bring additional information into the matrix decomposition problem for specific applications.
 It is easy to constrain $\W$ flexibly as we will do in Section \ref{subsec:lasso}, which brings advanced NMF methods into the proposed framework.
 It also can either employ a pretrained deep model or conduct joint optimization of the deep model and the matrix decomposition. On the contrary, although \cite{trigeorgis2016deep,ZONG201774} can be applied to topic modeling, the computational complexity of their multilayered matrix decomposition is too high to be applied to topic modeling in practice. To our knowledge, they were not applied to topic modeling yet.

\subsection{DNMF implementations}\label{subsec:lasso}

In this subsection, we first introduce three DNMF implementations that extend the three sub-categories of NMF \cite{wang2012nonnegative} to their deep versions respectively, and then discuss the connection between the three implementations. Note that, besides the novelty of the DNMF framework, cDNMF and sDNMF are also fundamentally new even without the deep model $f(\D)$.

%\subsubsection{Geometric interpretation of DNMF topic modeling}

\subsubsection{Basic DNMF topic modeling}

\begin{algorithm}[tp]
\label{alg:bDNMF}
\small
%  \SetAlgoNoLine   % 不要算法中的竖线
  \SetKwInOut{Input}{\textbf{Input}}   % 替换关键词
  \SetKwInOut{Output}{\textbf{Output}}
  \caption{ bDNMF. }
  \Input{
         Text corpus $\mathbf{D}$, the number of topics $T$, hyperparameters $\delta \geq  0$ and  $M \geq 0$. \\
        }
  \Output{$\mathbf{C}^{(t)}$, $\W$.}
    \BlankLine
    Initialize: topic-word distribution $\mathbf{C}^0$, $ t \leftarrow 0 $;\\
    Construct a document-topic distribution $f(\mathbf{D})$ by deep unsupervised learning methods; \\
    $\W\leftarrow f(\D)$;\\
    \Repeat{\text{convergence}}
        {
          Calculate $\mathbf{C}^{(t)}$ by \eqref{eq:MU_SED}; \\
          $ t \leftarrow t+1 $;\\
          }
\end{algorithm}

Many NMF topic modeling methods introduce polytope to interpret the geometry of documents \cite{yurochkin2019scalable,fu2019nonnegative}, i.e. $[\mathbf{D}]_{ij} = \sum_{k=1}^K [\mathbf{C}]_{ik}[\mathbf{W}]_{kj}$. A standard NMF topic modeling can always find an infinite solutions of $\C$ and $\W$ that satisfy $\D \approx \C\W$. To prevent such infinite solutions, various constraints have to be added. One of the simplest constraint is to provide one of the two factors beforehand, e.g. $\W$. However, it seems not easy to find a satisfied $\W$ beforehand in history. Fortunately, deep learning provides such an opportunity. We conjecture bravely that, if a good topic-document matrix $\W$ could be learned beforehand by deep learning, then the problem of finding the other factor $\C$ can be greatly simplified, which motivates bDNMF.

Given a latent document topic proportions $f(\mathbf{D})$ from a deep model, bDNMF interprets the documents by
\begin{equation}\label{eq:Topic_model}
[\mathbf{D}]_{ij} = \sum_{k=1}^K [\mathbf{C}]_{ik}[f(\mathbf{D})]_{kj} \mbox{ for } i=1,\dots,V; m = 1,\dots,N.
\end{equation}
It is a special case of the framework in Fig. \ref{fig1} where $g(\cdot)$ is simply defined as $\mathbf{W}-f(\D) = \mathbf{0}$. Solving the factorization \eqref{eq:Topic_model} in the NMF framework results in the following optimization problem:
\begin{equation}\label{eq:SED}
\begin{split}
 \min\limits_{\mathbf{C} \geq \mathbf{0}, f(\cdot)} D_{F}[\mathbf{D}||\mathbf{C}f(\mathbf{D})]
\end{split}
\end{equation}
where ${D}_F[\mathbf{D}||\mathbf{C}f(\mathbf{D})]$ denotes the Frobenius norm of NMF with $\mathbf{C}f(\mathbf{D})$ being an approximation of $\mathbf{D}$:
\begin{equation}\label{eq:solve_SED}
 D_{{F}}[\mathbf{D}||\mathbf{C}f(\mathbf{D})]  %= \frac{1}{2}|| \mathbf{D} - \mathbf{C}f(\mathbf{D})||^2_F
 = \left\|\mathbf{D}-\mathbf{C}f(\mathbf{D})\right\|^2_{F}
\end{equation}

%As described in \cite{yurochkin2016geometric}, each topic in bDNMF is represented by a point $[\mathbf{C}]_{ik}$ lying in the $V-1$ dimensional probability simplex. Word-topic distribution can be viewed as a convex hull of the K topics $[\mathbf{C}]_{\cdot k}$, and each document corresponds to a point $[\mathbf{D}]_{\cdot j}$ lying inside the topic polytope.

%bDNMF aims to find an approximate factorization $\mathbf{D} \approx \mathbf{C}f(\mathbf{D})$.
%In words, we aim to find a convex topic polytope, which is closest to words of the observed documents.
%Since we obtained the document distributions over the topics from deep model, the difficulty is how to correct the vertices of the topic polytope.
%As Fig. \ref{fig1}, we adopted the DNMF framework to solve the above difficulties.
%}}

We solve bDNMF in two steps. First, we generate the sparse representation of documents $f(\D)$ by a deep model. Then, problem \eqref{eq:SED} is formulated as a nonnegative least squares optimization problem, which can be solved by gradient descent algorithms or multiplicative update rules \cite{lee2001algorithms}. Here we prefer multiplicative update rules, since they do not have tunable hyperparameters. As we can see, when $f(\D)$ is given, problem \eqref{eq:SED} satisfies the following first-order Karush-Kuhn-Tucker (KKT) optimality conditions:
\begin{eqnarray}
\left\{\begin{array}{ll}
\mathbf{C} \geq \mathbf{0} \vspace{1ex} \\
\frac{\partial D_F(\mathbf{D}||\mathbf{C}f(\mathbf{D}))}{\partial \mathbf{C}}\geq \mathbf{0} \vspace{1ex} \\
\mathbf{C}\odot\frac{\partial D_F(\mathbf{D}||\mathbf{C}f(\mathbf{D}))}{\partial \mathbf{C}}=\mathbf{0}
 \end{array}
 \right.\label{eq:KKT_SED}
 \end{eqnarray}
 which guarantees that the solution of \eqref{eq:SED} converges to a stationary point.

 The multiplicative update rules are described as follows. Let $\bm\Psi$ be the Lagrange multiplier of the constraint $\mathbf{C} \geq \mathbf{0}$, the Lagrangian $\mathcal{J}$ for \eqref{eq:SED} is
\begin{equation}\label{eq:Lagrange_SED}
\begin{split}
\mathcal{J} = & \mbox{Tr}(\mathbf{D}\mathbf{D}^{T}) - 2\mbox{Tr}(\mathbf{D}f(\mathbf{D})^T{\mathbf{C}}^{T}) \\
& + \mbox{Tr}(\mathbf{C}f(\mathbf{D})f(\mathbf{D})^T{\mathbf{C}}^{T}) + \mbox{Tr}(\mathbf{C}\bm\Psi)
\end{split}
\end{equation}
The partial derivative of $\mathcal{J}$ with respect to $\mathbf{C}$ is
\begin{equation}\label{eq:dif_Lagrange_SED}
\begin{split}
\frac{\partial \mathcal{J}}{\partial \mathbf{C}} = -2\mathbf{D}f(\mathbf{D})^{T} + 2\mathbf{C}f(\mathbf{D})f(\mathbf{D})^T + \bm\Psi
\end{split}
\end{equation}
By the KKT condition $\mathbf{C}\odot\bm\Psi = \mathbf{0}$,
we derive
\begin{equation}\label{eq:dif_Lagrange_SED}
2\mathbf{C}\odot(\mathbf{C}f(\mathbf{D})f(\mathbf{D})^T) - 2\mathbf{C}\odot(\mathbf{D}f(\mathbf{D})^{T}) + \mathbf{C}\odot\bm\Psi = \mathbf{0}
\end{equation}
Therefore, the multiplicative update rules for $\mathbf{C}$ can be inferred as follows:
\begin{equation}\label{eq:MU_SED}
\begin{split}
[\mathbf{C}]^{(t+1)}_{ij} \leftarrow [\mathbf{C}]^{(t)}_{ij}\frac{[\mathbf{D}f(\mathbf{D})^T]_{ij}} {[\mathbf{C}f(\mathbf{D})f(\mathbf{D})^T]_{ij}}
\end{split}
\end{equation}
where the superscript $^{(t)}$ denotes the $t$th iteration of the multiplicative update rules.

bDNMF is summarized in Algorithm \ref{alg:bDNMF}. It implements $g(\mathbf{W}, f(\mathbf{D}))$ by simply setting $\mathbf{W}= f(\D)$. The main merit of bDNMF is that it can easily get the global optimum solution of $\C$ given $\W$ fixed, which avoids the non-unique solution of the NMF topic modeling in a simple way. Its effectiveness is largely affected by $f(\D)$. In practice, we implement $f(\mathbf{D})$ as semantic topic labels, which is obtained by the deep-learning-based document clustering.

%The geometric interpretation of bDNMF is also clear.
%The word-topic matrix $\mathbf{C}$ can be represented as points on the $(V-1)$-dimensional simplex of all possible decomposition factors. Each document $\mathbf{d}$ corresponds to a point lying inside the polytope \cite{10.1145/312624.312649}.
%$f(\mathbf{D})$, which is the topic assignment of the documents, can be regarded as the weight assignment of the vertices in the topic polytope.

\subsubsection{Structured DNMF topic modeling}

Although bDNMF is simple, it reduces NMF with only one variable when $f(\D)$ is given, which limits the flexibility of $\C$. To solve the problem, sDNMF modifies the regular factorization formulation \eqref{eq4} by a new discriminator $\W = f(\D)\odot \mathbf{T}$ instead of taking $\W = f(\D)$ where $\mathbf{T}$ is a new variable. Its objective function is formulated as follows:
\begin{equation}\label{eq:cDNMF}
\begin{split}
 \min\limits_{\mathbf{C} \geq \mathbf{0},\mathbf{T}\geq\mathbf{0},  f(\cdot)} D_F(\mathbf{D}\| \mathbf{C}(f(\mathbf{D})\odot\mathbf{T})) \\
  =\min\limits_{\mathbf{C} \geq \mathbf{0},\mathbf{T}\geq\mathbf{0},  f(\cdot)} \| \mathbf{D} - \mathbf{C} (f(\mathbf{D})\odot\mathbf{T}) \|_F^2
\end{split}
\end{equation}

\begin{algorithm}[tp]
\label{alg:sDNMF}
\small
\setlength{\tabcolsep}{0.1mm}{
%  \SetAlgoNoLine   % 不要算法中的竖线
  \SetKwInOut{Input}{\textbf{Input}}   % 替换关键词
  \SetKwInOut{Output}{\textbf{Output}}
  \caption{ sDNMF. }
  \Input{
         Text corpus $\mathbf{D}$,
         the number of topics $T$,
         hyperparameters $\delta \geq  0$ and  $M \geq 0$.\\
        }
    \Output{$\mathbf{C}^{(t)}$, $f(\mathbf{D})$.}
    \BlankLine
    Initialize: topic-word distribution $\mathbf{C}^0$,
          document-topic distribution $\mathbf{T}^0$,
          weight matrix $\mathbf{T}^0$, $ t \leftarrow 0 $;\\
    Construct a document-topic distribution $f(\mathbf{D})$ by deep
unsupervised learning methods; \\
    \Repeat{\text{convergence}}
        {
          Calculate $\mathbf{T}^{(t)}$ by \eqref{eq:MU_cDNMF_W};\\
          Calculate $\mathbf{C}^{(t)}$ by \eqref{eq:MU_cDNMF_C}; \\
          $ t \leftarrow t+1 $; \\
          }
    }
\end{algorithm}

Like bDNMF, we solve sDNMF by first generating $f(\D)$ by a deep model, which formulates problem \eqref{eq:cDNMF} as an alternative least squares optimization problem. As we can see, when $f(\D)$ is given, problem \eqref{eq:cDNMF} satisfies the following first-order KKT optimality conditions:
\begin{eqnarray}
\left\{ \begin{array}{ll}
\mathbf{C} \geq \mathbf{0}, \mathbf{T} \geq \mathbf{0} \vspace{1ex} \\
\frac{\partial{D_F(\mathbf{D}\| \mathbf{C}(f(\mathbf{D})\odot\mathbf{T}))}}{\partial\mathbf{C}} \geq \mathbf{0} \vspace{1ex} \\
\mathbf{C}\odot\frac{\partial{ D_F(\mathbf{D}\| \mathbf{C}(f(\mathbf{D})\odot\mathbf{T}))}}{\partial\mathbf{C}} =\mathbf{0} \vspace{1ex} \\
\frac{\partial{D_F(\mathbf{D}\| \mathbf{C}(f(\mathbf{D})\odot\mathbf{T}))}}{\partial\mathbf{T}} \geq \mathbf{0} \vspace{1ex} \\
\mathbf{T}\odot\frac{\partial{D_F(\mathbf{D}\| \mathbf{C}(f(\mathbf{D})\odot\mathbf{T}))}}{\partial\mathbf{T}} =\mathbf{0} \vspace{1ex} \\
\end{array}\right.\label{eq:KKT_sDNMF}
\end{eqnarray}
 which guarantees that the solution of \eqref{eq:cDNMF} converges to a stationary point.

Let $\mathcal{\mathbf{U}}$ and $\mathcal{\mathbf{V}}$ denote the Lagrange multipliers of $\mathbf{C}$ and $\mathbf{T}$ respectively.
Then, minimizing \eqref{eq:cDNMF} is equivalent to minimizing the Lagrangian $\mathcal{J}$:
\begin{equation}\label{eq:jcDNMF}
\begin{split}
\mathcal{J} = D_F(\mathbf{D}\| \mathbf{C}\mathbf{T}, f(\mathbf{D})) + \mbox{Tr}(\mathcal{\mathbf{U}}\mathbf{C}^T) + \mbox{Tr}(\mathcal{\mathbf{V}}\mathbf{T}^T)
\end{split}
\end{equation}
Taking partial derivatives in \eqref{eq:jcDNMF} derives
\begin{equation}\label{eq:difcDNMF_C}
\begin{split}
\frac{\partial \mathcal{J}}{\partial \mathbf{C}} = & 2\mathbf{C}(f(\mathbf{D})\odot\mathbf{T})^T(f(\mathbf{D})\odot\mathbf{T})\\ & - 2\mathbf{D}^T(f(\mathbf{D})\odot\mathbf{T}) + \mathcal{\mathbf{U}}
\end{split}
\end{equation}
\begin{equation}\label{eq:difcDNMF_W}
\begin{split}
\frac{\partial \mathcal{J}}{\partial \mathbf{T}} = & 2((f(\mathbf{D})\odot\mathbf{T})\mathbf{C}\mathbf{C}^T)\odot f(\mathbf{D}) \\
& - 2(f(\mathbf{D})\odot\mathbf{D}\mathbf{C}^T) + \mathcal{\mathbf{V}}
\end{split}
\end{equation}
Combining with the KKT conditions, we obtain the update rules:
\begin{equation}\label{eq:MU_cDNMF_W}
\begin{split}
[\mathbf{T}]^{(t+1)}_{ij} \leftarrow [\mathbf{T}]^{(t)}_{ij}\frac{[(\mathbf{D}\mathbf{C}^T)\odot f(\mathbf{D})]_{ij}} {[((f(\mathbf{D})\odot\mathbf{T})\mathbf{C}\mathbf{C}^T)\odot f(\mathbf{D})]_{ij}}
\end{split}
\end{equation}
\begin{equation}\label{eq:MU_cDNMF_C}
\begin{split}
[\mathbf{C}]^{(t+1)}_{ij} \leftarrow [\mathbf{C}]^{(t)}_{ij}\frac{[(f(\mathbf{D})\odot\mathbf{T})^T\mathbf{D}]_{ij}} {[(f(\mathbf{D})\odot\mathbf{T})^T(f(\mathbf{D})\odot\mathbf{T})\mathbf{C}]_{ij}}
\end{split}
\end{equation}

sDNMF is summarized in Algorithm \ref{alg:sDNMF}. It promotes the effectiveness of $\C$ by introducing the internal variable $\mathbf{T}$ to bridge the gap between $f(\D)$ and $\C$.

\subsubsection{Constrained DNMF topic modeling}

\begin{algorithm}[tp]
\label{alg:DNMF}
\small
%  \SetAlgoNoLine   % 不要算法中的竖线
  \SetKwInOut{Input}{\textbf{Input}}   % 替换关键词
  \SetKwInOut{Output}{\textbf{Output}}
  \caption{cDNMF. }
  \Input{
         Text corpus $\mathbf{D}$,
          number of topics $T$,
         hyperparameters $\delta \geq  0$, $M \geq 0$, $\lambda_1 \geq 0$, and $\lambda_2 \geq 0$. \\
        }
    \Output{$\mathbf{C}^{(t)}$, $f(\mathbf{D})$.}
    \BlankLine
    Initialize: topic-word distribution $\mathbf{C}^0$,
          document-topic distribution $\W^0$,
          weight matrix $\mathbf{T}^0$, $ t \leftarrow 0 $;\\
    Construct a document-topic distribution $f(\mathbf{D})$ by deep
unsupervised learning methods; \\
    \Repeat{\text{convergence}}
        {
          Calculate $\mathbf{W}^{(t)}$ by \eqref{eq:MU_SED_W};\\
          Calculate $\mathbf{C}^{(t)}$ by \eqref{eq:MU_SED_C}; \\
          Calculate $\mathbf{T}^{(t)}$ by \eqref{eq:MU_SED_T}; \\
          $ t \leftarrow t+1 $; \\
          }

\end{algorithm}

bDNMF and sDNMF intrinsically assumes that each document contains only one topic, which may not be true.
To overcome the weakness of bDNMF and sDNMF, we propose cDNMF which introduces $f(\D)$ as a regularization on $\W$ instead of masking $\W$ by $f(\D)$ directly.

Specifically, we implement the discriminator $g(\W,f(\mathbf{D}))$ as a real-valued regression response of the semantic topic labels $f(\mathbf{D})$:
\begin{equation}\label{eq:mDNMFyyy}
  \begin{array}{c}
  \min\limits_{\mathbf{T}\in\mathbb{R}^{K\times K}, \W\geq\mathbf{0}} \| f(\mathbf{D}) - \mathbf{T}\mathbf{W} \|_F^2
  \end{array}
\end{equation}
where $\mathbf{T}$ denotes a linear transform of $\W$. To further constrains the word-topic matrix $\mathbf{C}$ for highly meaningful topic words, we propose a word-word affinity regularization $\Omega(\mathbf{C})$:
\begin{equation}\label{eq:mDNMFxxx}
  \Omega(\mathbf{C}) = \| \mathbf{C} \mathbf{C}^{T} - \mathbf{D} \mathbf{D}^{T} \|_F^2
\end{equation}
which encodes the word-word semantics from the shared knowledge between the documents into $\mathbf{C}$. To our knowledge, this is the first time that such a regularization is introduced to the NMF topic modeling.

Substituting \eqref{eq:mDNMFyyy} and \eqref{eq:mDNMFxxx} into the DNMF framework derives the objective of cDNMF:
%\begin{equation}\label{eq:cdnmf}
%\begin{split}
  %D_F(\mathbf{D}||\mathbf{C}\W,\mathbf{T},f(\mathbf{D})) = & \| \mathbf{D} - \mathbf{C}\W\|_F^2 + %  \lambda_2 \Omega(\mathbf{C}) \\
%  & \lambda_1 \| f(\mathbf{D}) - \mathbf{T}\W \|_F^2
%\end{split}
%\end{equation}
% \setlength{\arraycolsep}{0.0em}
\begin{equation}\label{eq:mDNMF}
 \min\limits_{\mathbf{T} \geq \mathbf{0},\mathbf{C} \geq \mathbf{0},\mathbf{W} \geq \mathbf{0},f(\cdot)} D_F(\mathbf{D}||\mathbf{C}\W,\mathbf{T},f(\mathbf{D}))
\end{equation}
where
\begin{equation}
\begin{array}{ll}
  D_F(\mathbf{D}||\mathbf{C}\W,\mathbf{T},f(\mathbf{D}))  \\
  = \| \mathbf{D} - \mathbf{C}\W\|_F^2
 + \lambda_1 \| f(\mathbf{D}) - \mathbf{T}\W \|_F^2 + \lambda_2 \Omega(\mathbf{C}) \\
   \end{array}
\end{equation}
with $\lambda_1$ and $\lambda_2$ as two hyperparameters.

Like bDNMF, we solve cDNMF by first obtaining $f(\D)$ from a deep model, and taking $f(\D)$ as a constant of \eqref{eq:mDNMF}. Then, we optimize \eqref{eq:mDNMF} for $\mathbf{C}$, $\W$, and $\mathbf{T}$ by the alterative least squares optimization algorithm. When $f(\D)$ is given, problem \eqref{eq:mDNMF} satisfies the following first-order KKT optimality conditions:
\begin{eqnarray}
\left\{ \begin{array}{ll}
\mathbf{C} \geq \mathbf{0}, \mathbf{W} \geq \mathbf{0}, \mathbf{T} \geq \mathbf{0} \vspace{1ex} \\
\frac{\partial{D_F(\mathbf{D}||\mathbf{C}\W,\mathbf{T},f(\mathbf{D}))}}{\partial\mathbf{C}} \geq \mathbf{0} \vspace{1ex} \\
\mathbf{C}\odot\frac{\partial{ D_F(\mathbf{D}||\mathbf{C}\W,\mathbf{T},f(\mathbf{D}))}}{\partial\mathbf{C}} =\mathbf{0} \vspace{1ex} \\
\frac{\partial{D_F(\mathbf{D}||\mathbf{C}\W,\mathbf{T},f(\mathbf{D}))}}{\partial\mathbf{W}} \geq \mathbf{0} \vspace{1ex} \\
\mathbf{W}\odot\frac{\partial{D_F(\mathbf{D}||\mathbf{C}\W,\mathbf{T},f(\mathbf{D}))}}{\partial\mathbf{W}} =\mathbf{0} \vspace{1ex} \\
\frac{\partial{D_F(\mathbf{D}||\mathbf{C}\W,\mathbf{T},f(\mathbf{D}))}}{\partial\mathbf{T}} \geq \mathbf{0} \vspace{1ex} \\
\mathbf{T}\odot\frac{\partial{D_F(\mathbf{D}||\mathbf{C}\W,\mathbf{T},f(\mathbf{D}))}}{\partial\mathbf{T}} =\mathbf{0} \vspace{1ex} \\
\end{array}\right.\label{eq:KKT_SED}
\end{eqnarray}
which guarantees that the optimization of \eqref{eq:mDNMF} converges to a stationary point.
Let $\bm\Psi$, ${\mathcal{\mathbf{Q}}}$, and $\mathcal{\mathbf{P}}$ be the Lagrange multipliers of the constraints $\mathbf{C} \geq \mathbf{0}$, $\mathbf{W} \geq \mathbf{0}$ and $\mathbf{T} \geq \mathbf{0}$, respectively. The Lagrangian $\mathcal{J}$ of \eqref{eq:mDNMF} is
\begin{equation}\label{eq:Lagrange_SED}
\begin{split}
\mathcal{J} = & \mbox{Tr}(\mathbf{D}\mathbf{D}^T)-2\mbox{Tr}(\mathbf{D}\W^T\mathbf{C}^T) + \mbox{Tr}(\mathbf{C}\W\W^T\mathbf{C}^T) \\
& + \lambda_1\mbox{Tr}(f(\mathbf{D})f^T(\mathbf{D}))  - 2\lambda_1\mbox{Tr}(f(\mathbf{D})\W^T\mathbf{T}^T) \\
& + \lambda_1\mbox{Tr}(\mathbf{T}\mathbf{W}\mathbf{W}^T\mathbf{T}^T) + \lambda_2\mbox{Tr}(\mathbf{D}\mathbf{D}^T\mathbf{D}\mathbf{D}^T) \\
& - 2\lambda_2\mbox{Tr}(\mathbf{D}\mathbf{D}^T\mathbf{C}\mathbf{C}^T) + \lambda_2\mbox{Tr}(\mathbf{C}\mathbf{C}^T\mathbf{C}\mathbf{C}^T) \\
& +  \mbox{Tr}(\mathbf{C}\bm\Psi) + \mbox{Tr}(\mathbf{W}\mathcal{\mathbf{Q}}) + \mbox{Tr}(\mathbf{T}\mathcal{\mathbf{P}}) \\
\end{split}
\end{equation}
The partial derivatives of $\mathcal{J}$ with respect to $\mathbf{C}$, $\mathbf{W}$ and $\mathbf{T}$ are
\begin{equation}\label{eq:dif_Lagrange_C}
\begin{split}
\frac{\partial \mathcal{J}}{\partial \mathbf{C}} = & - 2\mathbf{D}\mathbf{W}^T + 2\mathbf{C}\W\W^T -4\lambda_2\mathbf{D}\mathbf{D}^T\mathbf{C} + \\ & 4\lambda_2\mathbf{C}\mathbf{C}^T\mathbf{C} + \bm\Psi
\end{split}
\end{equation}
\begin{equation}\label{eq:dif_Lagrange_W}
\begin{split}
\frac{\partial \mathcal{J}}{\partial \mathbf{W}} = & -2\mathbf{C}^T\mathbf{D} + 2\mathbf{C}^T\mathbf{C}\W - 2\lambda_1\mathbf{T}^Tf(\mathbf{D})+ \\ & 2\lambda_1\mathbf{T}^T\mathbf{T}\W + \mathcal{\mathbf{Q}}
\end{split}
\end{equation}
\begin{equation}\label{eq:dif_Lagrange_T}
\begin{split}
\frac{\partial \mathcal{J}}{\partial \mathbf{T}} = -2\lambda_1f(\mathbf{D})\mathbf{W}^T + 2\lambda_1\mathbf{T}\W\W^T + \mathcal{\mathbf{P}}
\end{split}
\end{equation}
Using the KKT conditions $\mathbf{C}\odot\bm\Psi = \mathbf{0}$, $\mathbf{W}\odot\mathcal{\mathbf{Q}} = \mathbf{0}$ and $\mathbf{T}\odot\mathcal{\mathbf{P}} = \mathbf{0}$ we get the following update rule for $\mathbf{C}$:
\begin{equation}\label{eq:MU_SED_W}
\begin{split}
[\mathbf{W}]^{(t+1)}_{ij} \leftarrow [\mathbf{W}]^{(t)}_{ij}\frac{[\mathbf{C}^T\mathbf{D}]_{ij} + \lambda_1[\mathbf{T}^Tf(\mathbf{D})]_{ij}}{[\mathbf{C}^T\mathbf{C}\W]_{ij} + \lambda_1[\mathbf{T}^T\mathbf{T}\mathbf{W}]_{ij}}
\end{split}
\end{equation}
\begin{equation}\label{eq:MU_SED_C}
\begin{split}
[\mathbf{C}]^{(t+1)}_{ij} \leftarrow [\mathbf{C}]^{(t)}_{ij}\frac{[\mathbf{D}\mathbf{W}^T]_{ij} + 2\lambda_2[\mathbf{D}\mathbf{D}^T\mathbf{C}]_{ij}}{[\mathbf{C}\W\W^T]_{ij} + 2\lambda_2[\mathbf{C}\mathbf{C}^T\mathbf{C}]_{ij}}
\end{split}
\end{equation}
\begin{equation}\label{eq:MU_SED_T}
\begin{split}
[\mathbf{T}]^{(t+1)}_{ij} \leftarrow [\mathbf{T}]^{(t)}_{ij}\frac{[f(\mathbf{D})\mathbf{W}^T]_{ij}}{[\mathbf{T}\W\W^T]_{ij}}
\end{split}
\end{equation}

cDNMF is summarized in Algorithm \ref{alg:DNMF}. Its merit over bDNMF and sDNMF is that cDNMF avoids the assumption that each document contains only one topic. However, it has two tunable hyperparameters. As we know, there is no way to tune the hyperparameters in unsupervised topic modeling. To remedy this weakness, we take the document clustering result $f(\D)$ as the pseudo labels for tuning the hyperparameters.

\subsection{Deep unsupervised document clustering}\label{subsec:mbn}

In Section \ref{Background}, we have summarized the recent progress of unsupervised deep learning methods for document clustering. To our knowledge, the advantage of the deep learning based document clustering over conventional document clustering methods is not apparent in general. In this section, we propose a novel unsupervised deep learning based document clustering method, named MBN, to address this issue.
 %Here we emphasize that MBN is only a special implementation of the proposed DNMF. As will be shown in the experiments, other deep learning methods based on word-embeddings \cite{pmlr-v80-zhao18a} can boost the performance of DNMF as well.

\subsubsection{Algorithm description of MBN}

 MBN consists of $L$ gradually narrowed hidden layers from bottom-up. Each hidden layer consists of $M$ $k$-centroids clusterings ($M\gg 1$), where parameter $k$ at the $l$-th layer is denoted by $k_l$, $l=1,\ldots,L$. Each $k_l$-centroids clustering has $k_l$ output units, each of which indicates one cluster. The output layer is linear-kernel-based spectral clustering \cite{ng2002spectral}. We take the output of the spectral clustering as $f(\D)$.

 MBN is trained simply by stacking. To train the $l$-th layer, we simply train each $k_l$-centroids clustering as follows:
 \begin{itemize}
  \item \textbf{Random sampling of input}. The first step randomly selects $k_l$ documents from ${\mathbf{X}^{(l-1)} = [\mathbf{x}_{1}^{(l-1)},\ldots,\mathbf{x}_{N}^{(l-1)}]}$ as the $k_l$ centroids of the clustering. If $l=1$, then ${\mathbf{X}^{(l-1)}=\textbf{D}}$.
  \item \textbf{One-nearest-neighbor learning}. The second step assigns an input document $\mathbf{x}^{(l-1)}$ to one of the $k_l$ clusters by one-nearest-neighbor learning, and outputs a $k_l$-dimensional indicator vector ${\textbf{h}=[h_1,\dots,h_{k_l}]^T}$, which is a one-hot sparse vector indicating the nearest centroid to $\mathbf{x}^{(l-1)}$.
 \end{itemize}
 The output units of all $k_l$-centroids clusterings are concatenated as the input of their upper layer, i.e. $\mathbf{x}^{(l)} = [\mathbf{h}_1^T,\ldots,\mathbf{h}_M^T]^T$.  We use cosine similarity to evaluate the similarity between the input and the centroids in all layers.

  As described in \cite{zhang2018multilayer}, each layer of MBN is a histogram-based nonparametric density estimator, which does not make model assumptions on data; the hierarchical structure of MBN captures the nonlinearity of documents by building a vast number of hierarchical trees on the TF-IDF feature space implicitly.

 %\subsubsection{Fundamentals of MBN}

% MBN is an unsupervised deep ensemble learning method. Like many ensemble learning methods such as bootstrap aggregating, the estimation error of MBN is decreased linearly along with the increase of the number of the base clusterings, i.e. $M$. $M$ is usually set to a number of a few hundreds.
%
%MBN has a fundamental assumption that, for any $k$-centroids clustering, the small area around a centroid of the clustering is locally linear. Hyperparameter $k_1$ is important in balancing the estimation accuracy and computational complexity. The larger $k_1$ is set to, the more likely MBN captures the small local variations of data. However, the computational complexity is also increased linearly with respect to the increase of $k_1$. Moreover, when $k_1$ is particularly large, the model has a risk of overfitting to random noise. In practice, setting $k_1 = \lfloor N/2 \rfloor$ is generally good.

\subsubsection{Network structure of MBN}

The network structure of MBN is important to its effectiveness. First of all, we should set the hyperparameter $M$ to a large number, which guarantees the high estimation accuracy of MBN at each layer. Then, to maintain the tree structure and discriminability of MBN, we should set $\{k_l\}_{l=1}^L$ carefully by the following criteria:
% \begin{align}
% k_1 = \left\lfloor N/2 \right\rfloor\\
%\quad k_l = \left\lfloor \delta k_{l-1} \right\rfloor\\
% \begin{array}{l}
%k_L\mbox{ is set to guarantee that at least a document per}\\
% \mbox{class is chosen by a random sample in probability}
% \end{array}\label{eq:xx}
% \end{align}
  \begin{equation}
 k_1 = \left\lfloor N/2 \right\rfloor, \quad k_l = \left\lfloor \delta k_{l-1} \right\rfloor
 \end{equation}
  \begin{equation}
 \begin{array}{l}
k_L\mbox{ is set to guarantee that at least a document per}\\
 \mbox{class is chosen by a random sample in probability}
 \end{array}\label{eq:xx}
 \end{equation}
 where $\delta\in[0,1)$ is a user defined hyperparameter with $0.5$ as the default.

As analyzed in \cite{zhang2018multilayer}, the hyperparameter $\delta$ controls how aggressively the nonlinearity of data is reduced. If the data is highly nonlinear, then we set $\delta$ to a large number, which results in a very deep architecture; otherwise, we set $\delta$ to a small number. MBN is relatively sensitive to the selection of $\delta$. As will be shown in the experiment, setting $\delta = 0.5$ is safe, though tuning $\delta$ may lead to better performance.

The criterion \eqref{eq:xx} guarantees that each $k_L$-centroids clustering is a valid one in probability. Specifically, for any $k_L$-centroids clustering, if its centroids do not contain any document of a topic, then its output representation has no discriminability to the topic. To understand this point, we consider an extreme case: if $k_L = 1$, then the top hidden layer of MBN outputs the same representation for all documents. In practice, we implement \eqref{eq:xx} by:
  \begin{eqnarray}
 k_L \approx\left\{ \begin{array}{ll}
30 \lceil \frac{N}{N_{z}}\rceil, &\mbox{if $\mathbf{D}$ is strongly class imbalanced}\\
 1.5K,& \mbox{otherwise}\\
\end{array}\right.\label{eq:xx1}
 \end{eqnarray}
 %where $N_Z$ and $N_z$ are the document numbers of the largest and smallest topics respectively. If $N_Z$ and $N_z$ are unknown, we simply set $ k_L$ to a relatively large number, which needs some experience.
where $N_z$ is the size of the smallest topic. If $N_z$ is unknown, we simply set $ k_L$ to a number that is significantly larger than the number of topics, e.g. 300 or so.

\subsection{Discussions}

%Although the objective functions of bDNMF and cDNMF are different, bDNMF can be viewed as a special case of cDNMF.
%In bDNMF algorithm, we construct the topic-document matrix $\W$ by the output of deep model.
%The cDNMF attaches regularization that learn the latent distribution from the deep model as a regression response to topic-document matrix $\W$.
%However, in cDNMF, $\W$ is linearly constrained by the output of deep model.
%When the linear transform matrix $\mathbf{T}$ is an identity matrix, bDNMF and cDNMF are equivalent.

The DNMF variants are new in the NMF study even without the deep model.
First, the structured NMF component of sDNMF is different from existing structured NMF models. For example, nonsmooth NMF \cite{pascual2006nonsmooth} incorporates a smooth factor to make the basis matrix and coefficient matrix (i.e. the topic-document matrix and word-topic matrix respectively in topic modeling) sparse, and reconciles the contradiction between approximation and sparseness. Some other structured NMF methods \cite{gao2017local,li2017robust} adopt a global centroid for each basis vector to capture the manifold structure.
However, sDNMF takes the sparse representation of documents as a mask of the basis matrix. Second, although it is common to add regularization terms into the objective function of NMF, we did not observe the term \eqref{eq:mDNMFxxx} in the study of NMF. Although some similar form to \eqref{eq:mDNMFyyy} has been proposed in \cite{li2016robust} for hyperspectral unmixing, they learn the representation of data by a shallow model. Therefore, the objective function of cDNMF is fundamentally new to our knowledge.

Because sDNMF and cDNMF are non-convex optimization problems, we take the alternative iterative optimization algorithm to solve them. The convergence of the algorithm is guaranteed by the following theorem:
\begin{myTheo}\label{theorem1}
The objective values of sDNMF and cDNMF decreases monotonically and converges to a stationary point.
\end{myTheo}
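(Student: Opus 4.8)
**The plan is to establish monotonic decrease of the objective under the multiplicative updates via an auxiliary-function argument, then combine monotonicity with boundedness below to conclude convergence to a stationary point.**

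The standard tool here is the \emph{auxiliary function} technique of Lee and Seung \cite{lee2001algorithms}. Recall that $G(\mathbf{X},\mathbf{X}')$ is an auxiliary function for a scalar objective $F(\mathbf{X})$ if $G(\mathbf{X},\mathbf{X}')\geq F(\mathbf{X})$ for all $\mathbf{X},\mathbf{X}'$ and $G(\mathbf{X},\mathbf{X})=F(\mathbf{X})$; then the update $\mathbf{X}^{(t+1)}=\arg\min_{\mathbf{X}}G(\mathbf{X},\mathbf{X}^{(t)})$ forces $F(\mathbf{X}^{(t+1)})\leq G(\mathbf{X}^{(t+1)},\mathbf{X}^{(t)})\leq G(\mathbf{X}^{(t)},\mathbf{X}^{(t)})=F(\mathbf{X}^{(t)})$. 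I would treat sDNMF and cDNMF one block-variable at a time, holding the others (and $f(\mathbf{D})$) fixed. For sDNMF the objective \eqref{eq:cDNMF} restricted to $\mathbf{C}$ (with $\mathbf{T}$, $f(\mathbf{D})$ fixed) is an ordinary quadratic NMF subproblem in the effective basis $f(\mathbf{D})\odot\mathbf{T}$, so the classical diagonal-quadratic-upper-bound auxiliary function applies verbatim and yields exactly update \eqref{eq:MU_cDNMF_C}; likewise the subproblem in $\mathbf{T}$ is quadratic in $\mathbf{T}$ (the Hadamard mask is linear in $\mathbf{T}$), so an analogous upper bound yields \eqref{eq:MU_cDNMF_W}. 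For cDNMF I would do the same for the three blocks $\mathbf{W}$, $\mathbf{C}$, $\mathbf{T}$ using the partial derivatives \eqref{eq:dif_Lagrange_W}, \eqref{eq:dif_Lagrange_C}, \eqref{eq:dif_Lagrange_T}: in each case one builds $G$ so that $\nabla^2 G$ is the diagonal matrix whose entries are the numerator-vs-denominator ratio appearing in the corresponding multiplicative rule \eqref{eq:MU_SED_W}--\eqref{eq:MU_SED_T}, and verifies $G\geq F$ by the convexity inequality $z\geq 1+\log z$ (equivalently, by a weighted Jensen step on each quadratic term). Since each block update does not increase the objective, the full sweep does not increase it, so $\{D_F(\cdot)\}_t$ is nonincreasing; being a sum of squared Frobenius norms it is bounded below by $0$, hence it converges.

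The one genuinely nontrivial term is the word--word affinity regularizer $\Omega(\mathbf{C})=\|\mathbf{C}\mathbf{C}^T-\mathbf{D}\mathbf{D}^T\|_F^2$ in cDNMF, because it is \emph{quartic} in $\mathbf{C}$, not quadratic --- this is where the usual Lee--Seung template does not directly apply and the main work lies. The relevant gradient contribution is $-4\lambda_2\mathbf{D}\mathbf{D}^T\mathbf{C}+4\lambda_2\mathbf{C}\mathbf{C}^T\mathbf{C}$, which is exactly the gradient of the symmetric-NMF-type penalty, so I would borrow the auxiliary function used for symmetric NMF: the quartic positive term $\mathrm{Tr}(\mathbf{C}\mathbf{C}^T\mathbf{C}\mathbf{C}^T)$ is upper-bounded by $\sum_{ij}\frac{[\mathbf{C}^{(t)}\mathbf{C}^{(t)T}\mathbf{C}^{(t)}]_{ij}}{[\mathbf{C}^{(t)}]_{ij}}\frac{[\mathbf{C}]_{ij}^4}{[\mathbf{C}^{(t)}]_{ij}^3}$ (from $x^4/y^3\geq$ its tangent majorizer, or an arithmetic-mean argument on the four-fold product), while the negative bilinear term $-\mathrm{Tr}(\mathbf{D}\mathbf{D}^T\mathbf{C}\mathbf{C}^T)$ is handled by the standard log-lower-bound $[\mathbf{C}]_{ij}^2\geq [\mathbf{C}^{(t)}]_{ij}^2\bigl(1+2\log\frac{[\mathbf{C}]_{ij}}{[\mathbf{C}^{(t)}]_{ij}}\bigr)$. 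Combining these with the quadratic majorizer for the remaining reconstruction term and minimizing the resulting separable majorizer in closed form reproduces precisely \eqref{eq:MU_SED_C}, confirming the descent property for the $\mathbf{C}$-step; the $\mathbf{W}$- and $\mathbf{T}$-steps are then routine quadratic cases as above.

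Finally, for the ``stationary point'' claim I would argue that at a fixed point of the multiplicative updates the KKT system written out in \eqref{eq:KKT_sDNMF} (for sDNMF) and \eqref{eq:KKT_SED} (for cDNMF) is satisfied: the multiplicative form guarantees $\mathbf{C},\mathbf{W},\mathbf{T}\geq\mathbf{0}$ throughout, and at a fixed point the ratio in each update equals $1$, which is exactly the complementary-slackness condition $\mathbf{X}\odot\frac{\partial D_F}{\partial\mathbf{X}}=\mathbf{0}$ together with $\frac{\partial D_F}{\partial\mathbf{X}}\geq\mathbf{0}$ for the active (zero) entries; hence any limit point of the bounded, monotonically-improving sequence is a KKT point, i.e.\ a stationary point of \eqref{eq:cDNMF} resp.\ \eqref{eq:mDNMF}. (If one wants the sequence itself, not merely a subsequence, to converge, one invokes that the iterates lie in a compact set --- which can be arranged by a mild normalization of the columns of $\mathbf{C}$ --- and that the accumulation set of a bounded sequence with vanishing successive differences is connected; but the statement as phrased only needs convergence of the objective value plus stationarity of limit points, so I would not belabor this.) I expect the quartic $\Omega(\mathbf{C})$ majorization to be the crux; everything else is a direct transcription of Lee--Seung.
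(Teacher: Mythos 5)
Your overall strategy (block-wise majorize--minimize with Lee--Seung-style auxiliary functions, monotone decrease plus boundedness below, and KKT verification at fixed points) is the same as the paper's, and for sDNMF and for the $\W$- and $\mathbf{T}$-blocks of cDNMF your quadratic majorizers coincide with what the paper does via its Proposition~\ref{Pro}. The gap is in the one step you yourself flag as the crux: the $\mathbf{C}$-update of cDNMF. You majorize the quartic term $\mbox{Tr}(\mathbf{C}\mathbf{C}^T\mathbf{C}\mathbf{C}^T)$ by a separable fourth-power bound $\propto [\mathbf{C}]_{ij}^4$ and lower-bound the negative quadratic term $-\mbox{Tr}(\mathbf{D}\mathbf{D}^T\mathbf{C}\mathbf{C}^T)$ by logarithmic terms, and then claim that minimizing the resulting separable surrogate ``reproduces precisely'' \eqref{eq:MU_SED_C}. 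It does not: the stationarity condition of that surrogate in each entry has the form $\alpha\, c^{4} + \beta\, c^{2} - \gamma = 0$ (after multiplying through by $c$), i.e.\ a genuine quartic mixing the fourth-power term from the regularizer with the second-power term from the reconstruction part, and its solution is not the simple first-power ratio of \eqref{eq:MU_SED_C}; even in the pure symmetric-NMF limit the closed form is a fourth-root update $c\leftarrow c\,(\gamma/\alpha)^{1/4}$, which is a different rule. So your argument, as written, establishes descent for some update, but not for the update actually used in Algorithm~3, which is what Theorem~\ref{theorem1} is about.

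The paper handles this differently: it writes the elementwise objective $\mathcal{O}_{\mathbf{C}}$ exactly as a fourth-order Taylor expansion \eqref{eq:Refunc_C}, keeps the third- and fourth-order terms, and inflates only the second-order coefficient to $\bigl(2[\mathbf{C}\W\W^T]_{ij}+4\lambda_1[\mathbf{C}\mathbf{C}^T\mathbf{C}]_{ij}\bigr)/[\mathbf{C}]_{ij}$, justifying the inflation by the elementwise inequalities \eqref{eq:equivalent_C1}--\eqref{eq:equivalent_C2}; minimizing that auxiliary function is what is claimed to yield the ratio rule \eqref{eq:MU_SED_C}. If you want to follow your symmetric-NMF route instead, you must either (i) change the algorithmic update to the one your surrogate actually minimizes (e.g.\ a fourth-root-type rule) and prove the theorem for that, or (ii) exhibit a surrogate whose exact minimizer is \eqref{eq:MU_SED_C} and prove it really upper-bounds $\mathcal{O}_{\mathbf{C}}$ --- note that the naive quadratic bound $\sum_{ij}[\mathbf{C}'\mathbf{C}'^T\mathbf{C}']_{ij}[\mathbf{C}]_{ij}^{2}/[\mathbf{C}']_{ij}$ is not a global upper bound of the quartic term, which is precisely why the paper keeps the higher-order Taylor terms. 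As it stands, the descent property of \eqref{eq:MU_SED_C} is asserted but not proved in your proposal.
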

\begin{proof}
See Appendix \ref{appendices_A} for the proof of Theorem \ref{theorem1} where we take cDNMF as an example.
The proof can be applied to sDNMF too whose objective value is non-increasing under the update rules \eqref{eq:MU_cDNMF_C} and \eqref{eq:MU_cDNMF_W}.
\end{proof}

\section{Experiments}
\label{Exp}
In this section, we compare the proposed DNMF with nine topic modeling methods on three benchmark text datasets.

%In all experiments, the fault parameters used in DNMF are set as follows:
% $a = 1$, $M=400$ $\delta=0.5$, and $\lambda=1/3$.\footnote{The default value of $\lambda$ is in the implementation of the ADMM algorithm.}

 \subsection{Data sets}

We conducted experiments on the 20-newsrgoups, top 30 largest topics of TDT2, and top 30 largest topics of Reuters-21578 document corpora.  20-Newsgroups consists of 18,846 documents with a vocabulary size of 26,214. This data set has 20 categories, each of which contains around 1,000 documents. The top 30 largest topics of TDT2 consists of 9,394 documents with a vocabulary size of 36,771 words. The top 30 largest topics of
 Reuters-21578 contains 8,293 documents in total with a vocabulary size of 18,933 words.

For TDT2 and Reuter-21578, we randomly selected 3 to 25 topics from the top 30 largest topics respectively for evaluation. For 20-newsgroups, we randomly selected 3 to 20 topics respectively for evaluation. For each comparison, we reported the average results over 50 Monte-Carlo runs.
The indices of the topics of the 50 independent runs on TDT2 are the same as those at \url{http://www.cad.zju.edu.cn/home/dengcai/Data/TextData.html}.
We extracted TF-IDF statistics from the bag-of-words model of the documents, and took cosine similarity to measure the similarity of two documents in the TF-IDF space.
%\begin{figure*}[!t]
%\centering
%\includegraphics[width=7.5in]{DNMF_20News.jpg}
%\caption{Visualizations of a subset(5000 documents) of 20Newsgroups produced by DNMF and 10 comparison methods.}
%\label{fig_corpus}
%\end{figure*}

%$\bullet$

%$\bullet$

%$\bullet$

%\subsection{Experimental settings}

%The numbers in bold denote the best performance.

% \footnote{The default value of $\lambda$ is in the implementation of the ADMM algorithm.}

 \subsection{Comparison algorithms}
The hyperparameters of DNMF in all experiments were set as follows: $M=400$, $\delta=0.5$, $\lambda_1 = 1$, and $\lambda_2 = 1$, unless otherwise stated. We compared DNMF with four probabilistic models \cite{papadimitriou2000latent,blei2003latent,cai2008modeling,cai2009probabilistic}, four NMF methods \cite{gillis2013fast,gillis2014successive,kumar2013fast,fu2018anchor}, and one deep learning based topic model \cite{henao2015deep} with their optimal hyperparameter settings. They are listed as follows:
\begin{itemize}
 \itemsep=0.0pt
  \item \textbf{Probabilistic latent semantic indexing (PLSI)}\cite{papadimitriou2000latent}.
  \item \textbf{Latent Dirichlet allocation (LDA)}\cite{blei2003latent}.
  \item \textbf{Laplacian probabilistic latent semantic indexing (LapPLSI)}\cite{cai2008modeling}.
  \item \textbf{Locally-consistent topic modeling (LTM)} \cite{cai2009probabilistic}.
  \item \textbf{Successive projection algorithm (SPA)}\cite{gillis2013fast}.
  \item \textbf{Successive nonnegative projection (SNPA)} \cite{gillis2014successive}.
  \item \textbf{A fast conical hull algorithm (XRAY)}\cite{kumar2013fast}.
  \item \textbf{Anchor-free correlated topic modeling (AchorFree)} \cite{fu2018anchor}.
  \item \textbf{Deep Poisson factor modeling (DPFA)} \cite{henao2015deep}:
    it is a deep learning based topic model built on the Dirichlet process.
    We set its DNN to a depth of two hidden layers, and set the number of the hidden units of the two hidden layers to $K$ and $\lceil K/2\rceil$ respectively. We used the output from the first hidden layer for clustering. The above setting results in the best performance.
  %  {\color{red}
  %\item \textbf{Word Embeddings Deep Topic Model (WEDTM)} \cite{pmlr-v80-zhao18a}:
  %    It is a deep topic model that leverages word embeddings to discover inter topic structures and intra topic structures.
  %    We set the number of topics in each layer to $K$. The other hyperparameters are default. We used 300-dimensional GloVe word embeddings pre-trained on Wikipedia\footnote{\url{https://nlp.stanford.edu/projects/glove/}}. We used the output $\theta$, which is the latent representation of documents, for clustering.}
\end{itemize}

 \subsection{Evaluation Metrics}

We evaluated the comparison results in terms of \textit{clustering accuracy} (ACC), \textit{coherence}, and \textit{similarity count} (SimCount). Clustering accuracy applies the hungarian algorithm\footnote{\url{http://www.cad.zju.edu.cn/home/dengcai/Data/code/hungarian.m}} to solve the permutation problem of predicted labels.\footnote{\url{http://www.cad.zju.edu.cn/home/dengcai/Data/code/bestMap.m}} Coherence evaluates the quality of a single topic by finding how many topic words belonging to the topic appear across the documents of the topic \cite{arora2013practical}:
    \begin{equation}\label{eq2}
    {\rm{Coh}}(\nu) = \sum_{v_1,v_2 \in {\nu}} \log{\frac{{\rm{freq}}(v_1,v_2)+\varepsilon}{{\rm{freq}}(v_2)}}
    \end{equation}
where $v_1$ and $v_2$ denote two words in the vocabulary, ${\rm{freq}}(v_1,v_2)$ denotes the number of the documents where $v_1$ and $v_2$ co-appear, ${\rm{freq}}(v_2)$ denotes the number of the documents containing $v_2$, and $\varepsilon = 0.01$ is used to prevent the input of the logarithm operator from zero. The higher the clustering accuracy or coherence score is, the better the topic model is.
%See \cite{fu2018anchor} for a good description of the three evaluation metrics.
Because the coherence measurement does not evaluate the redundancy of a topic, we used similarity count to measure the similarity between the topics. For each topic, similarity count is obtained simply by counting the number of the overlapped words in the leading $K$ words.
 The lower the similarity count score is, the better the topic model is.% {\color{blue}{\textbf{add reference of all evaluation metrics!}}}

\subsection{Main results}

\begin{figure}[t]
\vspace{-0.7cm} %调整图片与上文的垂直距离
\setlength{\abovecaptionskip}{0cm} %缩小caption和图像之间的距离
\setlength{\belowcaptionskip}{-0cm} %缩小caption和下方文字的距离
\centering
{\includegraphics[width=3.5in]{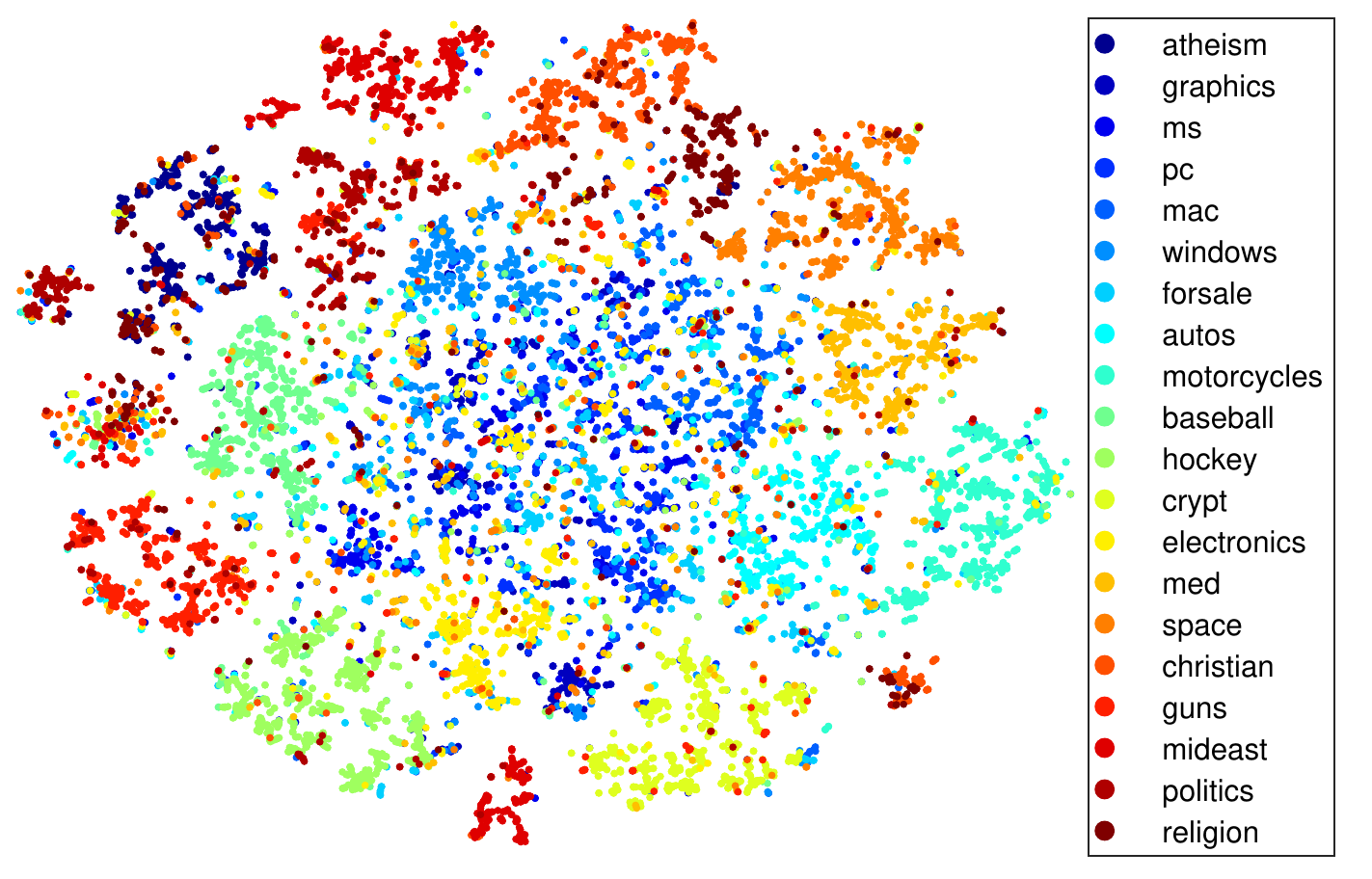}%
}
%\vspace{-0.7cm}
\caption{Visualizations of 20Newsgroups produced by DNMF.}
\label{fig_20News1}
\end{figure}

\begin{table}[tp]
\center
\caption{Topic words discovered by bDNMF and AnchorFree on a 5-topic subset of TDT2 corpus. The topic words in bold denotes overlapped words between topics.}\label{tab4}
\vspace{-0.3cm}
\tiny
\setlength{\tabcolsep}{0.1mm}{
\begin{tabular}%{|c|c|c|c|c|c|c|c|c|c|c|c|c|}
  {cccccccccc}
  %\hline
  %\cline{1-5}
  \multicolumn{5}{c}{AnchorFree}& \multicolumn{5}{c}{bDNMF}\\
  \toprule
  Topic 1 & Topic 2 & Topic 3 & Topic 4 & Topic 5 & Topic 1 & Topic 2 & Topic 3 & Topic 4 & Topic 5 \\
  \toprule

  netanyahu  & \textbf{asian}    & bowl      & tornadoes  & \textbf{economic}  &netanyahu   & asian      & bowl      & florida   & nigeria     \\
  israeli    & \textbf{asia}     & super     & florida    & \textbf{indonesia} &israeli     & percent    & super     & tornadoes & abacha   \\
  israel     & \textbf{economic} & broncos   & central    & \textbf{asian}     &israel      & indonesia  & broncos   & tornado   & military   \\
  palestinian& \textbf{financial}& denver    & storms     & \textbf{financial} &palestinian & asia       & denver    & storms    & police   \\
  peace      & \textbf{percent}  & packers   & ripped     & imf          & peace       & economy    & packers   & killed    & nigerian\\
  arafat     & \textbf{economy}  & bay       & victims    & \textbf{economy}  &albright    & financial  & green     & victims   & opposition   \\
  palestinians& market  & green     & tornado    & \textbf{crisis}     &arafat      & market     & game      & damage    & nigerias   \\
  albright   & stock    & football  & homes      & \textbf{asia}       &palestinians& stock      & bay       & homes     & anti  \\
  benjamin   & \textbf{crisis}   & game      & killed     & monetary   &talks       & economic   & football  & ripped    & elections   \\
  west       & markets  & san       & people     & \textbf{currency}   &west        & billion    & elway     & nino      & arrested   \\
  talks      & stocks   & elway     & damage     & \textbf{billion}    &benjamin    & crisis     & san       & el        & lagos  \\
  \textbf{bank}       & \textbf{currency} & diego     & twisters   & fund  &madeleine   & imf        & team      & weather   & democracy       \\
  prime      & prices   & xxxii     & nino       & \textbf{percent}    &london      & japan      & sports    & twisters  & sani  \\
  london     & dollar   & nfl       & el         & international        &ross        & spkr       & diego     & storm     & sysciviliantem\\
  minister   & investors& quarterback& deadly    & government           &withdrawal  & currency   & coach     & rain      & protest\\
  yasser     & index    & sports    & storm      & \textbf{bank}        &process     & markets    & play      & stories   & protests  \\
  ross       & \textbf{billion}  & play      & counties   & korea       &prime       & dollar     & win       & deadly    & presidential \\
  withdrawal & \textbf{bank}     & yards     & weather    & south       &yasser      & south      & teams     & struck    & abachas \\
  madeleine  & growth   & favre     & funerals   & indonesian       &secretary   & government & season    & residents & violent\\
  13         & \textbf{indonesia}& pittsburgh& toll       & suharto    &13          & prices     & fans      & california& nigerians  \\
  \toprule

\end{tabular}
}
\end{table}

\begin{table}[t]
\centering
\setlength{\abovecaptionskip}{0.cm}
\caption{Performance of the comparison algorithms on 20-newsgroups. }\label{tab1}
\tiny
\setlength{\tabcolsep}{0.1mm}{
\begin{tabular}{ccccccccccccccc}
  \toprule
   %after \\: \hline or \cline{col1-col2} \cline{col3-col4} ...
    & \#topics&PLSI&LDA&LapPLSI&LTM&SPA&SNPA&XRAY&AnchorFree&DFPA&bDNMF&sDNMF&cDNMF \\
   \toprule
   \multirow{9}{*}{ACC}
          & 3     & 0.4243  & 0.8134  & 0.7596  & 0.8955  & 0.4279  & 0.4275  & 0.4086  & 0.8763  & 0.8402  & 0.9101  & 0.9101  & 0.9101  \\
          & 4     & 0.3671  & 0.7291  & 0.7094  & 0.8287  & 0.3311  & 0.3312  & 0.3279  & 0.8360  & 0.8050  & 0.8916  & 0.8916  & 0.8916  \\
          & 5     & 0.3403  & 0.7013  & 0.6442  & 0.8389  & 0.2897  & 0.2900  & 0.2796  & 0.7618  & 0.7882  & 0.8689  & 0.8689  & 0.8689  \\
          & 6     & 0.3216  & 0.6622  & 0.6078  & 0.8229  & 0.2585  & 0.2585  & 0.2523  & 0.7095  & 0.7582  & 0.8549  & 0.8549  & 0.8549  \\
          & 7     & 0.3200  & 0.6462  & 0.6017  & 0.7881  & 0.2436  & 0.2437  & 0.2399  & 0.7132  & 0.7388  & 0.8266  & 0.8266  & 0.8266  \\
          & 8     & 0.3075  & 0.6178  & 0.5758  & 0.7744  & 0.2202  & 0.2203  & 0.2128  & 0.6888  & 0.7114  & 0.8056  & 0.8056  & 0.8056  \\
          & 9     & 0.3113  & 0.6021  & 0.5433  & 0.7207  & 0.2126  & 0.2123  & 0.1962  & 0.6622  & 0.6897  & 0.7662  & 0.7662  & 0.7662  \\
          & 10    & 0.3111  & 0.5915  & 0.5279  & 0.7107  & 0.2066  & 0.2069  & 0.1957  & 0.6431  & 0.6664  & 0.7584  & 0.7584  & 0.7584  \\
          & 15    & 0.3212  & 0.5187  & 0.4799  & 0.6328  & 0.1757  & 0.1756  & 0.1591  & 0.5208  & 0.5754  & 0.6860  & 0.6860  & 0.6860  \\
          & 20    & 0.3603  & 0.4900  & 0.4354  & 0.5996  & 0.1469  & 0.1475  & 0.1071  & 0.4465  & 0.5233  & 0.6502  & 0.6502  & 0.6502  \\
  \toprule
   rank&          & 9.2   & 6.9   & 8     & 4.1   & 10.5  & 10.3  & 12    & 5.8   & 5.2   & 1     & 1     & 1 \\
  \toprule

  \multirow{9}{*}{Coherence}
          & 3     & -963.64  & -603.30  & -725.59  & -636.46  & -558.28  & -558.28  & -980.59  & -572.86  & -534.39  & -667.66  & -635.60  & -694.15  \\
          & 4     & -1008.22  & -634.13  & -732.05  & -677.52  & -613.12  & -613.12  & -1076.65  & -573.30  & -585.47  & -666.69  & -659.37  & -702.69  \\
          & 5     & -995.86  & -651.71  & -739.20  & -704.39  & -618.71  & -618.71  & -1085.22  & -565.88  & -562.91  & -676.85  & -650.67  & -716.23  \\
          & 6     & -1003.28  & -678.72  & -743.40  & -753.48  & -650.35  & -650.35  & -1115.20  & -538.75  & -588.59  & -710.18  & -679.94  & -765.19  \\
          & 7     & -994.24  & -686.47  & -754.92  & -741.03  & -709.04  & -709.04  & -1156.77  & -544.18  & -587.15  & -695.82  & -664.72  & -744.81  \\
          & 8     & -1015.14  & -702.15  & -779.04  & -778.82  & -697.90  & -697.90  & -1215.27  & -566.80  & -592.84  & -704.10  & -677.69  & -771.84  \\
          & 9     & -1020.01  & -716.64  & -773.21  & -790.30  & -725.18  & -725.18  & -1200.38  & -562.41  & -605.59  & -711.80  & -674.26  & -771.47  \\
          & 10    & -1008.06  & -729.28  & -789.99  & -799.53  & -766.64  & -766.64  & -1236.49  & -571.91  & -616.24  & -721.11  & -688.32  & -794.68  \\
          & 15    & -1001.83  & -762.76  & -843.94  & -854.70  & -816.32  & -816.32  & -1335.53  & -575.56  & -640.45  & -749.68  & -713.89  & -866.38  \\
          & 20    & -911.33  & -759.13  & -856.07  & -855.94  & -901.74  & -901.74  & -1141.56  & -596.09  & -676.11  & -789.63  & -716.90  & -937.74  \\
  \toprule
  rank&          &    10.9  & 5     & 9     & 8.4   & 4.7   & 4.7   & 12    & 1.4   & 1.8   & 5.8   & 4.1   & 9.2 \\
  \toprule

  \multirow{9}{*}{SimCount}
          & 3     & 332.26  & 9.42  & 453.02  & 0.40  & 5.46  & 5.46  & 4.38  & 10.42  & 21.6  & 3.52  & 2.08  & 2.78  \\
          & 4     & 333.08  & 14.56  & 488.92  & 1.04  & 12.16  & 12.16  & 8.16  & 21.32  & 38.36 & 7.12  & 4.24  & 4.80  \\
          & 5     & 319.66  & 22.34  & 430.96  & 1.60  & 13.24  & 13.24  & 13.32  & 32.22  & 64.22 & 10.08  & 4.98  & 5.98  \\
          & 6     & 295.40  & 33.12  & 432.10  & 2.32  & 19.74  & 19.74  & 17.68  & 53.12  & 95.66 & 16.50  & 7.74  & 8.48  \\
          & 7     & 317.60  & 37.22  & 388.34  & 4.28  & 22.28  & 22.28  & 22.92  & 76.14  & 124.12 & 23.44  & 11.68  & 14.60  \\
          & 8     & 309.00  & 42.50  & 319.68  & 5.18  & 24.08  & 24.08  & 31.92  & 112.14  & 160.72 & 30.52  & 15.24  & 15.32  \\
          & 9     & 324.14  & 50.48  & 386.22  & 6.44  & 28.12  & 28.12  & 40.52  & 142.02  & 193.12 & 41.10  & 21.42  & 20.48  \\
          & 10    & 323.34  & 66.38  & 278.66  & 9.18  & 23.58  & 23.58  & 46.80  & 195.76  & 224.76 & 51.02  & 25.44  & 21.82  \\
          & 15    & 352.90  & 116.20  & 186.22  & 19.56  & 21.46  & 21.46  & 106.06  & 598.82  & 365.64 & 111.12  & 53.48  & 36.20  \\
          & 20    & 396.00  & 196.00  & 139.00  & 26.00  & 15.00  & 15.00  & 211.00  & 1235.00  & 496.12 & 189.96  & 89.44  & 56.82  \\
  \toprule
  rank &          &    10.9  & 8     & 11    & 1.2   & 4.1   & 5.1   & 6.2   & 9.6   & 10.2  & 5.7   & 3     & 3 \\
  \toprule
\end{tabular}}
\end{table}

%We have compared DNMF with 9 representative topic modeling methods  \cite{papadimitriou2000latent,blei2003latent,cai2008modeling, cai2009probabilistic,gillis2013fast,gillis2014successive,kumar2013fast ,fu2018anchor,henao2015deep} covering hierarchical probabilistic models \cite{papadimitriou2000latent,blei2003latent,cai2008modeling
%,cai2009probabilistic}, NMF methods  \cite{gillis2013fast,gillis2014successive,kumar2013fast,fu2018anchor}, and deep topic models  \cite{henao2015deep}.
%Empirical results on the 20-newsgroups, topic detection and tracking database version 2 (TDT2), and Reuters-21578 corpora illustrate the effectiveness of DNMF in terms of four evaluation metrics. Moreover, DNMF is insensitive to the hyperparameter selection, which facilitates its practical use. Here we first illustrate a comparison example between DNMF and the recent advanced AnchorFree algorithm \cite{fu2018anchor}, on a 5-topic subset of the TDT2 corpus in Table \ref{tab4}.
%From the table, we see that DNMF produces more discriminative and less overlapped topic words than AnchorFree.

We listed the top 20 topic words of a 5-topic modeling problem as an example in Table \ref{tab4}. From the table, we see that the topic words of the second and fifth topics produced by AnchorFree have an overlap of over 50\%. Some informative topic words discovered by bDNMF, such as the words on anti-government activities or violence in the fifth topic, were not detected by AnchorFree.
The above phenomena are observed in the other experiments too.
We conjecture that the advanced experimental phenomena are caused by the fact that bDNMF not only avoids making additional assumptions but also benefits from the high clustering accuracy of the deep model. We show the latent representation of the documents in 20-newsgroups learned by MBN in Fig. \ref{fig_20News1}. From the figure, we see that the latent representation has strong disriminability which may lead to high performance of DNMF.

Table \ref{tab1} shows the comparison results on the 20-newsgroups corpus. From the table, we see that the DNMF variants achieve the highest clustering accuracy among the comparison methods. For example, the clustering accuracy of DNMF is more than $5\%$ absolutely higher than that of the runner-up method, i.e. LTM, when the number of the topics is 20 and between 4 and 15, and is at least $1\%$ higher than the latter in the other cases. Particularly, DNMF is significantly better than the NMF methods. The relative improvement of DNMF over NMF tends to be enlarged when the number of the topics increases, which demonstrates the effectiveness of the deep architecture of DNMF. In addition, the single-topic quality produced by sDNMF ranks the third in terms of coherence, which is inferior to AnchorFree and DFPA.
The similarity count scores produced by sDNMF and cDNMF rank behind LTM and are higher than the other comparison methods, which indicates that DNMF is able to generate less overlapped topic words than the comparison methods except the probabilistic model---LTM.

\begin{table}[t]
\centering
\caption{Performance of the comparison algorithms on TDT2.}\label{tab2}
\vspace{-0.3cm}
\tiny
\setlength{\tabcolsep}{0.1mm}{
\begin{tabular}{ccccccccccccccc}
  \toprule
   %after \\: \hline or \cline{col1-col2} \cline{col3-col4} ...
  & \#topics&PLSI&LDA&LapPLSI&LTM&SPA&SNPA&XRAY&AnchorFree&DFPA&bDNMF&sDNMF&cDNMF \\
   \toprule
   \multirow{11}{*}{ACC}
          & 3     & 0.5497  & 0.7932  & 0.9889  & 0.9872  & 0.7853  & 0.7854  & 0.7263  & 0.9738  & 0.8840  & 0.9954  & 0.9954  & 0.9954  \\
          & 4     & 0.5187  & 0.7402  & 0.9831  & 0.9496  & 0.7291  & 0.7306  & 0.6782  & 0.9469  & 0.8151  & 0.9864  & 0.9864  & 0.9864  \\
          & 5     & 0.4939  & 0.7013  & 0.9771  & 0.9443  & 0.6943  & 0.6986  & 0.6716  & 0.9186  & 0.8037  & 0.9808  & 0.9808  & 0.9808  \\
          & 6     & 0.4678  & 0.6762  & 0.9683  & 0.9171  & 0.6452  & 0.6392  & 0.6267  & 0.9093  & 0.7888  & 0.9863  & 0.9863  & 0.9863  \\
          & 7     & 0.4814  & 0.6570  & 0.9392  & 0.8649  & 0.6125  & 0.6105  & 0.6253  & 0.9024  & 0.7490  & 0.9566  & 0.9566  & 0.9566  \\
          & 8     & 0.4721  & 0.6230  & 0.9457  & 0.8406  & 0.5818  & 0.5792  & 0.5736  & 0.8748  & 0.7233  & 0.9460  & 0.9460  & 0.9460  \\
          & 9     & 0.4930  & 0.6481  & 0.9095  & 0.8092  & 0.5883  & 0.5832  & 0.5433  & 0.8690  & 0.7469  & 0.9575  & 0.9575  & 0.9575  \\
          & 10    & 0.4883  & 0.6413  & 0.9017  & 0.7705  & 0.5642  & 0.5612  & 0.5319  & 0.8481  & 0.7305  & 0.9100  & 0.9100  & 0.9100  \\
          & 15    & 0.5412  & 0.5941  & 0.8393  & 0.6861  & 0.4736  & 0.4694  & 0.4411  & 0.7963  & 0.6849  & 0.8613  & 0.8613  & 0.8613  \\
          & 20    & 0.6290  & 0.6093  & 0.7606  & 0.6458  & 0.4593  & 0.4610  & 0.4358  & 0.7741  & 0.6776  & 0.8074  & 0.8074  & 0.8074  \\
          & 25    & 0.6582  & 0.6095  & 0.7390  & 0.6325  & 0.4351  & 0.4367  & 0.4240  & 0.7392  & 0.6521  & 0.7664  & 0.7664  & 0.7664  \\
  \toprule
rank  &          & 10.82  & 8.18  & 4.18  & 5.91  & 9.82  & 9.91  & 11.09  & 5.18  & 6.91  & 1.00  & 1.00  & 1.00  \\
  \toprule

  \multirow{11}{*}{Coherence}
          & 3     & -593.96  & -427.39  & -538.16  & -678.43  & -613.89  & -613.89  & -470.37  & -419.13  & -952.05  & -336.00  & -335.75  & -341.63  \\
          & 4     & -573.30  & -510.27  & -562.86  & -660.56  & -592.47  & -592.47  & -447.67  & -430.83  & -888.18  & -358.83  & -350.48  & -378.10  \\
          & 5     & -545.48  & -509.78  & -544.17  & -634.29  & -610.96  & -610.96  & -459.79  & -406.99  & -803.90  & -377.58  & -370.10  & -381.37  \\
          & 6     & -536.32  & -546.04  & -554.52  & -626.23  & -642.78  & -642.78  & -466.89  & -428.79  & -831.70  & -367.09  & -364.30  & -378.32  \\
          & 7     & -518.56  & -543.56  & -560.76  & -597.02  & -646.05  & -646.05  & -483.75  & -397.79  & -731.31  & -396.75  & -382.47  & -401.49  \\
          & 8     & -519.33  & -565.30  & -555.45  & -594.85  & -657.72  & -657.72  & -477.15  & -445.76  & -704.81  & -424.66  & -399.95  & -438.13  \\
          & 9     & -518.04  & -570.69  & -566.21  & -594.83  & -655.35  & -655.35  & -469.70  & -418.12  & -755.24  & -415.77  & -394.24  & -439.97  \\
          & 10    & -518.91  & -574.42  & -573.86  & -597.61  & -668.08  & -668.08  & -508.05  & -422.32  & -715.69  & -436.64  & -414.76  & -446.00  \\
          & 15    & -507.35  & -617.88  & -624.20  & -579.34  & -660.27  & -660.27  & -493.83  & -433.01  & -676.80  & -519.91  & -457.45  & -523.13  \\
          & 20    & -557.22  & -642.49  & -660.17  & -616.12  & -679.49  & -679.49  & -497.80  & -458.33  & -627.00  & -549.08  & -478.88  & -562.01  \\
          & 25    & -598.00  & -666.08  & -694.07  & -635.02  & -686.57  & -686.57  & -517.31  & -469.48  & -588.06  & -572.57  & -501.95  & -589.15  \\
  \toprule
rank  &          &    6.36  & 7.36  & 8.09  & 9.09  & 9.82  & 9.82  & 4.55  & 2.82  & 11.00  & 2.73  & 1.27  & 4.09  \\
  \toprule

  \multirow{11}{*}{SimCount}
          & 3     & 216.60  & 2.78  & 419.04  & 24.44  & 16.10  & 16.10  & 22.44  & 4.08  & 50.08  & 0.42  & 0.38  & 0.72  \\
          & 4     & 216.52  & 5.26  & 308.62  & 25.32  & 24.00  & 24.00  & 34.94  & 2.22  & 54.12  & 0.94  & 0.98  & 1.26  \\
          & 5     & 209.04  & 8.02  & 282.56  & 24.74  & 29.36  & 29.36  & 57.68  & 4.94  & 112.22  & 1.26  & 1.08  & 1.48  \\
          & 6     & 195.50  & 11.90  & 225.34  & 23.12  & 44.14  & 44.14  & 68.54  & 6.62  & 113.76  & 2.08  & 1.92  & 2.68  \\
          & 7     & 176.74  & 16.06  & 204.44  & 25.40  & 53.52  & 53.52  & 95.46  & 4.48  & 191.38  & 3.16  & 2.98  & 3.00  \\
          & 8     & 160.20  & 21.12  & 198.92  & 23.94  & 58.76  & 58.76  & 132.42  & 8.84  & 192.30  & 5.28  & 4.78  & 6.58  \\
          & 9     & 161.46  & 25.46  & 163.18  & 24.34  & 74.48  & 74.48  & 159.44  & 9.92  & 285.06  & 6.98  & 6.32  & 8.06  \\
          & 10    & 146.84  & 30.48  & 139.46  & 23.34  & 74.78  & 74.78  & 182.96  & 13.46  & 287.76  & 8.00  & 7.36  & 7.96  \\
          & 15    & 91.82  & 65.08  & 80.14  & 23.26  & 189.44  & 189.44  & 481.58  & 40.78  & 690.20  & 25.66  & 23.40  & 22.90  \\
          & 20    & 70.60  & 104.82  & 49.32  & 20.76  & 271.50  & 271.50  & 712.50  & 79.70  & 1056.20  & 50.22  & 44.90  & 43.64  \\
          & 25    & 52.18  & 147.22  & 33.20  & 22.78  & 450.14  & 450.14  & 936.52  & 132.34  & 1741.28  & 72.52  & 66.32  & 57.28  \\
  \toprule
 rank &          &    9.18  & 5.73  & 9.55  & 5.00  & 7.36  & 8.36  & 9.64  & 4.73  & 11.09  & 2.91  & 1.82  & 2.64  \\
  \toprule
\end{tabular}}
\end{table}

Table \ref{tab2} shows the results on the TDT2 corpus.
From the table, we see that the DNMF variants obtain the best performance in terms of clustering accuracy and similarity count, particularly when the number of topics is below 10. bDNMF and sDNMF outperform the comparison algorithms in terms of all three evaluation metrics, which demonstrates the advantage of the DNMF framework further.
Although LapPLSI yields competitive clustering accuracy with DNMF, its performance in coherence and similarity count is significantly lower than DNMF.
Although AnchorFree reaches a higher coherence rank than cDNMF, its similarity count scores are much higher than cDNMF.

\begin{table}[t]
\centering
\caption{Performance of the comparison algorithms on Reuters-21578.}\label{tab3}
\vspace{-0.3cm}
\tiny
\setlength{\tabcolsep}{0.1mm}{
\begin{tabular}{ccccccccccccccc}
  \toprule
   %after \\: \hline or \cline{col1-col2} \cline{col3-col4} ...
  & \#topics&PLSI&LDA&LapPLSI&LTM&SPA&SNPA&XRAY&AnchorFree&DFPA&bDNMF&sDNMF&cDNMF \\
   \toprule
   \multirow{11}{*}{ACC}
          & 3     & 0.6012  & 0.6269  & 0.7797  & 0.7445  & 0.7853  & 0.7854  & 0.7263  & 0.7904  & 0.7155  & 0.8591  & 0.8591  & 0.8591  \\
          & 4     & 0.5253  & 0.5691  & 0.6966  & 0.6870  & 0.7291  & 0.7306  & 0.6782  & 0.7257  & 0.6587  & 0.7745  & 0.7745  & 0.7745  \\
          & 5     & 0.4671  & 0.5290  & 0.6614  & 0.6198  & 0.6943  & 0.6986  & 0.6716  & 0.6480  & 0.6171  & 0.7160  & 0.7160  & 0.7160  \\
          & 6     & 0.4648  & 0.5140  & 0.6165  & 0.5844  & 0.6452  & 0.6392  & 0.6267  & 0.6449  & 0.6169  & 0.6803  & 0.6803  & 0.6803  \\
          & 7     & 0.4182  & 0.4628  & 0.6122  & 0.5820  & 0.6125  & 0.6105  & 0.6253  & 0.6472  & 0.5524  & 0.6948  & 0.6948  & 0.6948  \\
          & 8     & 0.4049  & 0.4442  & 0.6067  & 0.5663  & 0.5818  & 0.5792  & 0.5736  & 0.6133  & 0.5504  & 0.6474  & 0.6474  & 0.6474  \\
          & 9     & 0.3708  & 0.4064  & 0.5914  & 0.5490  & 0.5883  & 0.5832  & 0.5433  & 0.5886  & 0.5089  & 0.6244  & 0.6244  & 0.6244  \\
          & 10    & 0.3765  & 0.4150  & 0.5628  & 0.5279  & 0.5642  & 0.5612  & 0.5319  & 0.5822  & 0.5386  & 0.6110  & 0.6110  & 0.6110  \\
          & 15    & 0.3278  & 0.3545  & 0.4417  & 0.4210  & 0.4736  & 0.4694  & 0.4411  & 0.5198  & 0.5082  & 0.5189  & 0.5189  & 0.5189  \\
          & 20    & 0.3371  & 0.3331  & 0.4083  & 0.3624  & 0.4593  & 0.4610  & 0.4358  & 0.5294  & 0.4585  & 0.4899  & 0.4899  & 0.4899  \\
          & 25    & 0.3601  & 0.3373  & 0.3615  & 0.3553  & 0.4351  & 0.4367  & 0.4240  & 0.4684  & 0.4218  & 0.4702  & 0.4702  & 0.4702  \\
  \toprule
  rank&          & 11.73  & 11.18  & 7.09  & 9.27  & 5.55  & 5.91  & 7.82  & 4.18  & 8.73  & 1.18  & 1.18  & 1.18  \\
  \toprule

  \multirow{11}{*}{Coherence}
          & 3     & -769.73  & -674.14  & -852.48  & -943.56  & -613.89  & -613.89  & -470.37  & -827.28  & -996.30  & -760.47  & -647.97  & -759.97  \\
          & 4     & -786.89  & -677.18  & -813.51  & -952.28  & -592.47  & -592.47  & -447.67  & -743.97  & -1017.05  & -719.27  & -609.09  & -726.13  \\
          & 5     & -785.65  & -686.31  & -838.40  & -942.68  & -610.96  & -610.96  & -459.79  & -771.63  & -1045.24  & -752.04  & -620.00  & -746.92  \\
          & 6     & -805.24  & -715.15  & -854.96  & -947.58  & -642.78  & -642.78  & -466.89  & -699.50  & -1046.70  & -764.52  & -639.33  & -766.60  \\
          & 7     & -806.03  & -705.90  & -804.15  & -940.69  & -646.05  & -646.05  & -483.75  & -684.54  & -982.35  & -784.11  & -654.06  & -793.31  \\
          & 8     & -789.16  & -762.92  & -860.11  & -967.17  & -657.72  & -657.72  & -477.15  & -722.67  & -901.23  & -825.28  & -674.71  & -826.15  \\
          & 9     & -793.27  & -776.83  & -841.44  & -975.13  & -655.35  & -655.35  & -469.70  & -710.96  & -858.33  & -832.10  & -699.29  & -851.84  \\
          & 10    & -790.22  & -776.46  & -831.18  & -945.31  & -668.08  & -668.08  & -508.05  & -703.61  & -911.27  & -808.08  & -672.49  & -828.59  \\
          & 15    & -837.89  & -847.72  & -848.49  & -959.15  & -660.27  & -660.27  & -493.83  & -685.33  & -950.77  & -807.32  & -669.67  & -859.50  \\
          & 20    & -831.64  & -903.37  & -845.18  & -955.92  & -679.49  & -679.49  & -497.80  & -678.43  & -911.14  & -846.06  & -709.83  & -916.70  \\
          & 25    & -827.83  & -902.68  & -831.65  & -932.96  & -686.57  & -686.57  & -517.31  & -667.75  & -905.43  & -851.30  & -708.87  & -969.11  \\
  \toprule
  rank&           &    7.73  & 6.45  & 9.18  & 11.45  & 2.27  & 3.27  & 1.00  & 5.36  & 11.27  & 7.27  & 4.00  & 8.73  \\
  \toprule

  \multirow{11}{*}{SimCount}
          & 3     & 3.20  & 230.84  & 765.22  & 45.12  & 16.10  & 16.10  & 22.44  & 7.26  & 49.70  & 3.60  & 3.10  & 2.96  \\
          & 4     & 6.46  & 218.28  & 759.62  & 39.60  & 24.00  & 24.00  & 34.94  & 12.00  & 51.22  & 7.84  & 5.98  & 6.56  \\
          & 5     & 9.32  & 223.40  & 694.86  & 38.76  & 29.36  & 29.36  & 57.68  & 16.90  & 104.92  & 11.16  & 9.28  & 9.60  \\
          & 6     & 12.48  & 228.04  & 661.24  & 40.58  & 44.14  & 44.14  & 68.54  & 19.62  & 109.74  & 16.36  & 13.74  & 13.62  \\
          & 7     & 21.22  & 221.34  & 721.32  & 41.66  & 53.52  & 53.52  & 95.46  & 33.40  & 190.46  & 22.00  & 17.68  & 18.08  \\
          & 8     & 24.60  & 277.82  & 653.54  & 46.96  & 58.76  & 58.76  & 132.42  & 61.60  & 189.82  & 34.20  & 28.68  & 27.92  \\
          & 9     & 33.56  & 332.46  & 628.38  & 55.42  & 74.48  & 74.48  & 159.44  & 69.76  & 289.18  & 43.80  & 35.90  & 33.80  \\
          & 10    & 39.68  & 276.18  & 607.02  & 51.18  & 74.78  & 74.78  & 182.96  & 86.00  & 287.54  & 48.60  & 41.02  & 38.22  \\
          & 15    & 76.02  & 209.54  & 658.12  & 46.20  & 189.44  & 189.44  & 481.58  & 126.70  & 658.42  & 137.28  & 121.78  & 90.40  \\
          & 20    & 130.54  & 222.64  & 637.50  & 49.44  & 271.50  & 271.50  & 712.50  & 226.02  & 1000.46  & 227.28  & 198.32  & 137.78  \\
          & 25    & 194.98  & 202.88  & 615.52  & 48.94  & 450.14  & 450.14  & 936.52  & 339.68  & 1607.34  & 296.32  & 251.80  & 148.94  \\
  \toprule
  rank&          &    2.00  & 9.55  & 11.55  & 5.09  & 6.73  & 6.73  & 9.27  & 5.91  & 10.64  & 4.64  & 2.73  & 2.18  \\
  \toprule
\end{tabular}}
\end{table}

Table \ref{tab3} shows the performance of the comparison methods on the Reuters-21578 corpus. From the table, we see that the DNMF variants reach the highest clustering accuracy. Although it seems that they do not reach the top performance in terms of coherence and similarity count soly, they balance the coherence and similarity count which evaluate two contradict aspects of a topic model. For example, although the coherence of sDNMF ranks behind XRAY and SPA, its similarity count is much higher than the latter. Although the similarity count of sDNMF ranks behind PLSI, its coherence is higher than PLSI as well. If we average the coherence and similarity count ranking lists, it is clear that sDNMF performs the best.

 \begin{table}[t]
\centering
\tiny
\caption{Average ranks of the comparison methods on all three data sets. The ``Coh.+SimCount'' ranking list is the average of the lists in coherence and similarity count. The ``overall'' ranking list is the average of the lists in the three evaluation metrics.}\label{tab:avg_list}
\vspace{-0.3cm}
\setlength{\tabcolsep}{0.4mm}{
\begin{tabular}{lccccccccccccc}
  \toprule
   %after \\: \hline or \cline{col1-col2} \cline{col3-col4} ...
  &PLSI&LDA&LapPLSI&LTM&SPA&SNPA&XRAY&AnchorFree&DFPA&bDNMF&sDNMF&cDNMF \\
   \toprule
  ACC  &    10.58  & 8.75  & 6.42  & 6.43  & 8.62  & 8.71  & 10.30  & 5.05  & 6.95  & 1.06  & 1.06  & 1.06  \\
    Coherence&    8.33  & 6.27  & 8.76  & 9.65  & 5.60  & 5.60  & 5.85  & 3.19  & 8.02  & 5.27  & 3.12  & 7.34  \\
  SimCount&     7.36  & 7.76  & 10.70  & 3.76  & 6.06  & 6.06  & 8.37  & 6.75  & 10.64  & 4.42  & 2.52  & 2.61  \\
  \hline
  Coh.+SimCount&    7.85&	7.02&	9.73&	6.71&	5.83&	5.83&	7.11&	4.97&	9.33&	4.85&	2.82&	4.98\\
  \textbf{Overall}&    8.76  & 7.59  & 8.63  & 6.61  & 6.76  & 6.79  & 8.17  & 5.00  & 8.54  & 3.58  & 2.23  & 3.67  \\
  \toprule
\end{tabular}}
\end{table}

We summarize the ranking lists of the comparison methods on the three corpora in Table \ref{tab:avg_list}. From the overall ranking list in the table, we see that (i) the DNMF variants perform the best generally, followed by AnchorFree and LTM, and (ii) sDNMF performs the best among the three DNMF variants. If we take a look at the average ranking list over coherence and similarity count, we find that sDNMF reach the top performance, while bDNMF and sDNMF behave similarly with AnchorFree---a recent advanced NMF method that avoids the anchor-word assumption.

\begin{figure}[t]
\vspace{-0.7cm} %调整图片与上文的垂直距离
\setlength{\abovecaptionskip}{0cm} %缩小caption和图像之间的距离
\setlength{\belowcaptionskip}{-0cm} %缩小caption和下方文字的距离
\centering
\subfloat[]{\includegraphics[width=1.75in]{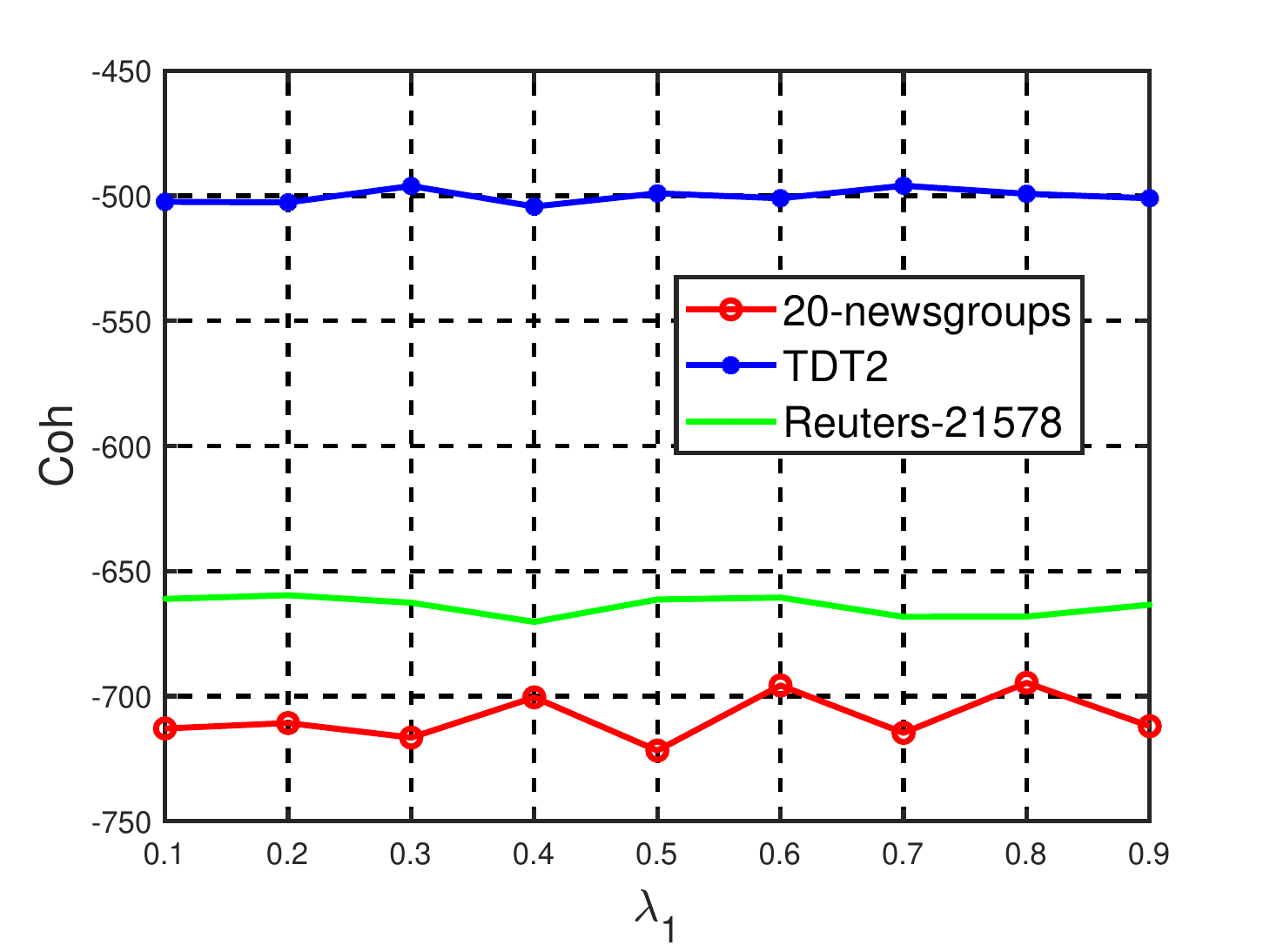}%
\label{lambda_1_Coh}}
\hfil
\subfloat[]{\includegraphics[width=1.75in]{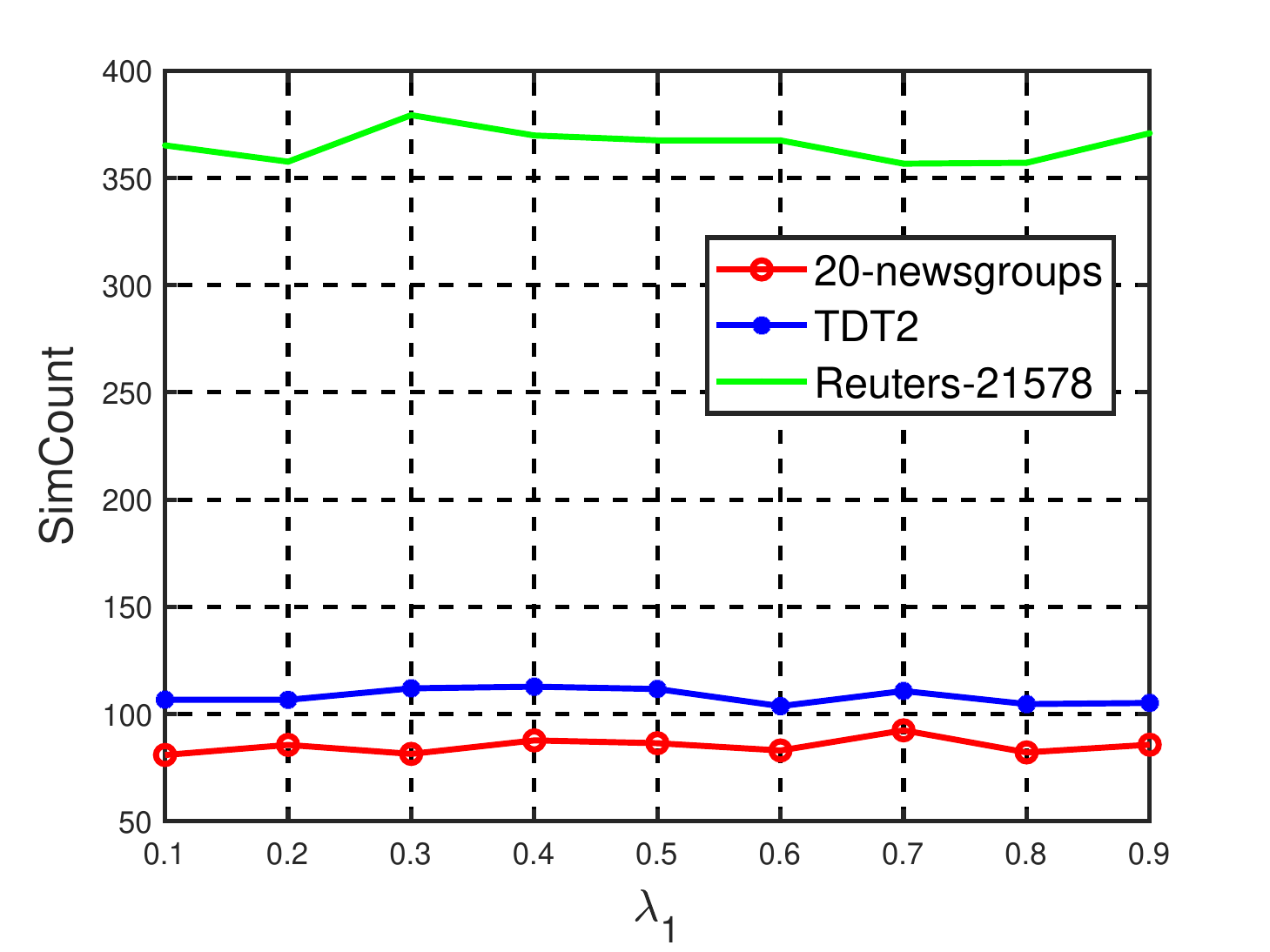}%
\label{lambda_1_SC}}
\caption{Performance of cDNMF with respect to hyperparameter $\lambda_1$ in terms of coherence and similarity count.}
\label{lambda_1}
\end{figure}

\begin{figure}[!t]
\vspace{-0.7cm} %调整图片与上文的垂直距离
\setlength{\abovecaptionskip}{0cm} %缩小caption和图像之间的距离
\setlength{\belowcaptionskip}{-0cm} %缩小caption和下方文字的距离
\centering
\subfloat[]{\includegraphics[width=1.75in]{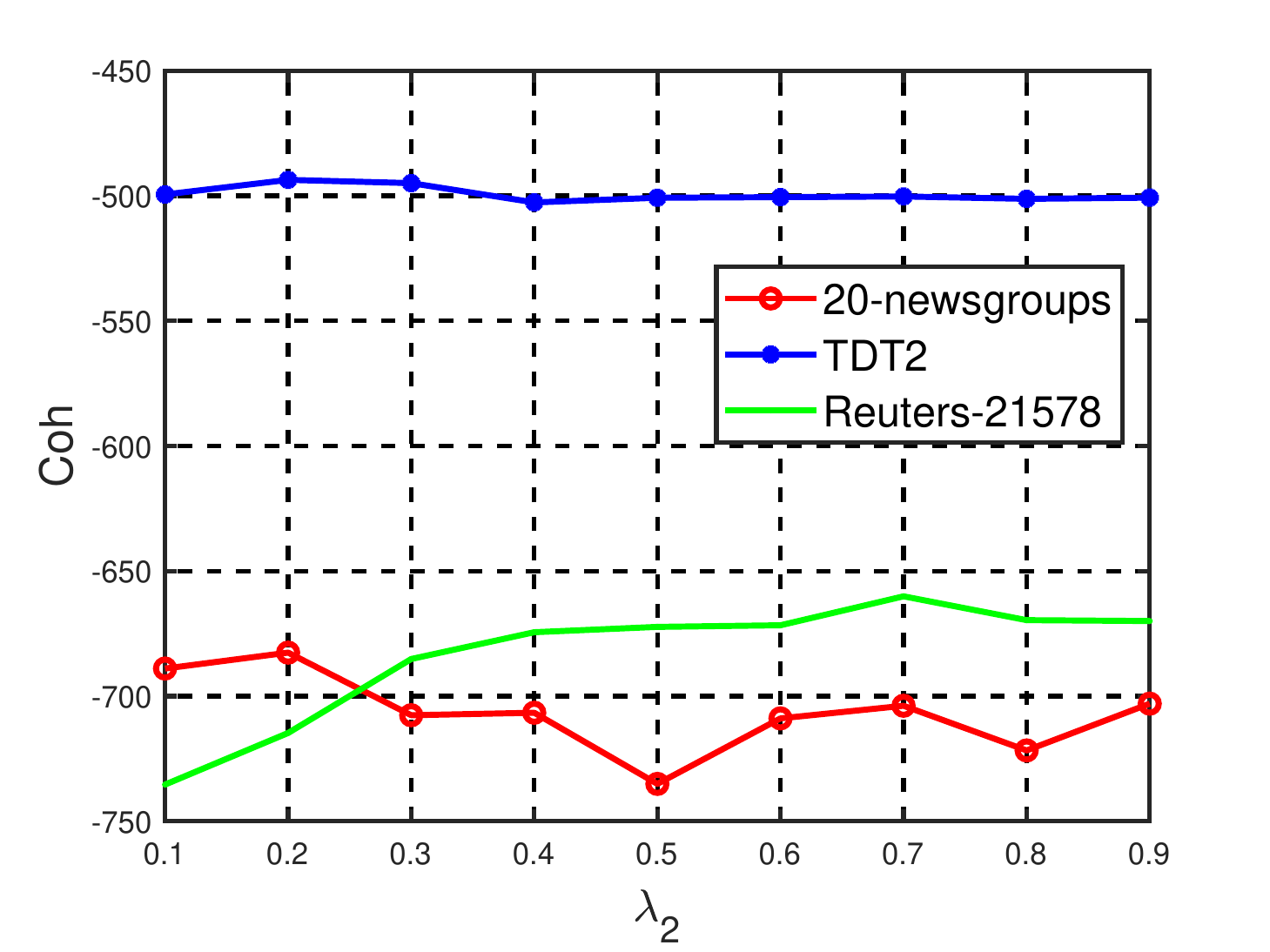}%
\label{lambda_2_Coh}}
\hfil
\subfloat[]{\includegraphics[width=1.75in]{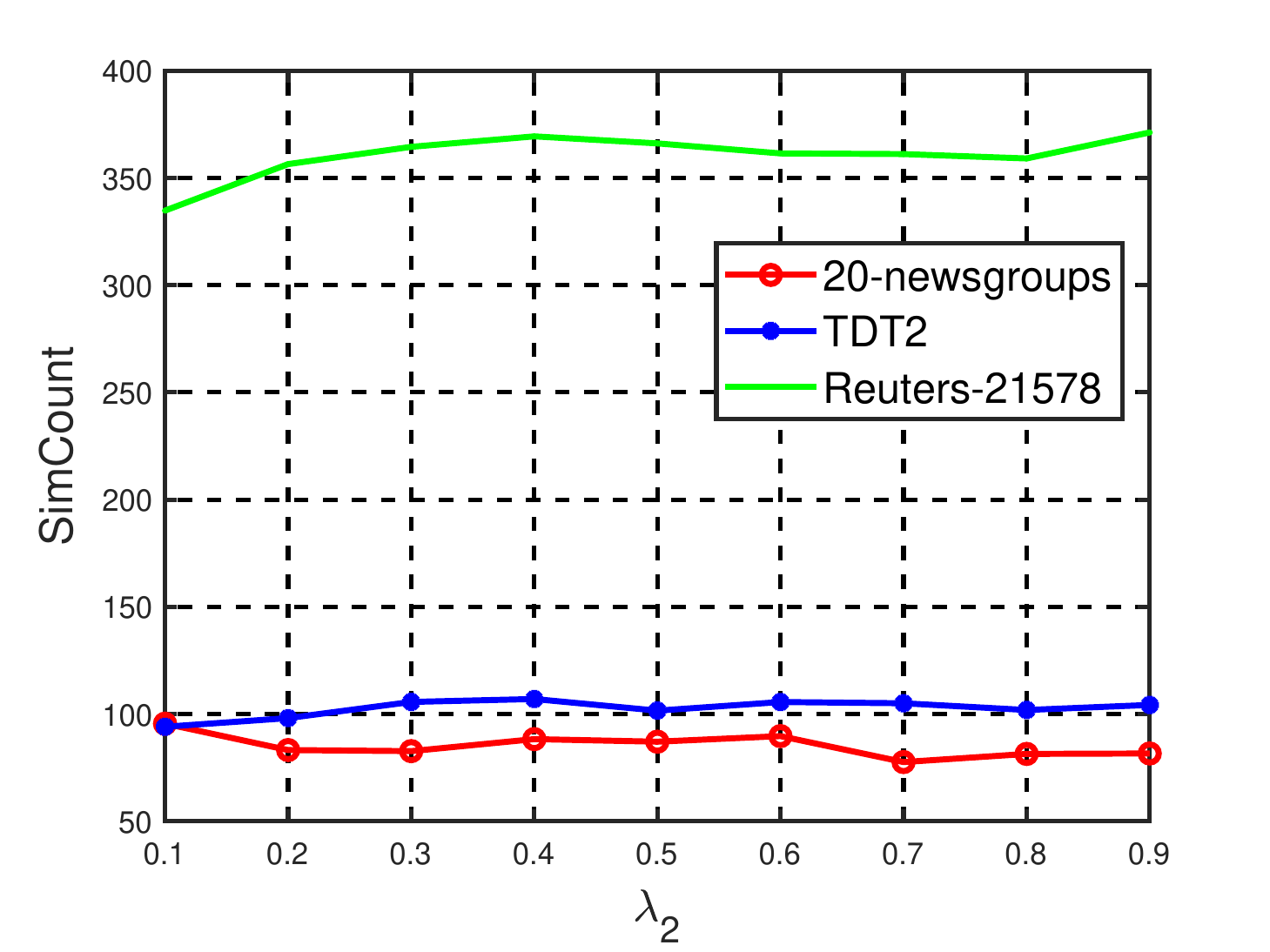}%
\label{lambda_2_SC}}
\caption{Performance of cDNMF with respect to hyperparameter $\lambda_2$.}
\label{lambda_2}
\end{figure}

\begin{figure}[!t]
\vspace{-0.7cm} %调整图片与上文的垂直距离
\setlength{\abovecaptionskip}{0cm} %缩小caption和图像之间的距离
\setlength{\belowcaptionskip}{-0cm} %缩小caption和下方文字的距离
\centering
\subfloat[]{\includegraphics[width=1.75in]{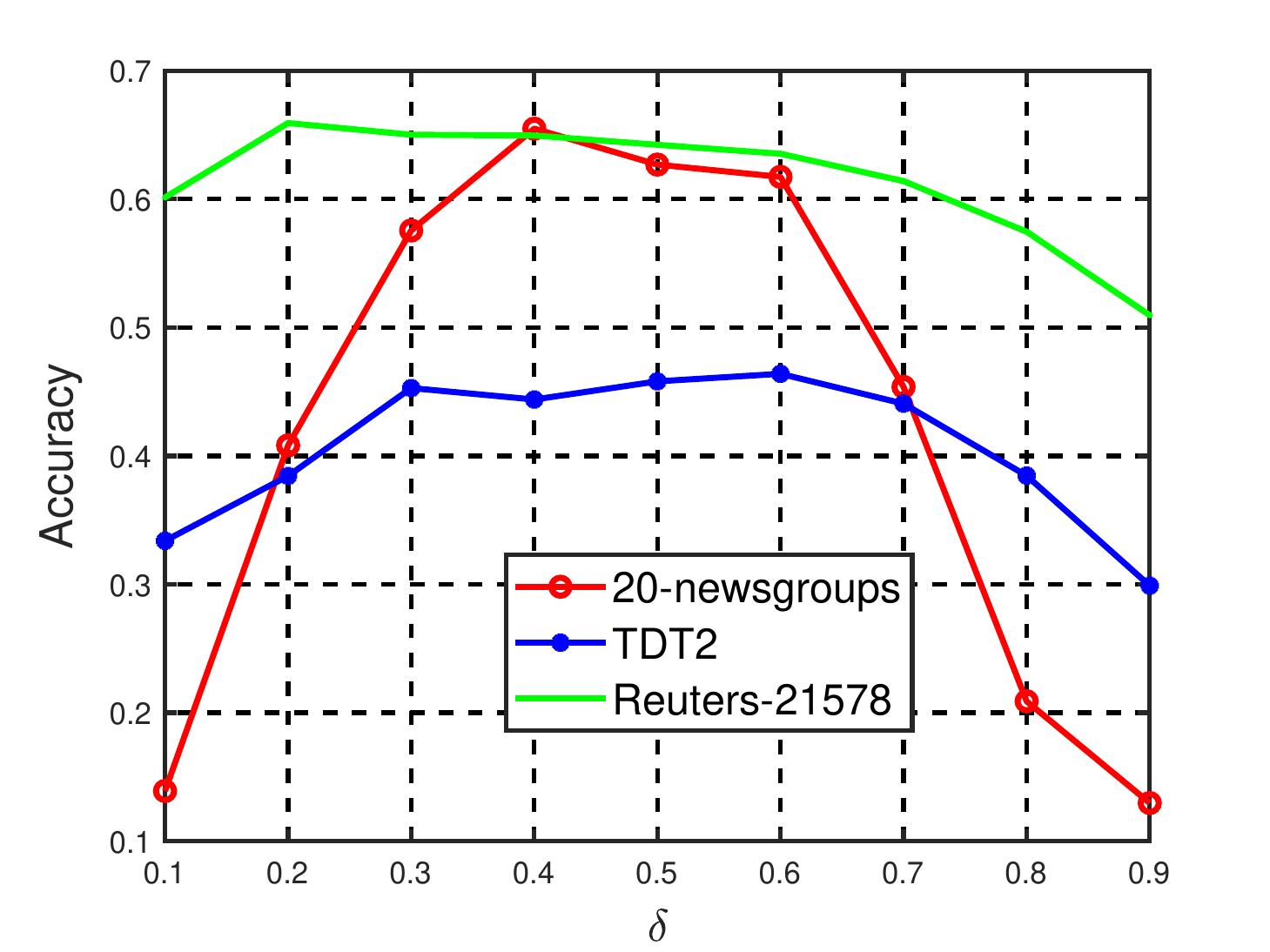}%
\label{ACC_delta}}
\hfil
\subfloat[]{\includegraphics[width=1.75in]{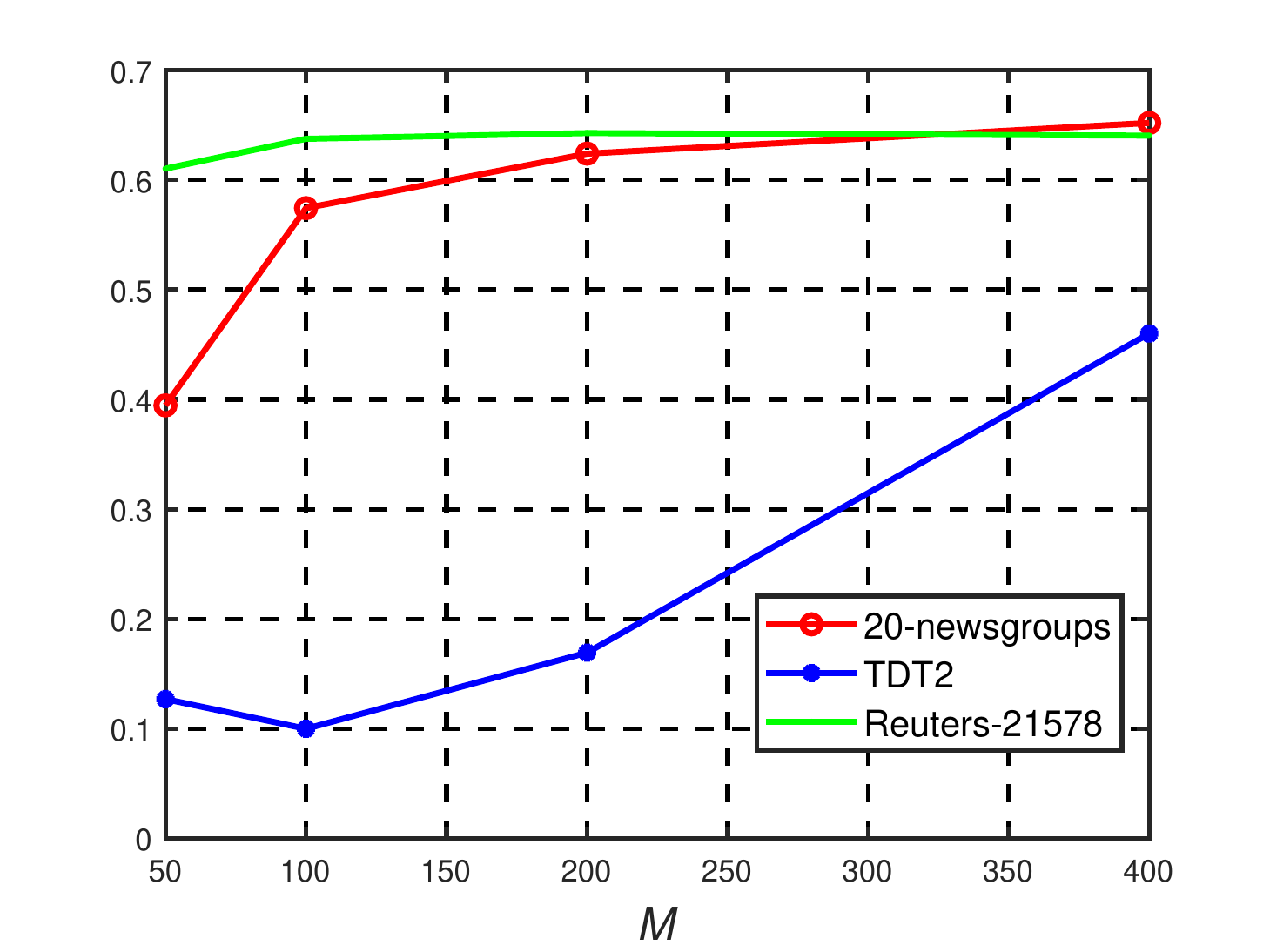}%
\label{ACC_V}}
\caption{Clustering accuracy of DNMF with respect to hyperparameters $\delta$ and $M$.}
\label{ACC}
\end{figure}

\begin{figure}[t]
\vspace{-0.7cm} %调整图片与上文的垂直距离
\setlength{\abovecaptionskip}{0cm} %缩小caption和图像之间的距离
\setlength{\belowcaptionskip}{-0cm} %缩小caption和下方文字的距离
\centering
\subfloat[]{\includegraphics[width=1.75in]{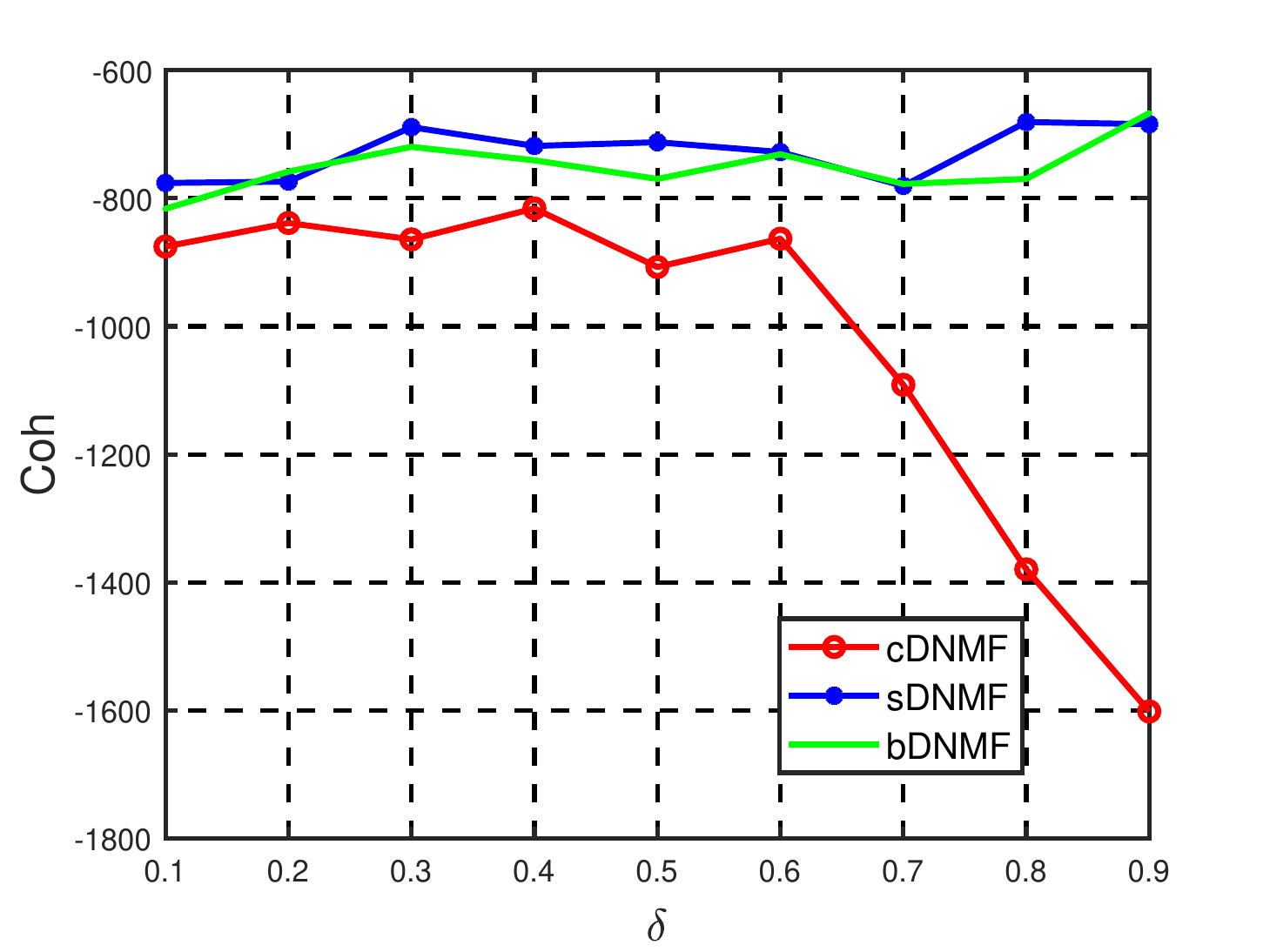}%
\label{20News_delta_Coh}}
\hfil
\subfloat[]{\includegraphics[width=1.75in]{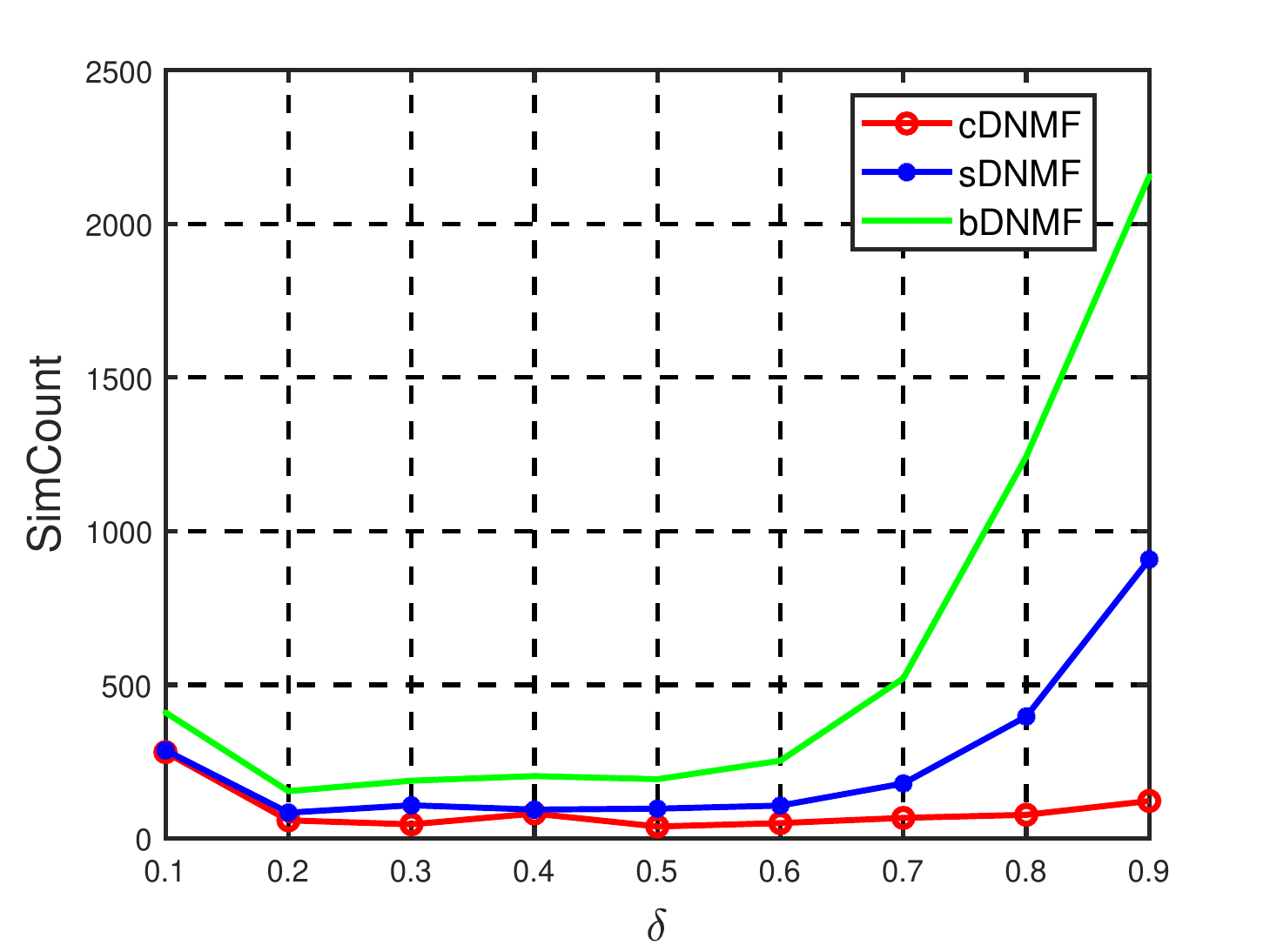}%
\label{20News_delta_SC}}
\caption{Performance of the DNMF variants with respect to hyperparameter $\delta$ on 20-newsgroups in terms of coherence and similarity count.}
\label{20News_delta}
\end{figure}

\begin{figure}[t]
\vspace{-0.7cm} %调整图片与上文的垂直距离
\setlength{\abovecaptionskip}{0cm} %缩小caption和图像之间的距离
\setlength{\belowcaptionskip}{-0cm} %缩小caption和下方文字的距离
\centering
\subfloat[]{\includegraphics[width=1.75in]{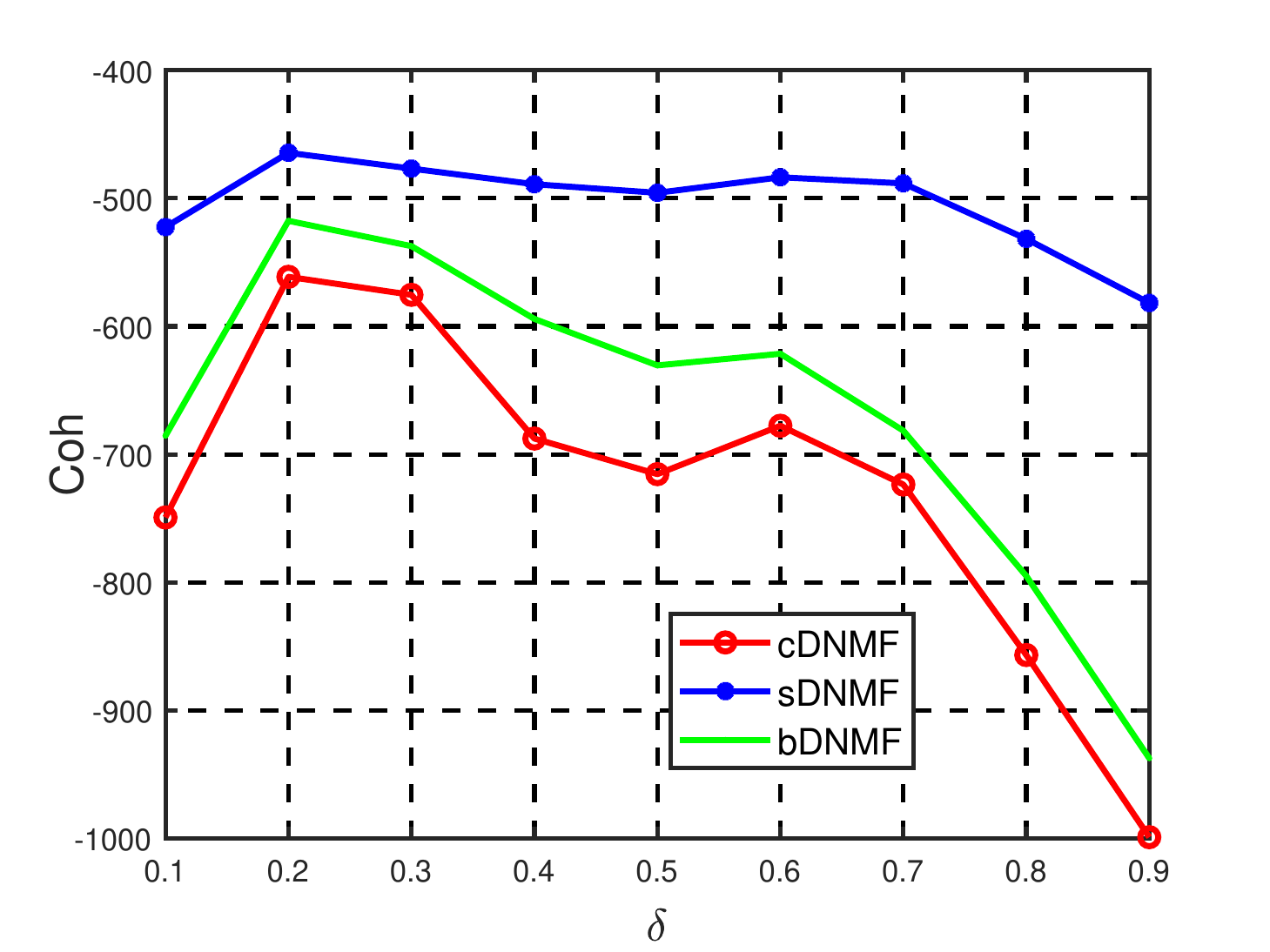}%
\label{TDT2_delta_Coh}}
\hfil
\subfloat[]{\includegraphics[width=1.75in]{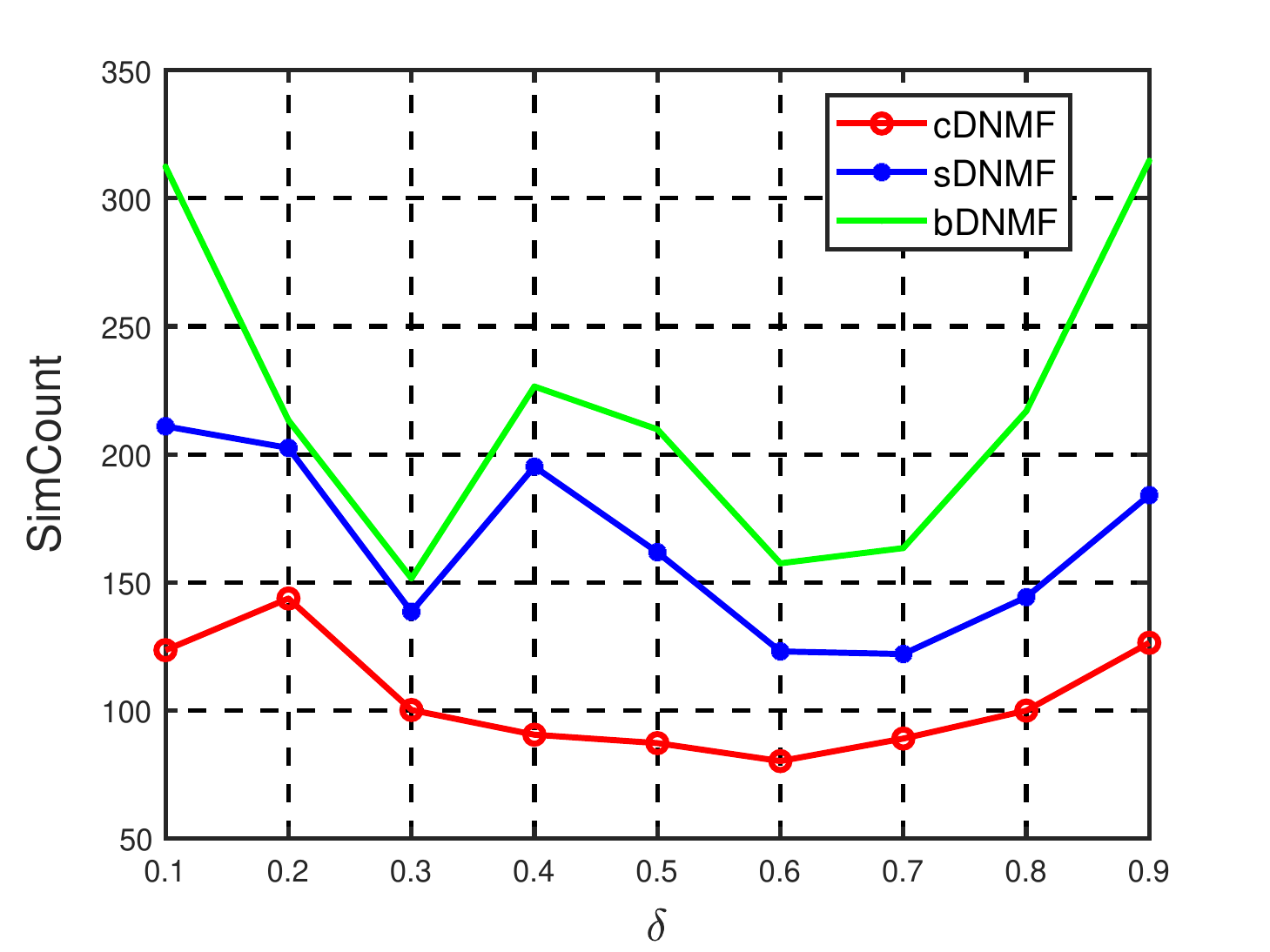}%
\label{TDT2_delta_SC}}
\caption{Performance of the DNMF variants with respect to hyperparameter $\delta$ on TDT2 in terms of coherence and similarity count.}
\label{TDT2_delta}
\end{figure}

\begin{figure}[!t]
\vspace{-0.7cm} %调整图片与上文的垂直距离
\setlength{\abovecaptionskip}{0cm} %缩小caption和图像之间的距离
\setlength{\belowcaptionskip}{-0cm} %缩小caption和下方文字的距离
\centering
\subfloat[]{\includegraphics[width=1.75in]{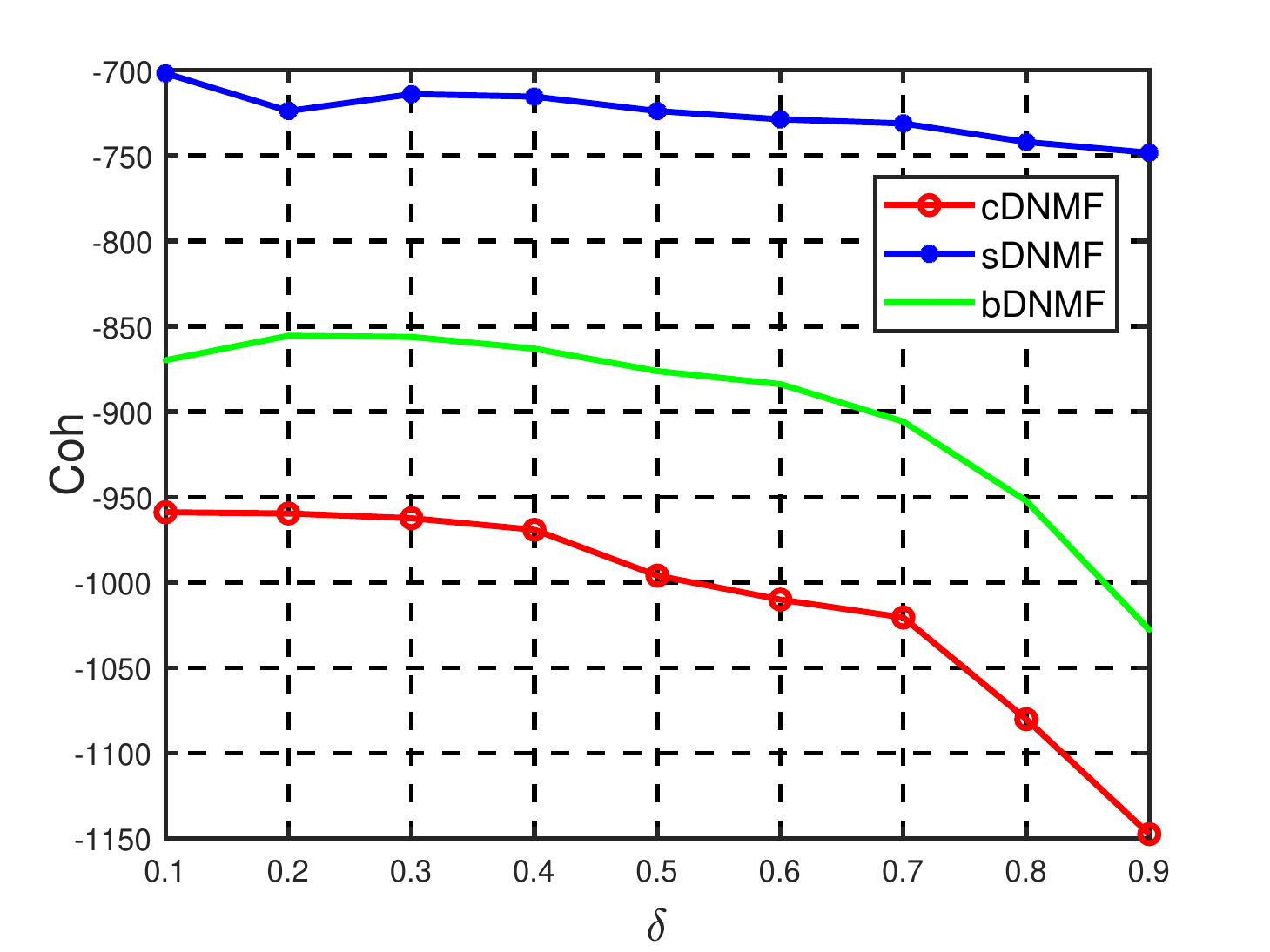}%
\label{Reuters_delta_Coh}}
\hfil
\subfloat[]{\includegraphics[width=1.75in]{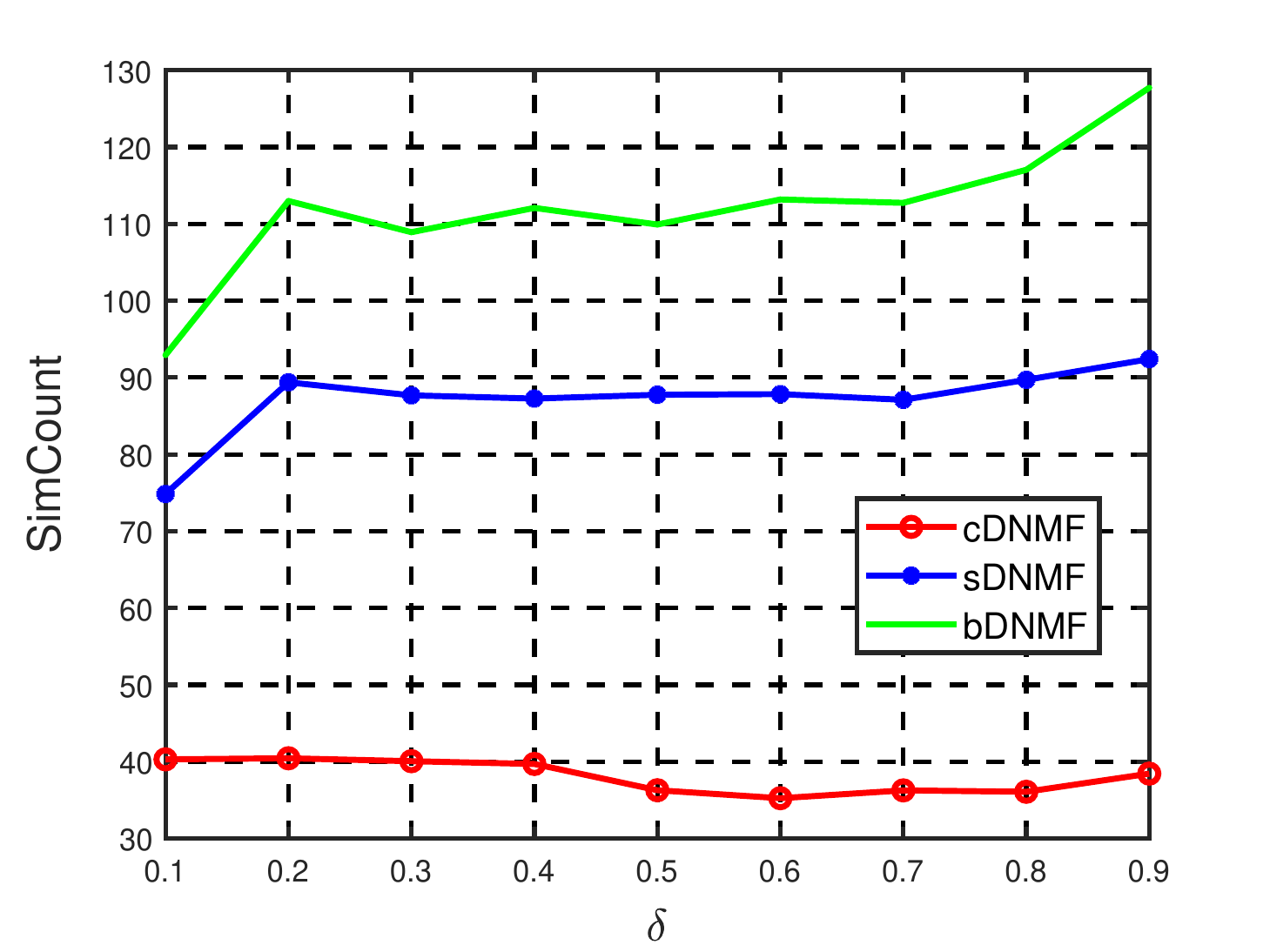}%
\label{Reuters_delta_SC}}
\caption{Performance of the DNMF variants with respect to hyperparameter $\delta$ on Reuters-21578 in terms of coherence and similarity count.}
\label{Reuters_delta}
\end{figure}

\begin{figure}[t]
\vspace{-0.7cm} %调整图片与上文的垂直距离
\setlength{\abovecaptionskip}{0cm} %缩小caption和图像之间的距离
\setlength{\belowcaptionskip}{-0cm} %缩小caption和下方文字的距离
\centering
\subfloat[]{\includegraphics[width=1.75in]{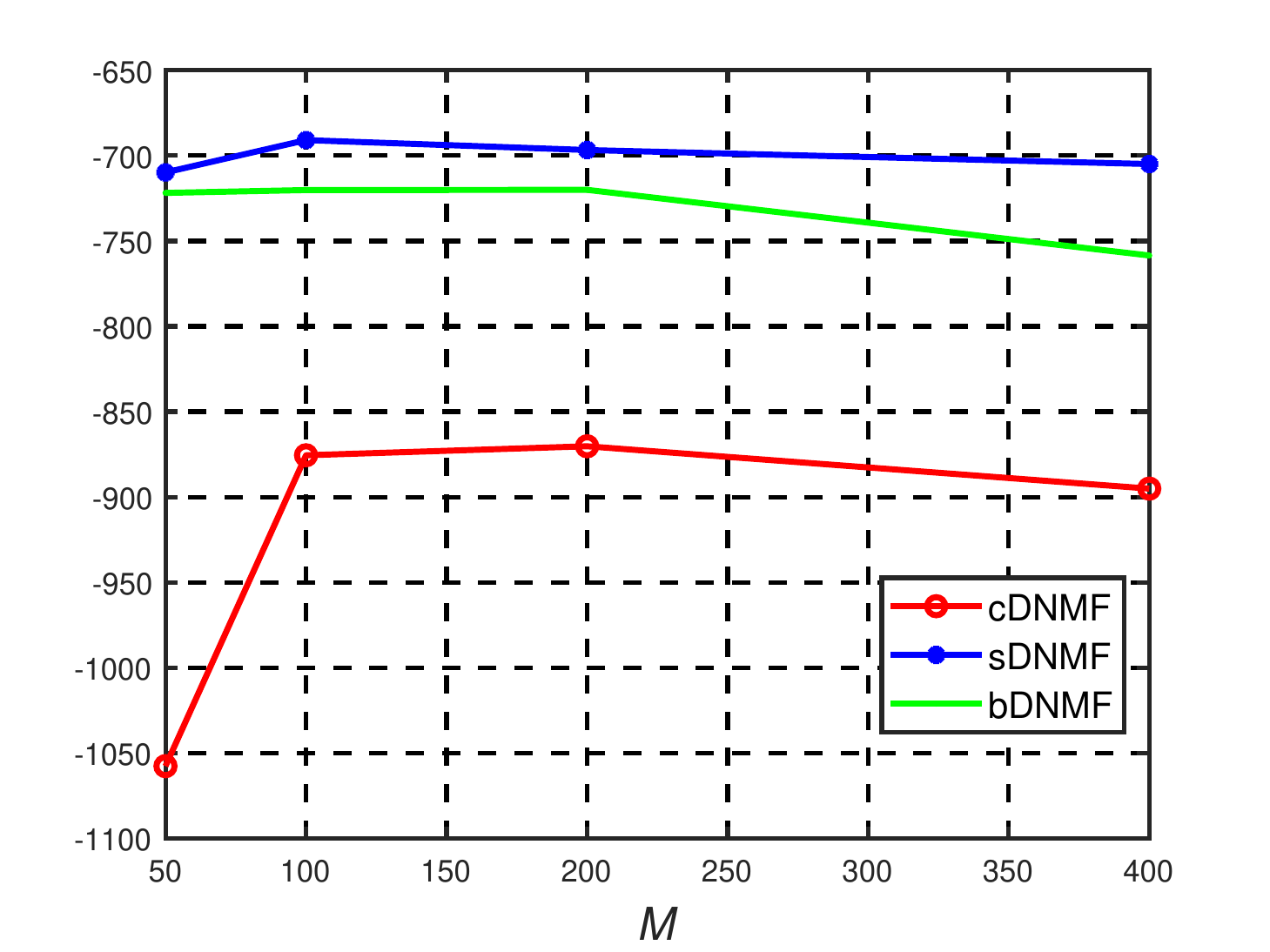}%
\label{20News_V_Coh}}
\hfil
\subfloat[]{\includegraphics[width=1.75in]{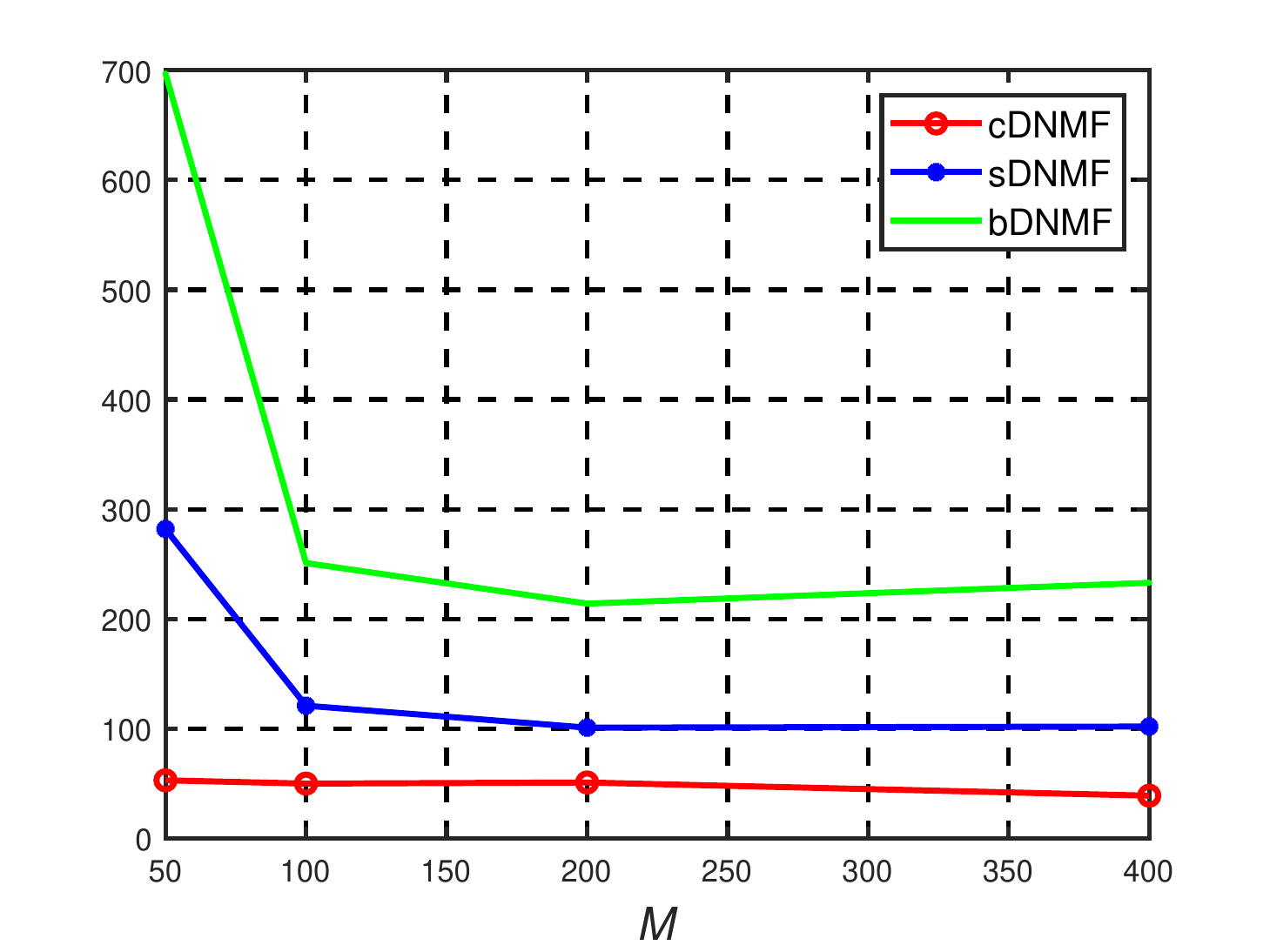}%
\label{20News_V_SC}}
\caption{Performance of the DNMF variants with respect to hyperparameter $M$ on 20-newsgroups in terms of coherence and similarity count.}
\label{20News_V}
\end{figure}

\begin{figure}[t]
\vspace{-0.7cm} %调整图片与上文的垂直距离
\setlength{\abovecaptionskip}{0cm} %缩小caption和图像之间的距离
\setlength{\belowcaptionskip}{-0cm} %缩小caption和下方文字的距离
\centering
\subfloat[]{\includegraphics[width=1.75in]{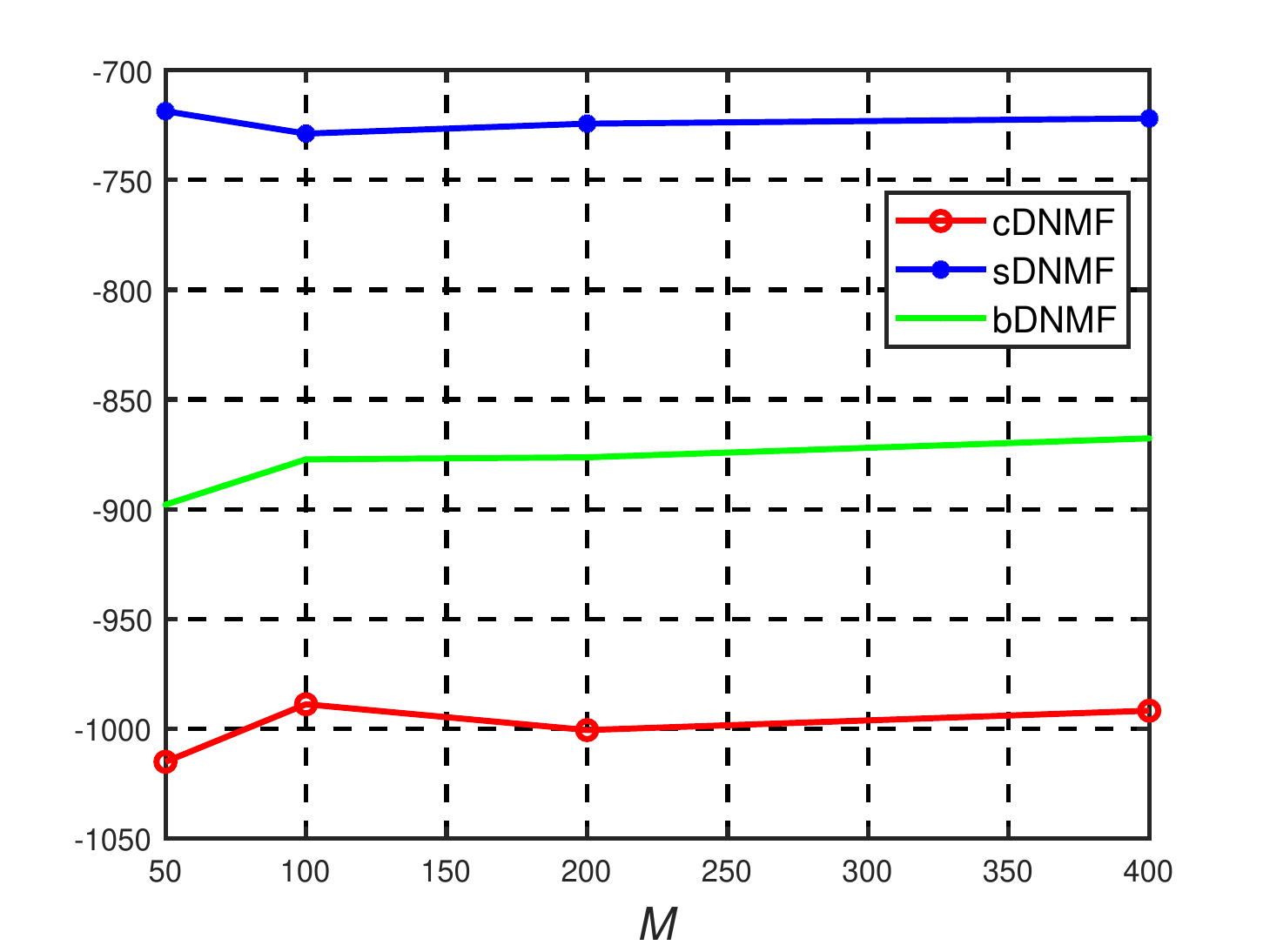}%
\label{TDT2_V_Coh}}
\hfil
\subfloat[]{\includegraphics[width=1.75in]{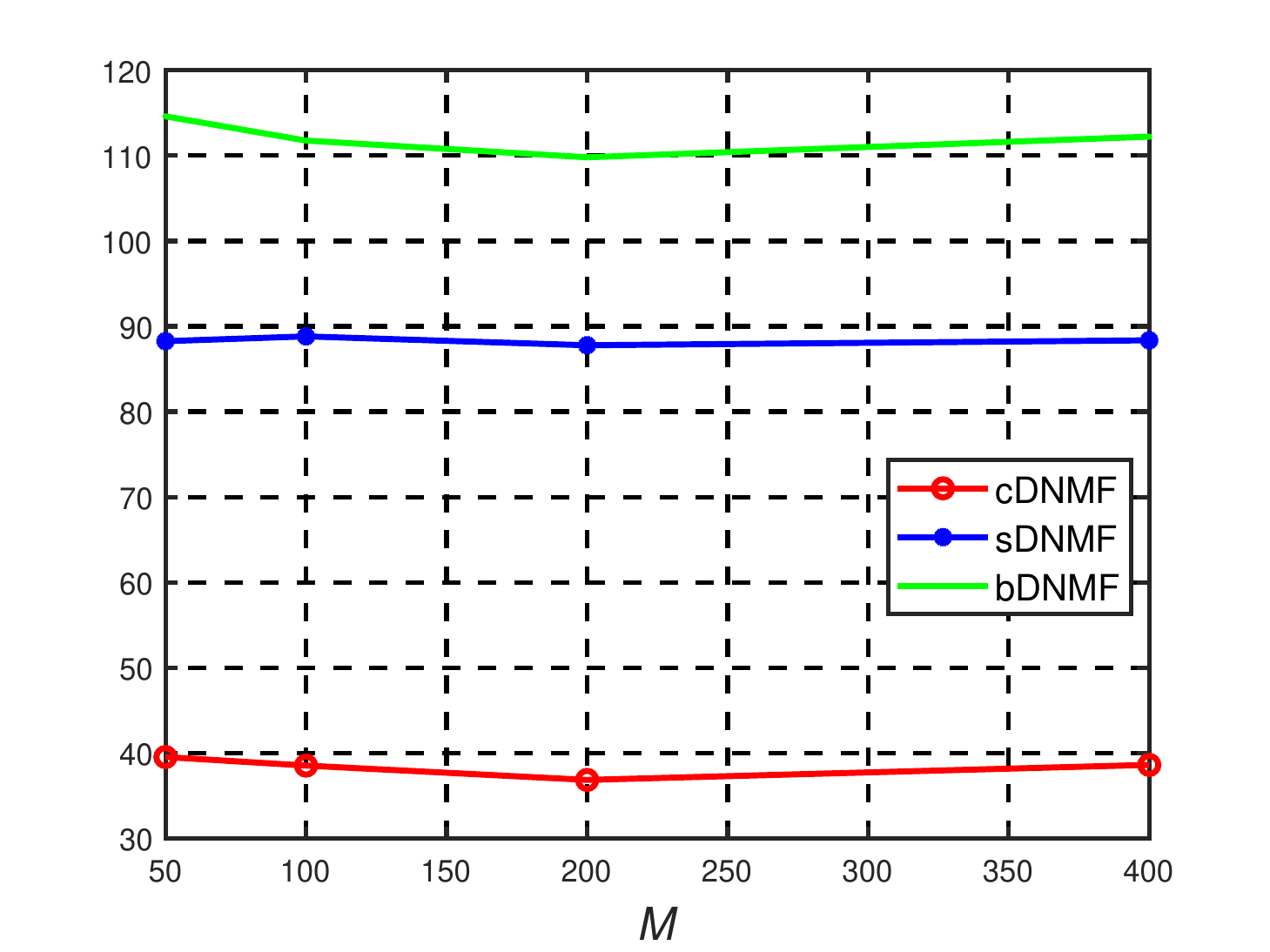}%
\label{TDT2_V_SC}}
\caption{Performance of the DNMF variants with respect to hyperparameter $M$ on TDT2 in terms of coherence and similarity count.}
\label{TDT2_V}
\end{figure}

\begin{figure}[!t]
\vspace{-0.7cm} %调整图片与上文的垂直距离
\setlength{\abovecaptionskip}{0cm} %缩小caption和图像之间的距离
\setlength{\belowcaptionskip}{-0cm} %缩小caption和下方文字的距离
\centering
\subfloat[]{\includegraphics[width=1.75in]{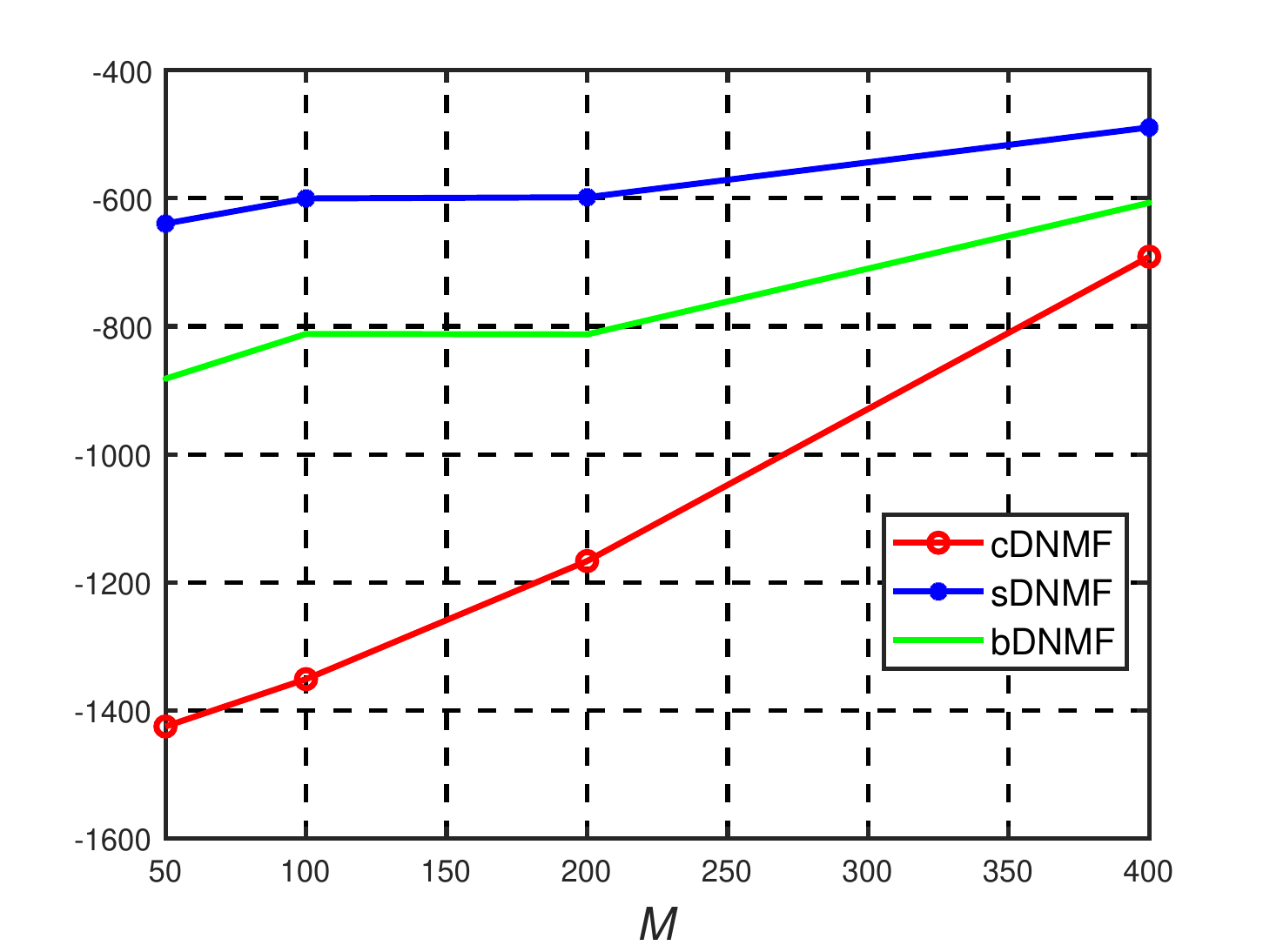}%
\label{Reuters_V_Coh}}
\hfil
\subfloat[]{\includegraphics[width=1.75in]{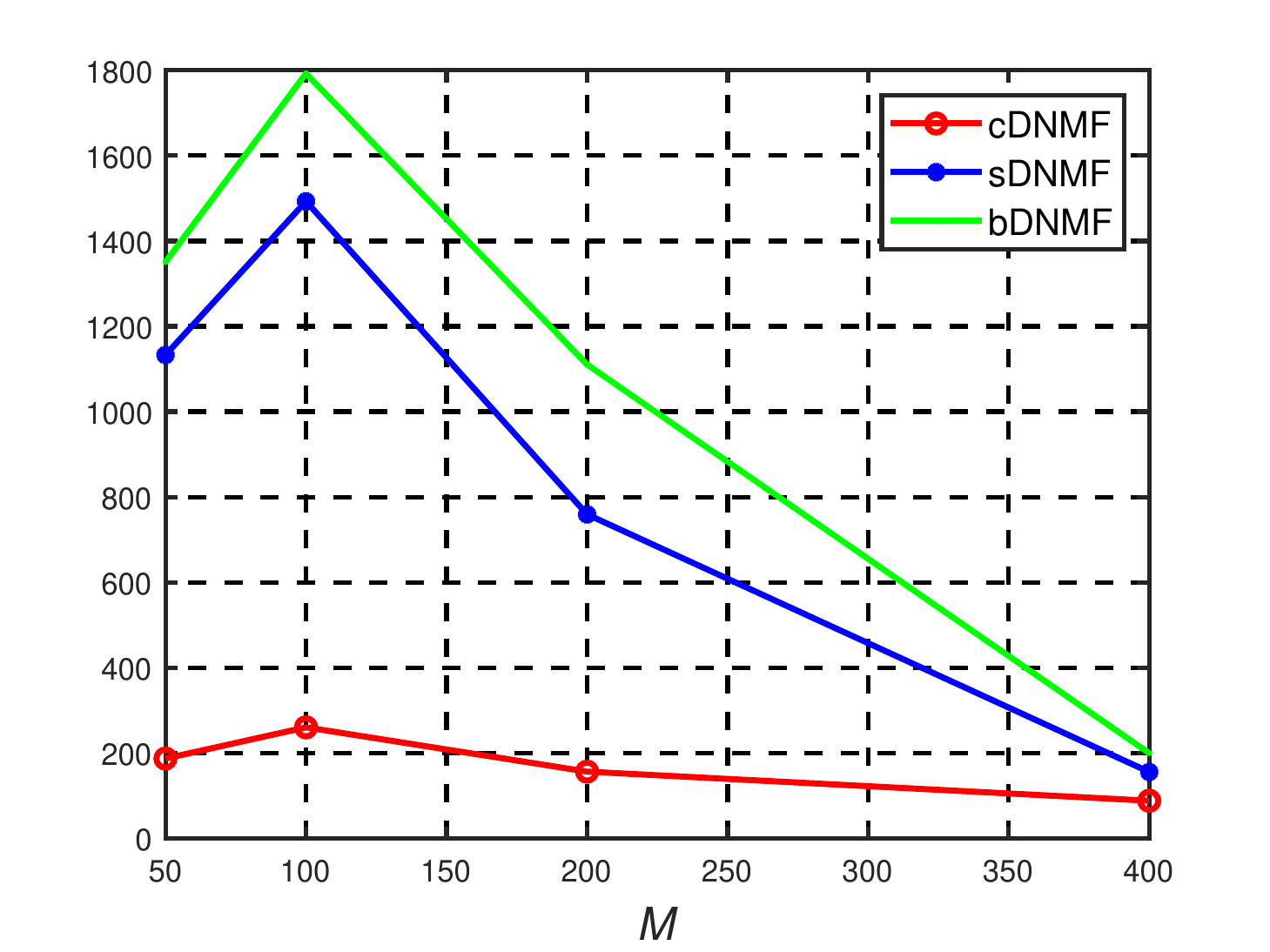}%
\label{Reuters_V_SC}}
\caption{Performance of the DNMF variants with respect to hyperparameter $V$ on Reuters-21578 in terms of coherence and similarity count.}
\label{Reuters_V}
\end{figure}

 \subsection{Effects of the hyperparameters of DNMF}
To study how the hyperparameters of DNMF affect the performance, we searched the hyperparameters in grid. To prevent exhaust search, when we studied a hyperparameter, we fixed the other hyperparameters to their default values.

We first studied the two regularization hyperparameters of cDNMF $\lambda_1$ and $\lambda_2$ in terms of coherence and similarity count by searching the two hyperparametres in grid from 0.1 to 0.9. The results are shown in Figs. \ref{lambda_1} and \ref{lambda_2}. From the figures, we see that cDNMF is in sensitive to the two hyperparameters.

Then, we studied the hyperparameters $\delta$ and $M$ of the deep model in DNMF in terms of all three evaluation metrics, in which $\delta$ is searched from 0.1 to 0.9 and $M$ searched from 10 to 400. Figure \ref{ACC} shows the clustering accuracy of DNMF with respect to the two hyperparameters. Figures \ref{20News_delta} to \ref{Reuters_delta} shows the coherence and similarity count of the DNMF variants with respect to $\delta$ on the three corpora respectively. Figures \ref{20News_V} to \ref{Reuters_V} shows the coherence and similarity count of the DNMF variants with respect to $M$ on the three corpora respectively.
From the figures, we see that although the DNMF variants are sensitive to $\delta$ and $M$, we can clearly find the regulations. For the hyperparameter $\delta$, we observe from Fig. \ref{ACC}a and Figs. \ref{20News_delta} to \ref{Reuters_delta} that, when $\delta$ is set around the default value $0.5$, all DNMF variants approach to the top performance in all cases.

For the hyperparameter $M$, we see from Fig. \ref{ACC}b that enlarging $M$ clearly improves the clustering accuracy of all DNMF variants. From Figs. \ref{20News_V}a, \ref{TDT2_V}a, and \ref{Reuters_V}a, we see that the performance of all DNNF variants is improved generally along with the increase of $M$ in terms of coherence in all cases except that the performance of bDNMF and sDNMF is getting worse on 20-newsgroups. From Figs. \ref{20News_V}b, \ref{TDT2_V}b, and \ref{Reuters_V}b, we see that the similarity count scores of all DNMF variants are getting smaller generally along with the increase of $M$ on 20-newsgroups and TDT2. It is interesting to observe that the similarity count scores of the DNMF variants first get larger and then smaller along with the increase of $M$ on Reuters-21578, with a peak at $M=100$. Nonetheless, the DNMF variants approach to the lowest similarity count scores at $M=400$ in all cases. We can imagine that, when we set $M$ larger than 400, the performance may be further improved with the expense of higher computational complexity. To balance the performance and the computational complexity, it is reasonable to set $M=400$.

\section{Conclusions}
\label{Conclusion}
In this paper, we have proposed a deep NMF topic modeling framework and evaluated its effectiveness with three implementations. To our knowledge, this is the first deep NMF topic modeling framework.
The novelty of DNMF lies in the following aspects.
First, we proposed a novel unsupervised deep NMF framework that is fundamentally different from existing deep learning based topic modeling methods. It takes the unsupervised deep learning as a constraint of NMF. It is a general framework that can incorporate many types of deep models and NMF methods. To evaluate its effectiveness, we implemented three DNMF algorithms, denoted as bDNMF, sDNMF, and cDNMF. bDNMF takes the sparse output of the deep model as the topic-document matrix directly, which formulates bDNMF as a supervised regression problem with a nonnegative constraint on the word-topic matrix. sDNMF takes the output of the deep model as a mask of the topic-document matrix, and solves the NMF problem by the alternative iterative optimization, which relaxes the strong constraint on the topic-document matrix in bDNMF. cDNMF takes the output of the deep model as a regularization, which further relaxes the constraint on the topic-document matrix. To our knowledge, the regularization terms in cDNMF is novel. Finally, we applied multilayer bootstrap networks for document clustering. It reaches the state-of-the-art performance given the high-dimensional sparse TF-IDF statistics of the documents, which further boosts the overall performance of the DNMF implementations. We have conducted an extensive experimental comparison with 9 representative comparison methods covering probabilistic topic models, NMF topic models, and deep topic modeling on three benchmark datasets---20-newsgroups, TDT2, and Reuters-21578. Experimental results show that the proposed DNMF variants outperform the comparison methods significantly in terms of clustering accuracy, coherence, and similarity count. Moreover, although the DNMF variants are relative sensitive to the hyperparameter $\delta$, we always find a robust working range across the corpora, which demonstrates the robustness of the DNMF variants in real-world applications.

%\appendices
%
%\ifCLASSOPTIONcompsoc
%  \section*{Acknowledgments}
%\else
%  % regular IEEE prefers the singular form
%  \section*{Acknowledgment}
%\fi
%
%
%The authors would like to thank Dr.Kejun Huang for sharing the source code of the AnchorFree algorithm.

% Can use something like this to put references on a page
% by themselves when using endfloat and the captionsoff option.
\ifCLASSOPTIONcaptionsoff
  \newpage
\fi

\bibliographystyle{IEEEbib}
\bibliography{strings,refs}

%\begin{IEEEbiography}{Michael Shell}
%Biography text here.
%\end{IEEEbiography}
%
%% if you will not have a photo at all:
%\begin{IEEEbiographynophoto}{John Doe}
%Biography text here.
%\end{IEEEbiographynophoto}
%
%% insert where needed to balance the two columns on the last page with
%% biographies
%%\newpage
%
%\begin{IEEEbiographynophoto}{Jane Doe}
%Biography text here.
%\end{IEEEbiographynophoto}

\begin{appendices}
\section{  }\label{appendices_A}

Before we prove Theorem \ref{theorem1}, we first give the definition of an upper bound auxiliary function.
\begin{myDef}
$\mathcal{G}(u,u^\prime)$ is an upper bound auxiliary function for $g(u)$ if the following conditions are satisfied:
\begin{equation}\label{eq:auxiliary_function}
\begin{split}
\mathcal{G}(u,u^\prime) \geq g(u), \mathcal{G}(u,u) = g(u)
\end{split}
\end{equation}
\end{myDef}
\begin{myCor}
If $\mathcal{G}(\cdot,\cdot)$ is an upper bound auxiliary function for $g(u)$, then $g(u)$ is non-increasing under the update rule
\begin{equation}\label{eq:Lemma1}
\begin{split}
u^{t+1} = \arg\min_{u} \mathcal{G}(u,u^t)
\end{split}
\end{equation}
\end{myCor}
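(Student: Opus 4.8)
The plan is to prove the Corollary by the standard majorization--minimization (auxiliary function) argument: a single chain of three comparisons, each immediate either from one of the two defining properties in \eqref{eq:auxiliary_function} or from the optimality that defines the update \eqref{eq:Lemma1}. First I would invoke the upper bound property $\mathcal{G}(u,u^\prime)\ge g(u)$ at $u=u^{t+1}$, $u^\prime=u^t$ to get $g(u^{t+1})\le \mathcal{G}(u^{t+1},u^t)$. Next, since $u^{t+1}=\arg\min_{u}\mathcal{G}(u,u^t)$, the value $\mathcal{G}(u^{t+1},u^t)$ is no larger than $\mathcal{G}(u,u^t)$ for every $u$, and in particular for $u=u^t$, giving $\mathcal{G}(u^{t+1},u^t)\le \mathcal{G}(u^t,u^t)$. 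Finally, the tightness property $\mathcal{G}(u,u)=g(u)$ at $u=u^t$ rewrites the last quantity as $g(u^t)$. Concatenating,
\begin{equation}
g(u^{t+1})\le \mathcal{G}(u^{t+1},u^t)\le \mathcal{G}(u^t,u^t)=g(u^t),
\end{equation}
so $g(u^{t+1})\le g(u^t)$ for every $t$, i.e. $g$ is non-increasing under \eqref{eq:Lemma1}.

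There is no genuinely hard step here; the argument is two lines once the two properties of $\mathcal{G}$ are in hand. The only points worth a little care are that the middle inequality uses only that $u^{t+1}$ does not increase the surrogate relative to $u^t$, so existence or uniqueness of an exact minimizer is inessential for the non-increase conclusion, and that equality throughout the chain can occur only when $u^t$ already minimizes $\mathcal{G}(\cdot,u^t)$ --- the observation that, in the full proof of Theorem \ref{theorem1}, lets one argue that a fixed point of the update is a stationary point of the original objective.

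When this Corollary is applied to sDNMF and cDNMF, the remaining work (the body of the appendix, not of the Corollary itself) is to exhibit, for each block update in \eqref{eq:MU_cDNMF_W}, \eqref{eq:MU_cDNMF_C}, \eqref{eq:MU_SED_W}, \eqref{eq:MU_SED_C}, and \eqref{eq:MU_SED_T}, a quadratic surrogate $\mathcal{G}$ --- obtained in the usual way by replacing the relevant Hessian-type matrix with a diagonal majorizer --- whose exact minimizer reproduces the stated multiplicative rule, and then to invoke the Corollary on each block in turn. Boundedness of the iterates plus the monotone decrease of the (nonnegative) objective then yields convergence of the objective values, and the equality-case remark above identifies the limit as a stationary point.
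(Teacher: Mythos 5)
Your proof is correct and is exactly the standard auxiliary-function (majorization--minimization) chain $g(u^{t+1})\le \mathcal{G}(u^{t+1},u^t)\le \mathcal{G}(u^t,u^t)=g(u^t)$ that the paper relies on (the Corollary is stated there without an explicit proof, in the Lee--Seung tradition). Your added remarks about not needing an exact minimizer and about the role of the equality case in the stationarity argument are consistent with how the Corollary is used in the appendix.
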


\begin{myPro}\label{Pro}
For any matrices $\mathbf{A}\in\mathbb{R}^{n\times n}_+$, $\mathbf{B}\in\mathbb{R}^{k\times k}_+$, $\mathbf{E}\in\mathbb{R}^{n\times k}_+$, $\mathbf{E}^\prime\in\mathbb{R}^{n\times k}_+$, with $\mathbf{A}$ and $\mathbf{B}$ being symmetric matrices, the following inequality holds \cite{ding2008convex}:
\begin{equation}\label{eq:Pro1}
\begin{split}
\sum_{i=1}^n\sum_{j=1}^k \frac{[\mathbf{A}\mathbf{E}^\prime\mathbf{B}]_{ij}[\mathbf{E}]_{ij}^2} {[\mathbf{E}]^\prime_{ij}} \geq \mbox{Tr}(\mathbf{E}^T\mathbf{A}\mathbf{E}\mathbf{B})
\end{split}
\end{equation}
\end{myPro}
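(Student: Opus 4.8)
The plan is to prove this by the classical ``substitution $+$ convexity'' device that underlies multiplicative-update auxiliary functions; in fact the inequality is precisely the matrix lemma of \cite{ding2008convex}, so the argument below is a transcription of that proof into our notation. First I would introduce the entrywise ratio $\mu_{ij} := [\mathbf{E}]_{ij}/[\mathbf{E}^\prime]_{ij}$ (assuming, as is implicit in the update-rule setting, that $\mathbf{E}^\prime$ has strictly positive entries; the general case follows by continuity), so that $[\mathbf{E}]_{ij} = \mu_{ij}\,[\mathbf{E}^\prime]_{ij}$. Expanding the trace directly from its definition then gives
\begin{equation}\label{eq:prop_expand}
\mbox{Tr}(\mathbf{E}^T\mathbf{A}\mathbf{E}\mathbf{B}) = \sum_{i,k}\sum_{j,l}[\mathbf{A}]_{ik}[\mathbf{B}]_{lj}[\mathbf{E}]_{ij}[\mathbf{E}]_{kl} = \sum_{i,k,j,l}[\mathbf{A}]_{ik}[\mathbf{B}]_{lj}[\mathbf{E}^\prime]_{ij}[\mathbf{E}^\prime]_{kl}\,\mu_{ij}\,\mu_{kl},
\end{equation}
and the point to record here is that every coefficient $[\mathbf{A}]_{ik}[\mathbf{B}]_{lj}[\mathbf{E}^\prime]_{ij}[\mathbf{E}^\prime]_{kl}$ is nonnegative, since all four matrices are entrywise nonnegative.

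Next I would apply the elementary bound $\mu_{ij}\mu_{kl}\le \frac12(\mu_{ij}^2+\mu_{kl}^2)$ to each summand of \eqref{eq:prop_expand}; because the coefficients are nonnegative this preserves the direction of the inequality and produces an upper bound that splits into a $\mu_{ij}^2$-part and a $\mu_{kl}^2$-part. Using the symmetry of $\mathbf{A}$ and $\mathbf{B}$, the relabeling $i\leftrightarrow k$, $j\leftrightarrow l$ maps the $\mu_{kl}^2$-part exactly onto the $\mu_{ij}^2$-part, so the two halves coincide, the factor $\frac12$ drops, and after contracting the inner indices we obtain
\begin{equation}\label{eq:prop_collapse}
\mbox{Tr}(\mathbf{E}^T\mathbf{A}\mathbf{E}\mathbf{B}) \le \sum_{i,j}\mu_{ij}^2\,[\mathbf{E}^\prime]_{ij}\sum_{k,l}[\mathbf{A}]_{ik}[\mathbf{E}^\prime]_{kl}[\mathbf{B}]_{lj} = \sum_{i,j}\mu_{ij}^2\,[\mathbf{E}^\prime]_{ij}\,[\mathbf{A}\mathbf{E}^\prime\mathbf{B}]_{ij}.
\end{equation}

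Finally, substituting back $\mu_{ij}^2 = [\mathbf{E}]_{ij}^2/[\mathbf{E}^\prime]_{ij}^2$ collapses the right-hand side of \eqref{eq:prop_collapse} to $\sum_{i,j}[\mathbf{A}\mathbf{E}^\prime\mathbf{B}]_{ij}[\mathbf{E}]_{ij}^2/[\mathbf{E}^\prime]_{ij}$, which is exactly the claimed lower bound, and the proof is complete.

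The only step that needs genuine care — the one I would treat as the main obstacle — is the index swap used to collapse the two halves in \eqref{eq:prop_collapse}: one must verify that the $\mu_{kl}^2$-part of the bound is carried onto the $\mu_{ij}^2$-part by the relabeling $i\leftrightarrow k$, $j\leftrightarrow l$, which uses $[\mathbf{A}]_{ik}=[\mathbf{A}]_{ki}$, $[\mathbf{B}]_{lj}=[\mathbf{B}]_{jl}$, the invariance of the product $[\mathbf{E}^\prime]_{ij}[\mathbf{E}^\prime]_{kl}$ under the swap, and the fact that $i,k$ range over the same index set and $j,l$ over the same index set. A secondary, minor point is the positivity of $\mathbf{E}^\prime$ needed to form the ratios $\mu_{ij}$; in the algorithmic context $\mathbf{E}^\prime$ is the current iterate and remains positive, and in general one recovers the inequality by a limiting argument. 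Everything else is routine bookkeeping.
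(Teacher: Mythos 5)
Your proof is correct: the substitution $\mu_{ij}=[\mathbf{E}]_{ij}/[\mathbf{E}^\prime]_{ij}$, the bound $\mu_{ij}\mu_{kl}\le\tfrac12(\mu_{ij}^2+\mu_{kl}^2)$ applied under nonnegative coefficients, and the symmetry relabeling $i\leftrightarrow k$, $j\leftrightarrow l$ collapse the trace exactly as claimed, and your caveat about strictly positive $[\mathbf{E}^\prime]_{ij}$ is the right one. The paper itself gives no proof and simply defers to \cite{ding2008convex}, whose argument is precisely the one you have reconstructed, so your proposal matches the intended derivation.
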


\begin{myDef}
A function can be represented as an infinite sum of terms that are calculated from the values of the function's derivatives at a single point, which can be formulated as follows:
\begin{equation}\label{eq:Taylor}
\begin{split}
f(x) = \sum_{n=0}^\infty \frac{f^{(n)}(a)}{n!}(y-a)^n
\end{split}
\end{equation}
\end{myDef}

Given the above definitions, the objective function of cDNMF \eqref{eq:mDNMF} with respect to the three univariate functions are obtained as follows:
\begin{equation}\label{eq:Obj_func_C}
\begin{split}
\mathcal{O}_{\mathbf{C}} = \| \mathbf{D} - \mathbf{C}\W\|_F^2 + \lambda_1 \| \mathbf{C} \mathbf{C}^T - \mathbf{D} \mathbf{D}^T \|_F^2
\end{split}
\end{equation}
\begin{equation}\label{eq:Obj_func_W}
\begin{split}
\mathcal{O}_{\mathbf{W}} = \| \mathbf{D} - \mathbf{C}\W\|_F^2 + \lambda_2 \| f(\mathbf{D}) - \mathbf{T}\W \|_F^2
\end{split}
\end{equation}
\begin{equation}\label{eq:Obj_func_T}
\begin{split}
\mathcal{O}_{\mathbf{T}} = \lambda_2 \| f(\mathbf{D}) - \mathbf{T}\W \|_F^2
\end{split}
\end{equation}
Then, we have the following three lemmas.

\begin{lem}\label{lemxx}

The auxiliary function for $\mathcal{O}(\mathbf{C})$ is as follows:
\begin{equation}\label{eq:Aux_func_C}
\begin{split}
& \mathcal{G}([\mathbf{C}]_{ij},[\mathbf{C}^\prime]_{ij}) =  \mathcal{O}_\mathbf{C} + [-2\mathbf{D}\mathbf{W}^T + 2\mathbf{C}\W\W^T \\
& -4\lambda_1 \mathbf{D}\mathbf{D}^T\mathbf{C} + 4\lambda_1 \mathbf{C}\mathbf{C}^T\mathbf{C}]_{ij}([\mathbf{C}]_{ij} - [\mathbf{C}^\prime]_{ij}) \\
& + \frac{1}{3!}4\lambda_1 [\mathbf{C}]_{ij}([\mathbf{C}]_{ij} - [\mathbf{C}^\prime]_{ij})^3 + \frac{1}{4!}4\lambda_1([\mathbf{C}]_{ij} - [\mathbf{C}^\prime]_{ij})^4 \\
& + \frac{1}{2}\frac{2[\mathbf{C}\W\W^T]_{ij} + 4\lambda_1[\mathbf{C}\mathbf{C}^T\mathbf{C}]_{ij}} {[\mathbf{C}]_{ij}}([\mathbf{C}]_{ij} - [\mathbf{C}^\prime]_{ij})^2
\end{split}
\end{equation}

\begin{proof}
It is obvious that $\mathcal{G}(\mathbf{C},\mathbf{C}) = \mathcal{O}_\mathbf{C}(\mathbf{C})$, we only need to prove that $\mathcal{G}(\mathbf{C},\mathbf{C}^\prime) \geq \mathcal{O}_\mathbf{C}(\mathbf{C})$.
The first-order partial derivative of \eqref{eq:Obj_func_C} in element-wise is
\begin{equation}\label{eq:dif1_func_C}
\begin{split}
&\frac{\partial\mathcal{O}_{\mathbf{C}}}{\partial [\mathbf{C}]_{ij}} \\
= &[-2\mathbf{D}\mathbf{W}^T + 2\mathbf{C}\W\W^T -4\lambda_1 \mathbf{D}\mathbf{D}^T\mathbf{C} + 4\lambda_1 \mathbf{C}\mathbf{C}^T\mathbf{C}]_{ij}
\end{split}
\end{equation}
The second-order derivative of \eqref{eq:Obj_func_C} with respect to $\mathbf{C}$ is
\begin{equation}\label{eq:dif2_func_C}
\begin{split}
\frac{\partial^2\mathcal{O}_{\mathbf{C}}}{\partial [\mathbf{C}]_{ij}\partial [\mathbf{C}]_{ij}} = [2\W\W^T]_{jj} -4\lambda_1 [\mathbf{D}\mathbf{D}^T]_{ii} + 4\lambda_1 [\mathbf{C}\mathbf{C}^T]_{ii}
\end{split}
\end{equation}
The third-order partial derivative of \eqref{eq:Obj_func_C} is
\begin{equation}\label{eq:dif3_func_C}
\begin{split}
\frac{\partial^3\mathcal{O}_{\mathbf{C}}}{\partial [\mathbf{C}]_{ij}\partial [\mathbf{C}]_{ij}\partial [\mathbf{C}]_{ij}} = 4\lambda_1 [\mathbf{C}]_{ij}
\end{split}
\end{equation}
The fourth-order partial derivative of \eqref{eq:Obj_func_C} is
\begin{equation}\label{eq:dif4_func_C}
\begin{split}
\frac{\partial^4\mathcal{O}_{\mathbf{C}}}{\partial [\mathbf{C}]_{ij}\partial [\mathbf{C}]_{ij}\partial [\mathbf{C}]_{ij}\partial [\mathbf{C}]_{ij} } = 4\lambda_1
\end{split}
\end{equation}
According to the Taylor expansion in Definition \eqref{eq:Taylor}, we can rewrite \eqref{eq:Obj_func_C} to its Taylor expansion form:
\begin{equation}\label{eq:Refunc_C}
\begin{split}
\mathcal{O}_\mathbf{C}(c_{ij})= & \mathcal{O}_\mathbf{C} + \frac{\partial\mathcal{O}_{\mathbf{C}}}{\partial c_{ij}}(c_{ij} - [\mathbf{C}]_{ij}) \\
& + \frac{1}{2}\frac{\partial^2\mathcal{O}_{\mathbf{C}}}{\partial c_{ij}\partial c_{ij}}(c_{ij} - [\mathbf{C}]_{ij})^2 \\
& + \frac{1}{3!}\frac{\partial^3\mathcal{O}_{\mathbf{C}}}{\partial c_{ij}\partial c_{ij}\partial c_{ij}}(c_{ij} - [\mathbf{C}]_{ij})^3 \\
& + \frac{1}{4!}\frac{\partial^4\mathcal{O}_{\mathbf{C}}}{\partial c_{ij}\partial c_{ij}\partial c_{ij}\partial c_{ij}}(c_{ij} - [\mathbf{C}]_{ij})^4
\end{split}
\end{equation}
The upper bound auxiliary function for \eqref{eq:Obj_func_C} is defined as
\begin{equation}\label{eq:auxiliary_C}
\begin{split}
& \mathcal{G}([\mathbf{C}]_{ij},[\mathbf{C}^\prime]_{ij}) =  \mathcal{O}_\mathbf{C} + \frac{\partial\mathcal{O}_{\mathbf{C}}}{\partial [\mathbf{C}]_{ij}}([\mathbf{C}]_{ij} - [\mathbf{C}^\prime]_{ij}) \\
& + \frac{1}{3!}\frac{\partial^3\mathcal{O}_{\mathbf{C}}}{\partial [\mathbf{C}]_{ij}\partial [\mathbf{C}]_{ij}\partial [\mathbf{C}]_{ij}}([\mathbf{C}]_{ij} - [\mathbf{C}^\prime]_{ij})^3 \\
& + \frac{1}{4!}\frac{\partial^4\mathcal{O}_{\mathbf{C}}}{\partial [\mathbf{C}]_{ij}\partial [\mathbf{C}]_{ij}\partial [\mathbf{C}]_{ij}\partial [\mathbf{C}]_{ij}}([\mathbf{C}]_{ij} - [\mathbf{C}^\prime]_{ij})^4 \\
& + \frac{1}{2}\frac{2[\mathbf{C}\W\W^T]_{ij} + 4\lambda_1[\mathbf{C}\mathbf{C}^T\mathbf{C}]_{ij}} {[\mathbf{C}]_{ij}}([\mathbf{C}]_{ij} - [\mathbf{C}^\prime]_{ij})^2
\end{split}
\end{equation}
Substituting \eqref{eq:Refunc_C} into \eqref{eq:auxiliary_C}, we find that $\mathcal{G}(\mathbf{C},\mathbf{C}^\prime) \geq \mathcal{O}_\mathbf{C}(\mathbf{C})$ is equivalent to
\begin{equation}\label{eq:equivalent_C}
\begin{split}
& \frac{1}{2}\frac{2[\mathbf{C}\W\W^T]_{ij} + 4\lambda_1[\mathbf{C}\mathbf{C}^T\mathbf{C}]_{ij}} {[\mathbf{C}]_{ij}}([\mathbf{C}]_{ij} - [\mathbf{C}^\prime]_{ij})^2 \\
& \geq \frac{1}{2}\left([2\W\W^T]_{jj} -4\lambda_1[\mathbf{D}\mathbf{D}^T]_{ii} + 4\lambda_1[\mathbf{C}\mathbf{C}^T]_{ii}\right)([\mathbf{C}]_{ij} - [\mathbf{C}^\prime]_{ij})^2
\end{split}
\end{equation}
Because we have
\begin{equation}\label{eq:equivalent_C1}
\begin{split}
& \frac{[\mathbf{C}\W\W^T]_{ij}}{[\mathbf{C}]_{ij}} = \frac{\sum_j\left([\mathbf{C}]_{ij}\times[\W\W^T]_{jj}\right)} {[\mathbf{C}]_{ij}} \\
& \geq \frac{[\mathbf{C}]_{ij}\times[\W\W^T]_{jj}}{[\mathbf{C}]_{ij}} = [\W\W^T]_{jj}
\end{split}
\end{equation}
\begin{equation}\label{eq:equivalent_C2}
\begin{split}
& \frac{[\mathbf{C}\mathbf{C}^T\mathbf{C}]_{ij}}{[\mathbf{C}]_{ij}} = \frac{\sum_j\left([\mathbf{C}\mathbf{C}^T]_{jj}\times[\mathbf{C}]_{ij}\right)} {[\mathbf{C}]_{ij}} \\
& \geq \frac{[\mathbf{C}\mathbf{C}^T]_{ii}\times[\mathbf{C}]_{ij}}{[\mathbf{C}]_{ij}} = [\mathbf{C}\mathbf{C}^T]_{ii}
\end{split}
\end{equation}
we can conclude that \eqref{eq:equivalent_C} holds, and \eqref{eq:auxiliary_C} is an upper bound auxiliary function for \eqref{eq:Obj_func_C}.
 Because the elements of matrix $\mathbf{C}$ is nonnegative, the third and fourth order partial derivatives are larger than zero and \eqref{eq:auxiliary_C} is a convex function, its minimum value can be achieved at
\begin{equation}\label{eq:equivalent_C2}
\begin{split}
&[\mathbf{C}^\prime]_{ij}\\
 & = [\mathbf{C}]_{ij} - \frac{[-2\mathbf{D}\mathbf{W}^T + 2\mathbf{C}\W\W^T -4\lambda_1\mathbf{D}\mathbf{D}^T\mathbf{C} + 4\lambda_1\mathbf{C}\mathbf{C}^T\mathbf{C}]_{ij}} {2\times\frac{1}{2}\frac{[2\mathbf{C}\W\W^T]_{ij} + 4\lambda_1[\mathbf{C}\mathbf{C}^T\mathbf{C}]_{ij}} {[\mathbf{C}]_{ij}}([\mathbf{C}]_{ij} - [\mathbf{C}^\prime]_{ij})^2} \\
& = [\mathbf{C}]_{ij}\frac{[\mathbf{D}\mathbf{W}^T]_{ij} + 2\lambda_1[\mathbf{D}\mathbf{D}^T\mathbf{C}]_{ij}}{[\mathbf{C}\W\W^T]_{ij} + 2\lambda_1[\mathbf{C}\mathbf{C}^T\mathbf{C}]_{ij}}
\end{split}
\end{equation}
Lemma \ref{lemxx} is proved.
\end{proof}

\end{lem}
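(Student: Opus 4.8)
The plan is to verify directly the two defining properties of an upper bound auxiliary function for $\mathcal{O}_{\mathbf{C}}$, viewing \eqref{eq:Obj_func_C} as a polynomial in the single entry $[\mathbf{C}]_{ij}$ with every other entry of $\mathbf{C}$ held fixed. The equality $\mathcal{G}([\mathbf{C}]_{ij},[\mathbf{C}]_{ij})=\mathcal{O}_{\mathbf{C}}$ is immediate: equating the two arguments annihilates every term carrying a factor $([\mathbf{C}]_{ij}-[\mathbf{C}^\prime]_{ij})$, leaving exactly $\mathcal{O}_{\mathbf{C}}$. The substance is the domination inequality $\mathcal{G}([\mathbf{C}]_{ij},[\mathbf{C}^\prime]_{ij})\geq\mathcal{O}_{\mathbf{C}}([\mathbf{C}]_{ij})$.

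The key observation is that $\mathcal{O}_{\mathbf{C}}$ is a degree-four polynomial in $[\mathbf{C}]_{ij}$, the quartic part coming from $\|\mathbf{C}\mathbf{C}^{T}-\mathbf{D}\mathbf{D}^{T}\|_F^2$, so its Taylor expansion about the current iterate is \emph{exact} once truncated at fourth order, giving \eqref{eq:Refunc_C}. I would compute the first through fourth element-wise partial derivatives of \eqref{eq:Obj_func_C}; the first-, third- and fourth-order coefficients obtained this way are exactly those already written into the candidate $\mathcal{G}$ in \eqref{eq:Aux_func_C}, so subtracting the exact Taylor form from $\mathcal{G}$ leaves only the gap between the two quadratic terms. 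The inequality then collapses to the scalar bound $\bigl(2[\mathbf{C}\mathbf{W}\mathbf{W}^{T}]_{ij}+4\lambda_1[\mathbf{C}\mathbf{C}^{T}\mathbf{C}]_{ij}\bigr)/[\mathbf{C}]_{ij}\geq[2\mathbf{W}\mathbf{W}^{T}]_{jj}-4\lambda_1[\mathbf{D}\mathbf{D}^{T}]_{ii}+4\lambda_1[\mathbf{C}\mathbf{C}^{T}]_{ii}$ after cancelling the common nonnegative factor $\tfrac12([\mathbf{C}]_{ij}-[\mathbf{C}^\prime]_{ij})^2$.

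The core step is this scalar inequality. Since $\mathbf{D}\mathbf{D}^{T}$ is symmetric positive semidefinite its diagonal entries are nonnegative, so the term $-4\lambda_1[\mathbf{D}\mathbf{D}^{T}]_{ii}$ on the right only makes the bound easier and may be dropped; it then suffices to prove the two ``one summand dominates the sum'' estimates $[\mathbf{C}\mathbf{W}\mathbf{W}^{T}]_{ij}/[\mathbf{C}]_{ij}\geq[\mathbf{W}\mathbf{W}^{T}]_{jj}$ and $[\mathbf{C}\mathbf{C}^{T}\mathbf{C}]_{ij}/[\mathbf{C}]_{ij}\geq[\mathbf{C}\mathbf{C}^{T}]_{ii}$, each of which holds because, by nonnegativity of $\mathbf{C}$ and $\mathbf{W}$, the numerator is a sum of nonnegative terms one of which equals $[\mathbf{C}]_{ij}$ times the right-hand side --- the manipulations \eqref{eq:equivalent_C1} and \eqref{eq:equivalent_C2}. (If one prefers not to discard the $\mathbf{D}\mathbf{D}^{T}$ term, Proposition \ref{Pro} can instead absorb the $\mathbf{C}\mathbf{C}^{T}\mathbf{C}$ contribution, though it is not needed here.) With the auxiliary property in hand, $\mathcal{G}$ is a convex function of the entry being updated --- its quadratic coefficient is positive and the cubic and quartic coefficients $4\lambda_1[\mathbf{C}]_{ij}$ and $4\lambda_1$ are nonnegative --- so setting its derivative to zero gives the unique minimizer, which simplifies to the multiplicative rule \eqref{eq:MU_SED_C}; invoking the corollary to the auxiliary-function definition yields the monotone non-increase of $\mathcal{O}_{\mathbf{C}}$, and rerunning the same construction for the $\mathbf{W}$- and $\mathbf{T}$-subproblems delivers the overall monotone decrease and convergence to a stationary point asserted in Theorem \ref{theorem1}.

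The main obstacle I anticipate is bookkeeping rather than ideas: carrying the $\lambda_1\mathbf{D}\mathbf{D}^{T}$ and $\lambda_1\mathbf{C}\mathbf{C}^{T}\mathbf{C}$ pieces correctly through all four partial derivatives, confirming the quartic truncation of the Taylor series is genuinely exact, and checking that the reduced quadratic-coefficient inequality is really implied by the two elementary summand-domination estimates --- in particular that dropping the negative $-4\lambda_1[\mathbf{D}\mathbf{D}^{T}]_{ii}$ term is legitimate and leaves intact the convexity used to read off the update.
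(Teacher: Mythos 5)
Your proposal is correct and follows essentially the same route as the paper's own proof: an exact fourth-order element-wise Taylor expansion of $\mathcal{O}_{\mathbf{C}}$, replacement of only the quadratic coefficient, reduction to the scalar bound on that coefficient via the two summand-domination estimates \eqref{eq:equivalent_C1}--\eqref{eq:equivalent_C2} (with the $-4\lambda_1[\mathbf{D}\mathbf{D}^T]_{ii}$ term harmless by nonnegativity), and then convexity to read off the minimizer matching the multiplicative update \eqref{eq:MU_SED_C}. Your explicit remarks that the negative $\mathbf{D}\mathbf{D}^T$ term can be dropped and that Proposition \ref{Pro} is not needed here simply make transparent what the paper does implicitly.
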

\begin{lem}
Given Proposition \ref{Pro}, the auxiliary function for $\mathcal{O}(\mathbf{W})$ is as follows:
\begin{equation}\label{eq:Aux_func_W}
\begin{split}
\mathcal{G}(\mathbf{W},\mathbf{W}^\prime) & = -2\lambda_2\mbox{Tr}(f(\mathbf{D})\W^T\mathbf{T}^T) - 2\mbox{Tr}(\mathbf{D}\mathbf{W}^T\mathbf{C}^T) \\
& + \sum_{ij}\frac{[\mathbf{C}^T\mathbf{C}\W]_{ij}{[\W^\prime]}^2_{ij}} {[\W]_{ij}} + \lambda_2 \sum_{ij}\frac{[\mathbf{T}^T\mathbf{T}\W]_{ij}{[\W^\prime]}^2_{ij}}{[\W]_{ij}}
\end{split}
\end{equation}
\end{lem}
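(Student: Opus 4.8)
The plan is to verify the two defining conditions of an upper bound auxiliary function, mirroring the proof of Lemma \ref{lemxx} but invoking Proposition \ref{Pro} in place of a Taylor expansion. Writing $\W$ for the current iterate and $\W^\prime$ for the free variable, I would first expand \eqref{eq:Obj_func_W} into trace form: $\mathcal{O}_{\W}(\W^\prime)$ equals the $\W^\prime$-independent constant $\mbox{Tr}(\mathbf{D}\mathbf{D}^T)+\lambda_2\mbox{Tr}(f(\mathbf{D})f(\mathbf{D})^T)$, minus the linear part $2\mbox{Tr}(\mathbf{D}(\W^\prime)^T\mathbf{C}^T)+2\lambda_2\mbox{Tr}(f(\mathbf{D})(\W^\prime)^T\mathbf{T}^T)$, plus the quadratic part $\mbox{Tr}((\W^\prime)^T\mathbf{C}^T\mathbf{C}\W^\prime)+\lambda_2\mbox{Tr}((\W^\prime)^T\mathbf{T}^T\mathbf{T}\W^\prime)$. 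The candidate \eqref{eq:Aux_func_W} keeps the linear part in the free variable verbatim, drops the additive constant, and replaces the two quadratic traces by the separable surrogates $\sum_{ij}[\mathbf{C}^T\mathbf{C}\W]_{ij}[\W^\prime]^2_{ij}/[\W]_{ij}$ and $\lambda_2\sum_{ij}[\mathbf{T}^T\mathbf{T}\W]_{ij}[\W^\prime]^2_{ij}/[\W]_{ij}$.

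For the equality $\mathcal{G}(\W,\W)=\mathcal{O}_{\W}(\W)$, I would substitute $\W^\prime=\W$ into \eqref{eq:Aux_func_W}: each surrogate collapses because $\sum_{ij}[\mathbf{C}^T\mathbf{C}\W]_{ij}[\W]_{ij}=\mbox{Tr}(\W^T\mathbf{C}^T\mathbf{C}\W)$ and $\sum_{ij}[\mathbf{T}^T\mathbf{T}\W]_{ij}[\W]_{ij}=\mbox{Tr}(\W^T\mathbf{T}^T\mathbf{T}\W)$, so $\mathcal{G}(\W,\W)$ recovers $\mathcal{O}_{\W}(\W)$ up to the suppressed constant $\mbox{Tr}(\mathbf{D}\mathbf{D}^T)+\lambda_2\mbox{Tr}(f(\mathbf{D})f(\mathbf{D})^T)$. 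Since this constant shifts $\mathcal{G}$ and $\mathcal{O}_{\W}$ by the same amount, it is harmless for both the auxiliary-function inequality and the location of the minimizer, exactly as in Lemma \ref{lemxx}.

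For the inequality $\mathcal{G}(\W,\W^\prime)\ge\mathcal{O}_{\W}(\W^\prime)$, the linear parts and the constant already coincide, so it suffices to bound each quadratic trace from above by its surrogate. I would apply Proposition \ref{Pro} twice with $\mathbf{B}=\mathbf{I}$, $\mathbf{E}=\W^\prime$, $\mathbf{E}^\prime=\W$: once with $\mathbf{A}=\mathbf{C}^T\mathbf{C}$, which is symmetric and entrywise nonnegative because $\mathbf{C}\ge\mathbf{0}$, giving $\sum_{ij}[\mathbf{C}^T\mathbf{C}\W]_{ij}[\W^\prime]^2_{ij}/[\W]_{ij}\ge\mbox{Tr}((\W^\prime)^T\mathbf{C}^T\mathbf{C}\W^\prime)$; and once with $\mathbf{A}=\mathbf{T}^T\mathbf{T}$, likewise symmetric and nonnegative since $\mathbf{T}\ge\mathbf{0}$, giving the same bound scaled by $\lambda_2$. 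Adding these two inequalities to the common linear part yields $\mathcal{G}(\W,\W^\prime)\ge\mathcal{O}_{\W}(\W^\prime)$, which is the second condition.

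To close the loop with the update rule, I would note that $\mathcal{G}(\W,\cdot)$ separates over the entries $[\W^\prime]_{ij}$ into one-dimensional strictly convex quadratics $a_{ij}[\W^\prime]^2_{ij}+b_{ij}[\W^\prime]_{ij}$ with $a_{ij}=([\mathbf{C}^T\mathbf{C}\W]_{ij}+\lambda_2[\mathbf{T}^T\mathbf{T}\W]_{ij})/[\W]_{ij}>0$ and $b_{ij}=-2([\mathbf{C}^T\mathbf{D}]_{ij}+\lambda_2[\mathbf{T}^Tf(\mathbf{D})]_{ij})$; the unique minimizer $-b_{ij}/(2a_{ij})$ is precisely \eqref{eq:MU_SED_W} (modulo the swap of $\lambda_1$ and $\lambda_2$ between the main text and this appendix), so the non-increasing property follows from the corollary to the auxiliary-function definition, and combining this with Lemma \ref{lemxx} and the analogous lemma for $\mathbf{T}$ proves Theorem \ref{theorem1}. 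I expect no deep difficulty: the step most prone to error is the bookkeeping of transposes, cyclic trace identities, and the dropped constant, and the only genuine hypotheses to check are that $\mathbf{C}^T\mathbf{C}$ and $\mathbf{T}^T\mathbf{T}$ are entrywise nonnegative (automatic since the multiplicative updates preserve $\mathbf{C},\mathbf{T}\ge\mathbf{0}$) and that $[\W]_{ij}>0$ so the surrogate denominators are defined, the usual genericity assumption for multiplicative NMF.
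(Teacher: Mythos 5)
Your proof is correct and is essentially the argument the paper intends: the lemma is stated ``Given Proposition \ref{Pro}'' precisely because the inequality $\mathcal{G}(\W,\W^\prime)\ge\mathcal{O}_{\W}(\W^\prime)$ follows from applying that proposition with $\mathbf{A}=\mathbf{C}^T\mathbf{C}$ (resp. $\mathbf{T}^T\mathbf{T}$), $\mathbf{B}=\mathbf{I}$, $\mathbf{E}=\W^\prime$, $\mathbf{E}^\prime=\W$, exactly as you do, and your entrywise minimization reproduces the paper's update \eqref{eq:func_W}. Your handling of the paper's notational slips (the linear terms being in the free variable, the dropped additive constant, and the $\lambda_1$/$\lambda_2$ swap between main text and appendix) is the correct reading and does not change the argument.
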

\begin{lem}
The auxiliary function for $\mathcal{O}(\mathbf{T})$ is as follows:
\begin{equation}\label{eq:Aux_func_T}
\begin{split}
\mathcal{G}(\mathbf{T},\mathbf{T}^\prime) = -2\lambda_2\mbox{Tr}(f(\mathbf{D})\W^T\mathbf{T}^T) + \lambda_2 \sum_{ij}\frac{[\mathbf{T}\W\W^T]_{ij}{[\mathbf{T}^\prime]}^2_{ij}}{[\mathbf{T}]_{ij}}
\end{split}
\end{equation}
\end{lem}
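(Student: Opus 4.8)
The plan is to verify directly that the $\mathcal{G}(\mathbf{T},\mathbf{T}^\prime)$ in \eqref{eq:Aux_func_T} satisfies the two defining conditions of an upper bound auxiliary function for $\mathcal{O}_{\mathbf{T}}$ in \eqref{eq:Obj_func_T}, after which the non-increasing property follows from the corollary to the auxiliary-function definition. First I would expand the objective as
\begin{equation*}
\mathcal{O}_{\mathbf{T}} = \lambda_2\,\mbox{Tr}(f(\mathbf{D})f(\mathbf{D})^T) - 2\lambda_2\,\mbox{Tr}(f(\mathbf{D})\W^T\mathbf{T}^T) + \lambda_2\,\mbox{Tr}(\mathbf{T}\W\W^T\mathbf{T}^T),
\end{equation*}
so that $\mathcal{O}_{\mathbf{T}}$ splits into a constant, a bilinear cross term that is linear in the optimization variable, and a single quadratic term $\lambda_2\,\mbox{Tr}(\mathbf{T}\W\W^T\mathbf{T}^T)=\lambda_2\,\mbox{Tr}(\mathbf{T}^T\mathbf{T}\W\W^T)$. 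This is the easiest of the three lemmas because, unlike $\mathcal{O}_{\mathbf{C}}$ there is no quartic piece requiring a Taylor argument, and unlike $\mathcal{O}_{\mathbf{W}}$ there is only a single quadratic piece to dominate; the linear part is reproduced exactly and needs no bounding.

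The heart of the argument is dominating the quadratic term by the separable surrogate appearing in \eqref{eq:Aux_func_T}. I would invoke Proposition~\ref{Pro} with the choice $\mathbf{A}=\mathbf{I}$, $\mathbf{B}=\W\W^T$, $\mathbf{E}=\mathbf{T}^\prime$, and $\mathbf{E}^\prime=\mathbf{T}$, first checking that the hypotheses hold: $\mathbf{A}=\mathbf{I}$ and $\mathbf{B}=\W\W^T$ are symmetric, and $\W\W^T$ is entrywise nonnegative since $\W\ge\mathbf{0}$ (the denominators are well posed because the multiplicative iterates keep $\mathbf{T}$ strictly positive once initialized positive). Inequality \eqref{eq:Pro1} then specializes to
\begin{equation*}
\sum_{ij}\frac{[\mathbf{T}\W\W^T]_{ij}[\mathbf{T}^\prime]^2_{ij}}{[\mathbf{T}]_{ij}} \;\geq\; \mbox{Tr}\!\left((\mathbf{T}^\prime)^T\mathbf{T}^\prime\W\W^T\right),
\end{equation*}
i.e. the separable term upper-bounds the quadratic piece of $\mathcal{O}_{\mathbf{T}}$ evaluated at the optimization variable. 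Adding back the exact linear part gives $\mathcal{G}(\mathbf{T},\mathbf{T}^\prime)\ge\mathcal{O}_{\mathbf{T}}$ up to the suppressed constant $\lambda_2\,\mbox{Tr}(f(\mathbf{D})f(\mathbf{D})^T)$; I would note that restoring this constant leaves both the auxiliary-function property and the resulting update unchanged, since a constant shift drops out of the first-order condition.

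For tightness, setting $\mathbf{T}^\prime=\mathbf{T}$ collapses the separable term via $\sum_{ij}[\mathbf{T}\W\W^T]_{ij}[\mathbf{T}]_{ij}=\mbox{Tr}(\mathbf{T}^T\mathbf{T}\W\W^T)$, so $\mathcal{G}(\mathbf{T},\mathbf{T})=\mathcal{O}_{\mathbf{T}}(\mathbf{T})$ exactly once the constant is included. Because $\mathcal{G}$ is a sum of a linear term and strictly convex separable quadratics in the entries of the optimization variable, its unique minimizer comes from $\partial\mathcal{G}/\partial[\mathbf{T}^\prime]_{ij}=0$, which rearranges to $[\mathbf{T}^\prime]_{ij}=[\mathbf{T}]_{ij}\,[f(\mathbf{D})\W^T]_{ij}/[\mathbf{T}\W\W^T]_{ij}$; this is precisely the update rule \eqref{eq:MU_SED_T}, so by the corollary $\mathcal{O}_{\mathbf{T}}$ is non-increasing under it. The only genuine obstacle is the correct casting of $\mbox{Tr}(\mathbf{T}^T\mathbf{T}\W\W^T)$ into the template $\mbox{Tr}(\mathbf{E}^T\mathbf{A}\mathbf{E}\mathbf{B})$ of Proposition~\ref{Pro}, together with verifying its symmetry and nonnegativity hypotheses; everything else is routine bookkeeping.
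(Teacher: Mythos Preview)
Your proposal is correct and matches the paper's intended approach. The paper does not spell out a proof for this lemma, but its invocation of Proposition~\ref{Pro} for the analogous Lemma on $\mathcal{O}_{\mathbf{W}}$ makes clear that the same separable-surrogate bound is meant here; your instantiation $\mathbf{A}=\mathbf{I}$, $\mathbf{B}=\W\W^T$, $\mathbf{E}=\mathbf{T}'$, $\mathbf{E}'=\mathbf{T}$ is exactly the right specialization, and your handling of the dropped constant $\lambda_2\,\mbox{Tr}(f(\mathbf{D})f(\mathbf{D})^T)$ and the tightness check are both sound.
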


With the above lemmas, we derive the update rules for each variable by minimizing their corresponding auxiliary functions:
\begin{equation}\label{eq:difAux_func_C}
\begin{split}
& \frac{\partial\mathcal{G}(\mathbf{C},\mathbf{C}^\prime)}{\partial [\mathbf{C}^\prime]_{ij}} = -2[\mathbf{D}\mathbf{W}^T]_{ij} + 2\frac{[\mathbf{C}\mathbf{W}\mathbf{W}^T]_{ij}[\mathbf{C}^\prime]_{ij}}{[\mathbf{C}]_{ij}} \\
& - 4\lambda_1[\mathbf{D}\mathbf{D}^T\mathbf{C}]_{ij} + 4\lambda_1\frac{[\mathbf{C}\mathbf{C}^T\mathbf{C}]_{ij}{[\mathbf{C}^\prime]}_{ij}}{[\mathbf{C}]_{ij}} = 0
\end{split}
\end{equation}
\begin{equation}\label{eq:difAux_func_W}
\begin{split}
& \frac{\partial\mathcal{G}(\mathbf{W},\mathbf{W}^\prime)}{\partial [\mathbf{W}^\prime]_{ij}} = -2\lambda_2[\mathbf{T}^Tf(\mathbf{D})]_{ij} - 2[\mathbf{C}^T\mathbf{D}]_{ij} \\
& + 2\frac{[\mathbf{C}^T\mathbf{C}\W]_{ij}{[\mathbf{W}^\prime]}_{ij}} {[\mathbf{W}]_{ij}} + 2\lambda_2\frac{[\mathbf{T}^T\mathbf{T}\W]_{ij}{[\mathbf{W}^\prime]}_{ij}}{[\mathbf{W}]_{ij}} = 0
\end{split}
\end{equation}
\begin{equation}\label{eq:difAux_func_T}
\begin{split}
& \frac{\partial\mathcal{G}(\mathbf{T},\mathbf{T}^\prime)}{\partial [\mathbf{T}^\prime]_{ij}} = -2[f(\mathbf{D})\W^T]_{ij} + 2\frac{[\mathbf{T}\W\W^T]_{ij}{[\mathbf{T}^\prime]}_{ij}}{[\mathbf{T}]_{ij}} = 0
\end{split}
\end{equation}
which derives
\begin{equation}\label{eq:func_C}
\begin{split}
[\mathbf{C}^\prime]_{ij} = [\mathbf{C}]_{ij}\frac{[\mathbf{D}\mathbf{W}^T]_{ij} + 2\lambda_1[\mathbf{D}\mathbf{D}^T\mathbf{C}]_{ij}}{[\mathbf{C}\W\W^T]_{ij} + 2\lambda_1[\mathbf{C}\mathbf{C}^T\mathbf{C}]_{ij}}
\end{split}
\end{equation}
\begin{equation}\label{eq:func_W}
\begin{split}
[\mathbf{W}^\prime]_{ij} = [\mathbf{W}]_{ij}\frac{\lambda_2[\mathbf{T}^Tf(\mathbf{D})]_{ij} + [\mathbf{C}^T\mathbf{D}]_{ij}}{[\mathbf{C}^T\mathbf{C}\mathbf{W}]_{ij} + \lambda_2[\mathbf{T}^T\mathbf{T}\W]_{ij}}
\end{split}
\end{equation}
\begin{equation}\label{eq:func_T}
\begin{split}
[\mathbf{T}^\prime]_{ij} = [\mathbf{T}]_{ij}\frac{[f(\mathbf{D})\W^T]_{ij}}{[\mathbf{T}\W\W^T]_{ij}}
\end{split}
\end{equation}

It can be proved that the three update rules \eqref{eq:MU_SED_C}, \eqref{eq:MU_SED_W} and \eqref{eq:MU_SED_T} are equivalent to \eqref{eq:func_C}, \eqref{eq:func_W} and \eqref{eq:func_T}, respectively.
Because the objective function in \eqref{eq:mDNMF} is lower bounded by 0, the modified DNMF converges to a stationary point. Theorem \ref{theorem1} is proved.

\end{appendices}

% that's all folks
\end{document}